\def\cqfd{\hfill\hbox{$\hbox{\vrule width 0.8pt
\vbox to6pt{\hrule depth 0.8pt width 5.2pt
\vfill\hrule depth 0.8pt}\vrule width 0.8pt}$}}
\newcommand{\defeq}{\stackrel{\mathrm{.}}{=}}
\newcommand{\matrice}[1]{\mathrm{#1}}
\newcommand{\proj}{{\mathbb{P}}}
\newcommand{\boundary}{{\mathbb{B}}}
\newcommand{\parent}[1]{\left({#1}\right)}
\newtheorem{theorem}{Theorem}
\newtheorem{lemma}{Lemma}
\newtheorem{corollary}{Corollary}
\newtheorem{definition}{Definition}
\newenvironment{proof}[1][Proof]{\begin{trivlist}
\item[\hskip \labelsep {\bfseries #1}]}{\end{trivlist}}
\newcommand\blfootnote[1]{%
  \begingroup
  \renewcommand\thefootnote{}\footnote{#1}%
  \addtocounter{footnote}{-1}%
  \endgroup
}
\def\dtheta{\mathrm{d}\theta}
\def\deta{\mathrm{d}\eta}
\newcommand{\gbot}{g_\bot}
\newcommand{\struct}[3]{(#1,#2)_{#3}}
\begin{document}

\title{On Conformal Divergences and their Population
  Minimizers$^\ddagger$}

\author{Richard~Nock,
Frank Nielsen,~%\IEEEmembership{Senior Member,~IEEE,}
\and Shun-ichi Amari~%\IEEEmembership{Life Fellow,~IEEE,}% <-this % stops a space
\IEEEcompsocitemizethanks{\IEEEcompsocthanksitem R. Nock (\textbf{contact author}) is
  with NICTA, Australia.
E-mail:richard.nock@nicta.com.au.\protect\\
F. Nielsen is with Ecole Polytechnique, France.
E-mail:nielsen@lix.polytechnique.fr.\protect\\
S.-i. Amari is with Riken Brain Science Institute, Japan.
E-mail:amari@brain.riken.jp.}% <-this % stops an unwanted space
%\thanks{Manuscript received dd:mm:yyyy.}
}

\IEEEtitleabstractindextext{%
\begin{abstract}
Total Bregman divergences are a recent tweak of ordinary Bregman
divergences originally motivated by applications that
required invariance by rotations.
They have displayed superior results compared to ordinary Bregman
divergences on several
clustering, computer vision, medical imaging and machine learning tasks.
These preliminary results raise two important problems :
First, report a complete characterization of the left and right
population minimizers for this class of total Bregman divergences.
Second, characterize a principled superset of total and ordinary
Bregman divergences with good
clustering properties, from which one could tailor the choice of a
divergence to a particular application.
In this paper, we provide and study one such superset with interesting
geometric features, that we
call conformal divergences, and focus on their left and right population
minimizers. Our results are
obtained in a recently coined $(u, v)$-geometric structure that is a
generalization of the dually flat affine
connections in information geometry. We characterize both analytically and
geometrically the population minimizers.
We prove that conformal divergences (resp. total Bregman divergences) are
essentially exhaustive for their left
(resp. right) population minimizers. We further report new results and
extend previous results on the robustness
to outliers of the left and right population minimizers, and discuss the
role of the $(u, v)$-geometric structure
in clustering. Additional results are also given.
\end{abstract}

% Note that keywords are not normally used for peerreview papers.
\begin{IEEEkeywords}
Ordinary Bregman divergences, total Bregman divergences, $(u,v)$-geometric structure.
\end{IEEEkeywords}}

\maketitle              % typeset the title of the contribution

\IEEEdisplaynontitleabstractindextext
\IEEEpeerreviewmaketitle

%Study symmetry

\section{Introduction}

Loosely defined in
its most general form, the clustering problem is related to the grouping of a data sample according to
\textit{unknown} classes or clusters
\cite{vwgCS}\blfootnote{\hspace{-0.58cm}$^\ddagger$ Accepted for publication in IEEE Transactions on
  Information Theory, 2015.}. One of the most popular and well-posed approaches to
clustering is centroid-based: it seeks to summarize data into a fixed set of cluster
centers --- or \textit{population minimizers} ---
that best describe the sample, where "best" is understood with
respect to an expected measure of distortion to the whole sample
\cite{dlrML,mqSM}. Because of their convexity properties and links to
likelihoods in exponential families, ordinary Bregman divergences 
are often used to compute these distortions
in clustering algorithms, such as in $k$-means and EM
\cite{bmdgCW,dlrML,mqSM}. Their
left and right population minimizers are respectively
a cluster's $f$-mean \cite{kSL,nUE} and the cluster's average (ordinary Bregman divergences are in
general not symmetric). It has been shown that modulo technical
assumptions, ordinary Bregman divergences are exhaustive for their right
population minimizer: any divergence whose population minimizer is the sample
average is a Bregman divergence
\cite{afAC,bgwOT}. This shows that the scope of $k$-means and EM is
wide and 
encompasses all domains whose ``natural'' distortion measures rely on
Bregman divergences, such as signal processing, Euclidean geometry,
information theory, statistics, etc. .

There has been a recent burst of interest in a new class of
divergences, built from
Bregman divergences, known as total Bregman divergences
\cite{lTB,ehllIT,elhTB,llyyhCT,lvRA,lvanTB,lvdAR,vlanTB,lvanSR,rglTB}. These divergences
are invariant to particular transformations of the natural space. Experimentally
speaking, clustering with their left population minimizers yields significantly improved
results compared to ordinary Bregman divergences in domains like DTI
interpolation and segmentation \cite{lTB}. 
These
results exploit the fact that the left population minimizers of total Bregman divergences are weighted
generalized $f$-means \cite{lTB}. In the general context of clustering, and
also for reasons related to statistics and maximum likelihood
estimation \cite{svOB}, it is important to
characterize further the population minimizers of total Bregman
divergences: important questions include the characterization of
their right population minimizers and the exhaustivity of these
divergences for their population minimizers.
In the context of clustering, the
results of \cite{lTB} also contribute to the advocacy that clustering is in
fact a domain
dependent method \cite{vwgCS}, thereby raising the question of how we
may generalize further the set of candidate (total or ordinary)
Bregman divergences, while keeping good properties,
from which one may select the best candidates to solve a particular problem.

In this paper, we address these questions
in a setting which generalizes
in two ways total and ordinary Bregman divergences. First, we consider a superset
of total Bregman divergences and ordinary Bregman divergences that we define as conformal
divergences. Second, we consider a coordinate system which is not the
usual dually flat affine coordinate system of (total, ordinary) Bregman
divergences, but a generalization in information geometry studied by
Zhang and Amari, defined as the
$(u,v)$-geometric structure \cite{aTQ,aIG,zDF}, in which two coordinate
mappings $u$ and $v$ define the gradient (and its reciprocal inverse) of the generator of
the Bregman
divergence.

In this generalized setting, our main contribution includes:
\begin{itemize}
\item the characterization of the right population minimizers for
  total Bregman divergences;
\item the characterization of the right population minimizer for an interesting
  $L_p$ generalization of total Bregman divergences;
\item a proof that conformal divergences are exhaustive for their
  left population minimizers;
\item a proof that total Bregman divergences are exhaustive for
  their right population minimizers;
\item the robustness analysis of the left and right population
  minimizers for conformal divergences, which generalizes results
  known for total Bregman divergences \cite{lTB}.
\end{itemize}
Our contribution also includes results pertinent for clustering, such as
(i) a proof that the $(u,v)$-geometric structure sometimes describe an equivalence
relation which might be useful in the context of clustering; (ii) a
proof that the square loss in $v$-coordinates is the only 1D symmetric conformal
divergence in the $(u,v)$-geometric structure; (iii) a discussion on population minimizers for a further
extension involving the recently coined scaled Bregman divergences
(that generalize Csisz\'ar's $f$-divergences) \cite{svOB}.

The paper is structured as follows. The following Section gives 
definitions. Section \ref{sod} compares the various notions of
divergences we consider.
Section \ref{slp} is devoted to left population
minimizers of conformal divergences in the $(u,v)$-geometric
structure.  Section \ref{srpm} does the same for right population
minimizers. Section \ref{sro} studies the robustness of the population
minimizers and Section \ref{sdis} discusses our results. A last
section concludes. In order not to laden the paper's body, some proofs
are given in an Appendix in Section \ref{appen}. The rest of the
proofs, not in the published version, appear from page
\pageref{subright} in this extended version.

\section{Definitions}

Throughout this paper, bold faces denote column vectors, such as $\bm{0}$ for
the null vector, while capitals, like $J$ or $\mathrm{H}$ (respectively Jacobian and Hessian) denote
matrices. Coordinates are noted in exponent, such as $x^1, x^2,
..., x^d$ for vector $\bm{x} \in {\mathbb{R}}^d$, where $d\geq 1$. 

A (right-sided) \textit{conformal divergence}, $D_{\varphi, g}$, is parameterized by
two real-valued functions $\varphi$ and $g$ with $\mathrm{im} g
\subseteq (0, +\infty)$, whose domains are a compact
convex of ${\mathbb{R}}^d$. The expression of $D_{\varphi, g}$ is:
\begin{eqnarray}
D_{\varphi,g} (\bm{x} : \bm{y}) & \defeq & g(\bm{y})D_\varphi(\bm{x}:\bm{y})\:\:.\label{defcd}
\end{eqnarray}
$\varphi$ is real-valued strictly convex twice differentiable, and
$D_\varphi(\bm{x}:\bm{y})$ is the ordinary Bregman divergence with
generator $\varphi$:
\begin{eqnarray}
D_\varphi(\bm{x}:\bm{y}) & = & \varphi(\bm{x}) - \varphi(\bm{y}) -
(\bm{x}-\bm{y})^\top \nabla\varphi(\bm{y})\:\:.\label{defbreg}
\end{eqnarray}
$\nabla\varphi$ denotes the gradient of $\varphi$. 
$g$ admits
continuous directional derivatives: function $\mathrm{D}_{{\bm{z}}}
g(\bm{x}) \defeq \lim_{t\rightarrow 0} \mathrm{D}_{t,{\bm{z}}}
g(\bm{x})$, defining directional derivatives, is continuous and exist
for any \textit{valid} direction $\bm{z}$ such that $\mathrm{D}_{t, \bm{z}} g(\bm{x})$ is defined in a
neighborhood of 0 (with respect to $t$). We give:
\begin{eqnarray}
\mathrm{D}_{t, \bm{z}} g(\bm{x}) & \defeq & \frac{g(\bm{x} +
  t\bm{z}) - g(\bm{x})}{t}\:\:.\nonumber
\end{eqnarray}
%%Faire tableau avec exemples de conformal
Ordinary Bregman divergences match the subset of conformal divergences for
which $g(.) = K$, a constant. The most popular recent example of conformal divergences is
obtained for $g = K \gbot$ for some constant $K>0$ and :
\begin{eqnarray}
\gbot(\bm{y}) & \defeq & \frac{1}{\sqrt{1+\|\nabla \varphi(\bm{y})\|_2^2}}\:\:,\label{refg2}
\end{eqnarray}
which defines total Bregman divergences, that are invariant to rotations of the coordinate
axes \cite{ehllIT,llyyhCT,lvRA,lvanSR,lvanTB,lvdAR,vlanTB} (among
others). Table \ref{overphit} presents some examples of total Bregman
divergences (with $K=1$). Remark
that $\gbot(\bm{y})$ is of the form $f_\bot(\nabla\varphi(\bm{y}))$,
with 
\begin{eqnarray}
f_\bot(\bm{x}) & \defeq &
\frac{1}{\sqrt{1+\|\bm{x}\|_2^2}}\:\:.\label{deffbot}
\end{eqnarray} 
Figure \ref{depic} depicts
$D_\varphi(x:y)$ and $D_{\varphi,\gbot}(x:y)$ on a simple example.
\begin{figure}[t]
\centering
\begin{tabular}{c} 
\epsfig{file=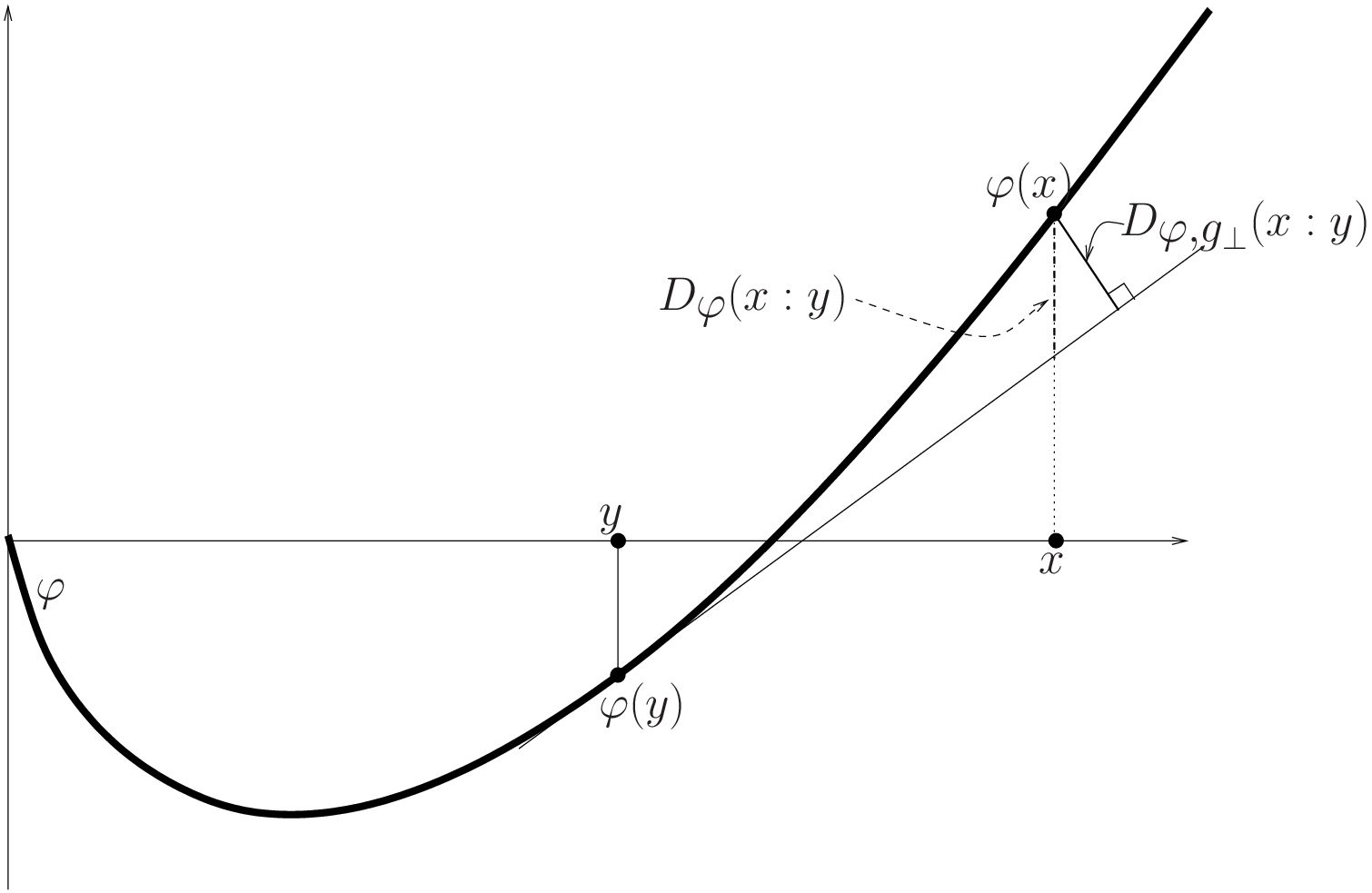, width=0.9\columnwidth}
\end{tabular}
\caption{Depiction of $D_\varphi(x:y)$ and $D_{\varphi,\gbot}(x:y)$
  when $\varphi(x) = x\ln(x)-x$.}\label{depic}
\end{figure}
We also
investigate the generalization of (\ref{refg2}) to $p$-norms, and
define, $\forall p\geq 1$:
\begin{eqnarray}
g_p(\bm{y}) & \defeq & f_p(\nabla\varphi(\bm{y})) \:\:;\label{defgp}\\
f_p(\bm{x}) & \defeq & \frac{1}{\left(1+\|\bm{x}\|_p^p\right)^{\frac{1}{p}}} \:\:.\label{deffp}
\end{eqnarray}
The $p$-norm of $\bm{x}$ is $\|\bm{x}\|_p \defeq \left(\sum_i
  |x^i|^p\right)^\frac{1}{p}$.

A coordinate mapping $v$ is a $C^1$, bijective function $v : {\mathbb{R}}^d
\rightarrow {\mathbb{R}}^d$. For any coordinate mapping $v$, we define the $v$-conformal divergence $D^v_{\varphi, g}$ as:
\begin{eqnarray}
D^v_{\varphi,g} (\bm{x} : \bm{y}) & \defeq & g(\bm{y})D_\varphi(v(\bm{x}):v(\bm{y}))\:\:.\label{defucd}
\end{eqnarray}
$v$-conformal divergences are inspired by divergences in the $(u,v)$
geometric structure \cite{aTQ,aIG} (see also Section \ref{sdis}). They generalize conformal divergences
for which $v = \mathrm{Id}$. We shall investigate several interesting
cases of $v$-conformal divergences, including those where $g$ is a
function of $\nabla\varphi$, and those where $g$ is a function of 
coordinate mapping $u$ in the $(u,v)$-geometric structure.

Let us now motivate the $(u,v)$-geometric structure in the context of the dual
coordinate systems of ordinary Bregman divergences \cite{anMO}.
Function $g$ in $v$-conformal divergences depends on the right
parameter of the divergence. We shall see (Lemma \ref{lsd}) that when
$g$ is not constant, the $v$-conformal divergence cannot be symmetric: $g(\bm{y})D_\varphi(v(\bm{x}):v(\bm{y})) \neq
g(\bm{y})D_\varphi(v(\bm{y}):v(\bm{x}))$. However, our results extend
at little cost to left-sided conformal divergences, \textit{i.e.}
whose regularization factor $g$ depends
on the left parameter of the divergence. Indeed, calling to convex
conjugates, we obtain:
\begin{eqnarray*}
\lefteqn{D_{\varphi, g}^v (\bm{x} : \bm{y})}\nonumber\\
 & = & g(\bm{y}) D_{\varphi}
(v(\bm{x}) : v(\bm{y}))\\
 & = & g(\bm{y}) D_{\varphi^\star}
((\nabla\varphi \circ v)(\bm{y}) : (\nabla\varphi \circ v) (\bm{x}))\\
 & = & g(\bm{y}) D_{\varphi^\star}
(u (\bm{y}) : u (\bm{x}))\:\:,
\end{eqnarray*}
where $u \defeq \nabla\varphi \circ v$ also defines a coordinate mapping and
$\varphi^\star$ is the convex conjugate of $\varphi$. Any
such coordinate mappings $u$ and $v$ such that $u \circ v^{-1}$
defines the gradient of a strictly convex differentiable function $\varphi$ is
called an $(u, v)$-geometric structure \cite{aTQ}, that we 
write $\struct{u}{v}{\varphi}$ from now on, to make explicit the reference to
$\varphi$. 

We now define population minimizers for conformal divergences.

\begin{definition}(left- and right-population minimizers)
Let ${\mathcal{S}} \defeq \{\bm{x}_1, \bm{x}_2, ...,
\bm{x}_n\}$, with $\bm{x}_i \in {\mathbb{R}}^d, \forall i = 1, 2, ...,
n$. Let $D : {\mathbb{R}}^d\times  {\mathbb{R}}^d \rightarrow
{\mathbb{R}}$ be the shorthand for some ordinary Bregman (resp. conformal,
resp. $v$-conformal) divergence $D_\varphi$ (resp. $D_{\varphi, g}$,
resp. $D_{\varphi, g}^v$). A \textit{left population minimizer} for
$D$ on ${\mathcal{S}}$ is any $\bm{\mu}$ such that
$\sum_i{D(\bm{\mu} : \bm{x}_i)} = \min_{\bm{x}} \sum_i D(\bm{x} :
\bm{x}_i)$. A \textit{right population minimizer} for
$D$ on ${\mathcal{S}}$ is any $\bm{\mu}$ such that
$\sum_i{D(\bm{x}_i : \bm{\mu})} = \min_{\bm{x}} \sum_i D(
\bm{x}_i : \bm{x})$.
\end{definition}
This definition, as well as the results in this paper,
can be extended to non uniform distributions over ${\mathcal{S}}$, and
to population minimizers in the continuous case. 

\section{On divergences: ordinary, total and conformal}\label{sod}

There is a need to generalize the source of divergences from which
efficient centroid-based clustering algorithms may be derived. The
comparison between ordinary Bregman and total Bregman divergences is enlightening from that standpoint:
ordinary Bregman divergences $D_\varphi$ are \textit{axiomatically}
characterized as the unique family of divergences (under mild
conditions) that yield their right population minimizers matching the
\textit{sample average} \cite{bgwOT}. Hence, the right population
minimizer is the simplest to compute, \textit{but} having fixed the data
sample, 
regardless of the
generator of the divergence $\varphi$, it is always the 
\textit{same}. From a clustering
standpoint, it may be more intuitive that since changing the
generator changes the geometry of the problem, it should possibly
change this population minimizer as well. Also, this invariance
is not convenient to further optimize the population minimizer by
tuning the divergence at hand.

This problem does not appear anymore with total Bregman
divergences. Initially, total Bregman divergences \cite{lTB} $D_{\varphi,\gbot}$
have been geometrically designed to enforce invariance by rotations in
the parameter space \cite{lTB}, thus mimicking the ordinary/total
least squares relationships.
Rotation invariance is a very desirable property in medical imaging \cite{vlanTB} and
computer vision \cite{rglTB}. Thus, total Bregman divergences have
been specifically engineered to solve a particular geometric problem,
which has led to improved results on several key applications related to clustering.
Besides, total Bregman divergences have also proven
\textit{experimentally}  superior in boosting \cite{lvRA} and
tensor-based graph matching \cite{elhTB}, etc., just to name a few.
One theoretical argument that explains the superiority of total
Bregman divergences was detailed in \cite{lvanSR}, where it was proved
that total Bregman divergences are robust compared to ordinary Bregman
divergences, by studying the impact of outliers via the influence function.

The difference between total and ordinary Bregman divergences can also
be
captured from a statistical standpoint.
It is well-known that regular exponential families
$p(\bm{x};\bm{\theta})=h(\bm{x})\exp(\bm{\theta}^\top
\bm{t}(\bm{x})-\varphi(\bm{\theta}))$ are in bijection with (regular)
Bregman divergences \cite{bmdgCW},
$p(\bm{x};\bm{\theta})=h(\bm{x})\exp(-D_{\varphi^\star}(\bm{t}(\bm{x}):\bm{\eta}(\bm{\theta})))$
where $\bm{\eta}(\bm{\theta})=\nabla\varphi(\bm{\theta})$ is the dual moment parameter, and that
the Maximum Likelihood Estimator (MLE) for $n$ identically and
independently distributed observations of an exponential family
coincides with the so-called \textit{observed point} in information
geometry \cite{anMO}:  $\overline{\bm{t}}=(1/n)\sum_i
\bm{t}(\bm{x}_i)$, where $\bm{t}(.)$ denotes the vector of sufficient
statistics of the exponential families under consideration.  
That is, the MLE $\hat{\bm{\eta}}$ expressed in the $\bm{\eta}$-parameter matches the centroid of sufficient statistics: $\hat{\bm{\eta}}=\overline{\bm{t}}$.
Since there is also a bijection between ordinary and total Bregman
divergences, we deduce by transitivity with the ordinary Bregman-exponential family bijection that we can associate an exponential family 
$p(\bm{x};\bm{\theta})=h(\bm{x})\exp(-D_{\varphi^\star,\gbot}(\bm{t}(\bm{x}):\bm{\eta})\sqrt{1+\|\nabla
  \varphi^\star(\bm{\eta})\|^2})$ to any total Bregman divergence $D_{\varphi^\star,\gbot}$.
This statistical distribution $p(\bm{x};\bm{\theta})$ corresponds also to a
\textit{lifted exponential family}
$\tilde{p}(\tilde{\bm{x}};\tilde{\bm{\theta}})$ in disguise, as
exemplified in \cite{lvanSR} with
$\tilde{\bm{\theta}}=(\bm{\theta},\varphi(\bm{\theta}))$ and
$\tilde{\bm{x}}=(1/\sqrt{1+\|\nabla \varphi(\bm{x})\|^2}) \cdot (\bm{x},1)$.
In other words, the ordinary exponential family is lifted to the space having one extra dimension and embedded  as a hypersurface.
Now, it can be proved that the \textit{total (left) observed point} is the Bayesian MAP estimator in the lifted exponential family  with prior distribution $\pi(\bm{\theta})=\exp(-n \tilde{\varphi}(\bm{\theta}))$,
where $\tilde{\varphi}(\bm{\theta})$ is the normalization factor.
Table~\ref{tab:compare} compares properties of ordinary and total Bregman divergences. "Information" relates to
the sample divergence to the right population minimizer \cite{bmdgCW}.

\begin{table*}
\begin{center}
\begin{tabular}{|c||c|c|}\hline
& ordinary Bregman divergence & total Bregman divergence \\ \hline\hline
Right population minimizer & centroid &  this paper (Corollary \ref{cororth2}) \\ \hline
Left population minimizer & $\nabla \varphi$-mean, weights $w_i= 1/n$
&  $\nabla \varphi$-mean, weights $w_i= 1 / \sqrt{1+\|\nabla \varphi(\bm{x}_i)\|^2}$ \\ \hline
Robustness of pop. minimizers & no & yes \\ \hline
Information & Bregman information & this paper (Lemma \ref{ld}) \\ \hline
Exhaustiveness & right pop. minimizer = sample average & this paper (Theorem \ref{thex}) \\ \hline
Bijection & exponential families & lifted exponential families\\ \hline
Inference & MLE = observed point  & Bayesian MAP = total observed point \\ 
& ($\hat{\bm{\eta}}$=right population minimizer) & (left population
minimizer) \\ \hline\hline
\end{tabular}
\end{center}
\caption{Properties of ordinary Bregman vs total Bregman
  divergences. 
\label{tab:compare}}
\end{table*}

Our definition of conformal divergences is inspired by information
geometry. In information geometry \cite{anMO}, a
\textit{divergence}~\cite{zDF,bDM} (also called a contrast function or
yoke) is a measure of dissimilarity $D(\bm{p}:\bm{q})$ (with
$D(\bm{p}:\bm{q})\geq 0$ with equality iff $p=q$, but not necessarily
symmetric nor satisfying the triangle inequality) that further needs
to satisfy some smoothness conditions \cite{zDF} to induce properly a
metric tensor and a cubic form (for the coefficients of the
connections).  
In Riemannian geometry, a \textit{conformal metric} $\textsl{g}'$ of a
metric $\textsl{g}$ is expressed by $\textsl{g}'=\varrho \textsl{g}$,
where $\varrho>0$ (hence also written as $\textsl{g}'=e^{\varrho}\textsl{g}$). 
The uniformization theorem states that Riemannian surfaces are
conformally equivalent to either the spherical, planar or hyperbolic
manifolds --- all of constant curvatures. In our definition of
conformal divergences in eq. (\ref{defucd}), factor $g(.)$ plays the role of the conformal
factor; we shall see in Lemma \ref{ld} below that it indeed defines a
conformal factor. In the particular case of total Bregman divergences, $\varrho \propto 1/\sqrt{1+\|\nabla \varphi(.)\|^2}$
plays the role of the conformal factor. 
In information geometry induced by a generalized logarithm function, a
conformal flattening \cite{omaAD} allows to obtain a dually flat structure.
Conformal mappings also explain the role of \textit{escort
  distributions}, and yield efficient algorithms for Voronoi
diagrams induced by conformal divergences, a geometric structure
particularly relevant to clustering \cite{omaAD,bnnBV}.

\section{Left population minimizers of $v$-conformal divergences}\label{slp}

We are interested in this Section in characterizing the left population
minimizers of general $v$-conformal divergences. We build on
results known from \cite{lvanSR,lvanTB} for the elicitation of the
left population minimizer when $v = \mathrm{Id}$, and the well known
results from \cite{bgwOT} for the elicitation of the divergences
having the arithmetic average as right population minimizer. Technicalities are simpler than for the right population minimizers
because function $g$ does not depend on the left parameter of
$D_\varphi$. We first show that the left population minimizer of some
$v$-conformal divergence $D_{\varphi, g}^v$ is a weighted $u$-mean,
where $\struct{u}{v}{\varphi}$ is a geometric structure.
\begin{lemma}\label{lleft1}
The left population minimizer $\bm{\mu}$ of any $v$-conformal divergence
$D_{\varphi, g}^v$ on ${\mathcal{S}}$ is unique and
equals:
\begin{eqnarray}
\bm{\mu} & = &  u^{-1}\left( \frac{1}{\sum_i
    g(\bm{x}_i)} \sum_i g(\bm{x}_i) u(\bm{x}_i)\right)\:\:,\label{propuv}
\end{eqnarray}
where $\struct{u}{v}{\varphi}$ is a geometric structure.
\end{lemma}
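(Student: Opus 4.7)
The plan is to minimize the objective $F(\bm{\mu}) \defeq \sum_i D_{\varphi,g}^v(\bm{\mu}:\bm{x}_i) = \sum_i g(\bm{x}_i) D_\varphi(v(\bm{\mu}):v(\bm{x}_i))$ by performing the change of variables $\bm{z} \defeq v(\bm{\mu})$, which is legitimate since $v$ is a bijection. In the $\bm{z}$ variable the objective becomes $\tilde F(\bm{z}) = \sum_i g(\bm{x}_i) D_\varphi(\bm{z}:v(\bm{x}_i))$. Crucially, $g(\bm{x}_i)$ does not depend on the left argument, so $\tilde F$ is a nonnegative linear combination of functions $\bm{z}\mapsto D_\varphi(\bm{z}:v(\bm{x}_i))$, each strictly convex in $\bm{z}$ (because $\varphi$ is strictly convex and twice differentiable). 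Hence $\tilde F$ is strictly convex, and has at most one critical point, which is a unique global minimum.

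The next step is to compute the first-order condition. From \eqref{defbreg}, $\nabla_{\bm{z}} D_\varphi(\bm{z}:v(\bm{x}_i)) = \nabla\varphi(\bm{z}) - \nabla\varphi(v(\bm{x}_i))$, so setting $\nabla \tilde F(\bm{z}) = \bm{0}$ yields
\begin{eqnarray*}
\nabla\varphi(\bm{z}) & = & \frac{1}{\sum_i g(\bm{x}_i)} \sum_i g(\bm{x}_i)\, \nabla\varphi(v(\bm{x}_i))\:\:.
\end{eqnarray*}
Since $\varphi$ is strictly convex and twice differentiable, $\nabla\varphi$ is injective and has a well-defined inverse on its image, so $\bm{z}$ is uniquely determined by the right-hand side. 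The unique left minimizer is therefore $\bm{\mu} = v^{-1}(\bm{z})$.

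Finally, I recognize $u = \nabla\varphi \circ v$ as the partner coordinate mapping in the $(u,v)_\varphi$ geometric structure introduced just before the statement. Applying $v^{-1}$ to both sides after rewriting $\nabla\varphi(v(\bm{x}_i)) = u(\bm{x}_i)$ and $\nabla\varphi(v(\bm{\mu})) = u(\bm{\mu})$ gives
\begin{eqnarray*}
\bm{\mu} & = & u^{-1}\left(\frac{1}{\sum_i g(\bm{x}_i)} \sum_i g(\bm{x}_i)\, u(\bm{x}_i)\right)\:\:,
\end{eqnarray*}
as claimed; uniqueness has already been established from the strict convexity of $\tilde F$.

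The only delicate point is ensuring that the change of variables preserves the minimizer (rather than just the critical structure) and that $u^{-1}$ actually exists on the relevant domain. The former follows from the bijectivity of $v$, while the latter is guaranteed because $\nabla\varphi$ inherits invertibility from strict convexity, and $v$ is bijective by definition of a coordinate mapping, so the composition $u = \nabla\varphi \circ v$ is invertible on its image; this is in fact precisely the condition that makes $(u,v)_\varphi$ a valid geometric structure, so no additional hypothesis is needed beyond those already in force.
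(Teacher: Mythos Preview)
Your proof is correct and follows essentially the same route as the paper's: both set the gradient of the objective to zero and invert $\nabla\varphi\circ v = u$. The only cosmetic difference is that you first substitute $\bm{z}=v(\bm{\mu})$ and then differentiate (which cleanly exposes strict convexity and hence uniqueness), whereas the paper differentiates $F$ in $\bm{\mu}$ directly, producing the Jacobian $J_v(\bm{\mu})$ and then using that its null space is trivial; the resulting equation and its solution are identical.
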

(Proof in Appendix, Subsection \ref{proof_lleft1})
We now show that the characterization of left population minimizers
for $v$-conformal divergences is exhaustive, as any distortion
function admitting a weighted $u$-mean as left population minimizer
equals a $v$-conformal divergence $D_{\varphi, g}^v$, for some
$\struct{u}{v}{\varphi}$-geometric structure.
\begin{lemma}\label{lleft2}
Let $\mu \defeq u^{-1}(\sum_i {w_i u(\bm{x}_i)})$ be the unique
solution to $\min_{\bm{x}} \sum_i {D(\bm{x} : \bm{x}_i)}$, where:
\begin{enumerate}
\item $D : {\mathbb{R}}^d\times {\mathbb{R}}^d
\rightarrow {\mathbb{R}}$ is non-negative, twice continuously differentiable and such that
$D(\bm{x} : \bm{x}) = 0,\forall \bm{x}$;
\item $u : {\mathbb{R}}^d
\rightarrow {\mathbb{R}}^d$ is a coordinate mapping;
\item $\sum_i w_i =
1$ and $w_i > 0, \forall i$. 
\end{enumerate}
Then there exist a function $g: {\mathbb{R}}^d
\rightarrow {\mathbb{R}}$ admitting continuous directional derivatives
and a geometric structure $\struct{u}{v}{\varphi}$ such that 
\begin{eqnarray}
D(\bm{x} :\bm{y}) & = & D^v_{\varphi, g}(\bm{x} :\bm{y})\:\:. \label{dd4}
\end{eqnarray}
\end{lemma}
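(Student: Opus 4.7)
The strategy is to reverse-engineer the form of $D$ from its first-order optimality condition and the prescribed shape of the minimizer. Because $D$ is twice continuously differentiable, uniqueness of $\mu$ gives, for every sample,
\begin{equation*}
\sum_i \nabla_1 D(\mu:\bm{x}_i) \;=\; \bm{0},
\end{equation*}
where $\nabla_1$ denotes the gradient with respect to the first argument, while the prescribed form of $\mu$ reads $\sum_i w_i(u(\mu) - u(\bm{x}_i)) = \bm{0}$. By analogy with Lemma~\ref{lleft1}, and since the optimality condition carries no explicit weights, the $w_i$ must themselves be of the form $w_i = g(\bm{x}_i)/\sum_j g(\bm{x}_j)$ for some strictly positive scalar function $g$ to be identified below.

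The core step is a two-point analysis. For $n=2$ the two identities above become
\begin{equation*}
\nabla_1 D(\mu:\bm{y}_1) + \nabla_1 D(\mu:\bm{y}_2) = \bm{0}, \qquad g(\bm{y}_1)(u(\mu) - u(\bm{y}_1)) + g(\bm{y}_2)(u(\mu) - u(\bm{y}_2)) = \bm{0}.
\end{equation*}
Fixing $\mu$ and letting $\bm{y}_2$ range, the vector $g(\bm{y}_2)(u(\mu)-u(\bm{y}_2))$ sweeps all of $\mathbb{R}^d$ (since $u$ is a $C^1$ bijection and $g>0$). A linearity/oddness argument then forces $\nabla_1 D(\mu:\bm{y})$ to be a linear function of $g(\bm{y})(u(\mu)-u(\bm{y}))$ whose coefficient matrix depends only on $\mu$, yielding
\begin{equation*}
\nabla_1 D(\bm{x}:\bm{y}) \;=\; g(\bm{y})\, M(\bm{x})\, (u(\bm{x}) - u(\bm{y}))
\end{equation*}
for some invertible matrix field $M$. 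The $C^2$ regularity of $D$ transfers to $g$ (giving continuous directional derivatives) and to $M$.

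It remains to package $(u, M)$ into an $(u, v)_\varphi$ structure. We need a strictly convex $C^2$ potential $\varphi$ and a coordinate mapping $v$ with $\nabla\varphi \circ v = u$ and $J_v(\bm{x})^\top = M(\bm{x})$. The one-form $M(\bm{x})(u(\bm{x})-u(\bm{y}))\,d\bm{x}$ is automatically closed in $\bm{x}$, since it equals $\nabla_{\bm{x}}[D(\bm{x}:\bm{y})/g(\bm{y})]$, which supplies the required integrability; a safe concrete choice is $v \defeq u$ and $\varphi(\bm{z}) \defeq \tfrac{1}{2}\|\bm{z}\|_2^2$, though other realizations of the geometric structure are admissible. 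Integrating $\nabla_1 D(\cdot:\bm{y})$ from $\bm{y}$ to $\bm{x}$ along a straight segment and using $D(\bm{y}:\bm{y}) = 0$ then recovers
\begin{equation*}
D(\bm{x}:\bm{y}) \;=\; g(\bm{y})\, D_\varphi(v(\bm{x}):v(\bm{y})) \;=\; D^v_{\varphi,g}(\bm{x}:\bm{y}),
\end{equation*}
which is (\ref{dd4}). The principal obstacle is the two-point step, where one must exploit the arbitrariness of $\bm{y}_1,\bm{y}_2$ to promote a balance-only identity into pointwise collinearity of $\nabla_1 D(\bm{x}:\bm{y})$ with $g(\bm{y})(u(\bm{x})-u(\bm{y}))$ for every $\bm{y}$, and to verify that $g$ is well defined (independent of the auxiliary pairing) and inherits the stated regularity; the remainder reduces to a standard integration and a choice of convex potential.
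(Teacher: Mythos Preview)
Your approach differs substantially from the paper's, and it contains a genuine gap at the packaging step.

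The paper does not work with $\nabla_1 D$ at all. Instead it performs the change of variables $\bm{x}_u \defeq u(\bm{x})$, defines $D_2(\bm{a}:\bm{b}) \defeq D(u^{-1}(\bm{a}):u^{-1}(\bm{b}))$, observes that in the new coordinates the left population minimizer is the (weighted) arithmetic mean, and then invokes the known exhaustivity result for Bregman divergences \cite{afAC} to obtain $D_2(\bm{\mu}_u:(\bm{x}_i)_u) = w_i D_\phi((\bm{x}_i)_u:\bm{\mu}_u)$. A convex-conjugate flip $D_\phi(\bm{a}:\bm{b}) = D_{\phi^\star}(\nabla\phi(\bm{b}):\nabla\phi(\bm{a}))$ then produces $v \defeq \nabla\phi\circ u$ and $\varphi \defeq \phi^\star$, so that $(u,v)_\varphi$ is automatically a geometric structure. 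The function $g$ is obtained by interpolation of the $w_i$. Thus the paper never has to integrate a matrix field or verify any integrability condition: the convex potential and the second coordinate mapping fall out of the cited characterization.

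Your proposal, by contrast, attempts to reconstruct $\varphi$ and $v$ from the matrix field $M(\bm{x})$ that you extract from $\nabla_1 D$. The step ``a safe concrete choice is $v \defeq u$ and $\varphi(\bm{z}) \defeq \tfrac{1}{2}\|\bm{z}\|_2^2$'' is incorrect: with that choice, $\nabla_1\bigl(g(\bm{y})D_\varphi(v(\bm{x}):v(\bm{y}))\bigr) = g(\bm{y})\,J_u(\bm{x})^\top(u(\bm{x})-u(\bm{y}))$, which forces $M(\bm{x}) = J_u(\bm{x})^\top$. But $M$ was derived from $D$ alone, and there is no reason it should coincide with $J_u^\top$; indeed, for a $v$-conformal divergence with $v\neq u$ one has $M = J_v^\top$, not $J_u^\top$. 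So the concrete choice fails in general, and you have not supplied an alternative construction: you would need to show that $M^\top$ is the Jacobian of \emph{some} coordinate mapping $v$ (a symmetry/integrability condition on the columns of $M$), and then that $u\circ v^{-1}$ is the gradient of a strictly convex function (symmetry and positive definiteness of its Jacobian). Neither follows from the closedness of the one-form $M(\bm{x})(u(\bm{x})-u(\bm{y}))\,d\bm{x}$ that you invoke. This is precisely the work that the paper avoids by reducing to the known Bregman characterization and then reading off $v$ and $\varphi$ via convex conjugacy.
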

(Proof in Appendix, Subsection \ref{proof_lleft2})

%% VALUE OF THE DIVERGENCE

\section{Right population minimizers of $v$-conformal divergences}\label{srpm}

\subsection{Case $v = \mathrm{Id}$}

We now derive the right population minimizers for a conformal
divergence $D_{\varphi, g}$, thus considering $v$-conformal
divergences with $v = \mathrm{Id}$. 
Because $g$ admits continuous directional derivatives, so does $D_{\varphi,g} (\bm{x} :
\bm{y})$ for both its arguments. Let us
define:
\begin{eqnarray} 
\lefteqn{\mathrm{D}_{t,\bm{z}}
  D_{\varphi,g}(\bm{x}:\bm{y})}\nonumber\\
 & \defeq & \frac{D_{\varphi,g}(\bm{x}:\bm{y} + t \bm{z}) -
D_{\varphi,g}(\bm{x}:\bm{y})}{t}\:\:,\label{defzz1}
\end{eqnarray}
so that the directional derivative in the right parameter $\mathrm{D}_{\bm{z}} D_{\varphi,g}(\bm{x}:\bm{y}) \defeq
\lim_{t\rightarrow 0} \mathrm{D}_{t,\bm{z}}
D_{\varphi,g}(\bm{x}:\bm{y})$ exists, for any valid direction $\bm{z}$. Define from any ${\mathcal{S}}$ the following averages:
\begin{eqnarray}
\overline{\varphi} & \defeq & \frac{1}{n} \sum_i
\varphi(\bm{x}_i)\:\:,\label{defovarphi}\\
\overline{\bm{x}} & \defeq & \frac{1}{n} \sum_i
\bm{x}_i\:\:.\label{defox}
\end{eqnarray}
Let us define the following vectors $\overline{\bm{x}}^+, \bm{\mu}^+, \bm{\delta}^+, \bm{z}^+ \in {\mathbb{R}}^{d+1}$:
\begin{eqnarray}
\overline{\bm{x}}^+ \hspace{-0.3cm} & \defeq & \hspace{-0.3cm} \left[
\begin{array}{c}
\overline{\bm{x}} \\
\overline{\varphi} 
\end{array}
\right] \label{defxp}\:\:,\\
\bm{\mu}^+ \hspace{-0.3cm} & \defeq & \hspace{-0.3cm} \left[
\begin{array}{c}
\bm{\mu}\\
\varphi(\bm{\mu})
\end{array}
\right] \label{defmup}\:\:,\\
\bm{\delta}^+ \hspace{-0.3cm} & \defeq & \overline{\bm{x}}^+ -\bm{\mu}^+ \label{defdelta}\:\:,\\
\bm{z}^+ \hspace{-0.3cm} & \defeq & \hspace{-0.3cm} \left[
\begin{array}{c}
\mathrm{D}_{\bm{z}} (g(\bm{\mu}) \nabla \varphi(\bm{\mu}))\\
-\mathrm{D}_{\bm{z}} g(\bm{\mu})
\end{array}
\right] \label{defzz2}\:\:,
\end{eqnarray}
from which we define the following sets:
\begin{eqnarray}
\proj_{{\mathcal{S}},\varphi,g} & \defeq & \left\{\bm{\mu} \in
  \mathrm{dom}(D_{\varphi,g}) : \bm{\delta}^+ \bot \bm{z}^+, \forall
  \bm{z} 
\right\}\:\:,\label{defcorr1}
\end{eqnarray}
where the directions $\bm{z}$ have to be valid, \textit{i.e.} such
that the directional derivative of $g$ is defined in a
neighborhood of 0 (with respect to $t$). We also define
$\boundary_{\varphi,g}$, the eventually empty set of non-differentiable boundary
points of the intersection of the domains of $\varphi$ and $g$.
In the following, we let
$\mathcal{P}(D_{\varphi,g};{\mathcal{S}})$ denote the set of right population minimizers
for conformal divergence $D_{\varphi,
  g}$ on set ${\mathcal{S}}$.

\begin{lemma}\label{ld}
$\mathcal{P}(D_{\varphi,g};{\mathcal{S}}) \subseteq
\proj_{{\mathcal{S}},\varphi,g} \cup \boundary_{\varphi,g}$. Furthermore,
$\forall \bm{\mu} \in \proj_{{\mathcal{S}},\varphi,g}
\backslash \{\overline{\bm{x}}\}$, the
average distortion is a weighted square Mahalanobis distance to the
population average:
\begin{eqnarray}
\lefteqn{\frac{1}{n} \sum_i
D_{\varphi,g}(\bm{x}_i:\bm{\mu})}\nonumber\\
 & = & \varrho_g \times (\overline{\bm{x}} -
\bm{\mu})^\top \mathrm{H}\varphi(\bm{\mu}) (\overline{\bm{x}} -
\bm{\mu})\:\:, \label{valdivmaha}
\end{eqnarray}
with $\varrho_g \defeq g^2(\bm{\mu}) / \mathrm{D}_{(\overline{\bm{x}} -
\bm{\mu})}
g(\bm{\mu}) > 0$.
\end{lemma}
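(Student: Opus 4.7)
The plan is to derive the first-order optimality condition by directional differentiation of the average distortion, then show that it is algebraically identical to the orthogonality condition $\bm{\delta}^+ \bot \bm{z}^+$ that defines $\proj_{{\mathcal{S}},\varphi,g}$, and finally plug the condition back into the expression of the average to extract the Mahalanobis form.

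As a preliminary step, I simplify the sum: averaging the Bregman expression in (\ref{defbreg}) yields $\frac{1}{n}\sum_i D_{\varphi,g}(\bm{x}_i : \bm{\mu}) = g(\bm{\mu}) h(\bm{\mu})$, where $h(\bm{\mu}) \defeq \overline{\varphi} - \varphi(\bm{\mu}) - (\overline{\bm{x}} - \bm{\mu})^\top \nabla\varphi(\bm{\mu})$. I then compute the directional derivative $\mathrm{D}_{\bm{z}}(g h)(\bm{\mu})$ by the product rule. Expanding $\mathrm{D}_{\bm{z}} h$ introduces a $-\bm{z}^\top \nabla\varphi(\bm{\mu})$ from $-\varphi(\bm{\mu})$ and a $+\bm{z}^\top \nabla\varphi(\bm{\mu})$ from the derivative of $(\overline{\bm{x}} - \bm{\mu})^\top$; these cancel, leaving $\mathrm{D}_{\bm{z}} h(\bm{\mu}) = -(\overline{\bm{x}} - \bm{\mu})^\top \mathrm{H}\varphi(\bm{\mu}) \bm{z}$. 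The first-order condition thus reads, for every valid $\bm{z}$, $\mathrm{D}_{\bm{z}} g(\bm{\mu})\, h(\bm{\mu}) = g(\bm{\mu}) (\overline{\bm{x}} - \bm{\mu})^\top \mathrm{H}\varphi(\bm{\mu}) \bm{z}$.

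Next, I will expand $\bm{\delta}^+ \cdot \bm{z}^+$ directly from the definitions in (\ref{defxp})--(\ref{defzz2}). Using the product rule $\mathrm{D}_{\bm{z}}(g(\bm{\mu}) \nabla\varphi(\bm{\mu})) = \mathrm{D}_{\bm{z}} g(\bm{\mu}) \nabla\varphi(\bm{\mu}) + g(\bm{\mu}) \mathrm{H}\varphi(\bm{\mu}) \bm{z}$ and collecting terms yields $\bm{\delta}^+ \cdot \bm{z}^+ = -\mathrm{D}_{\bm{z}} g(\bm{\mu})\, h(\bm{\mu}) + g(\bm{\mu}) (\overline{\bm{x}} - \bm{\mu})^\top \mathrm{H}\varphi(\bm{\mu}) \bm{z}$, which is exactly the negation of the first-order condition. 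Consequently, a right minimizer where the average is differentiable must satisfy $\bm{\delta}^+ \bot \bm{z}^+$ for every valid $\bm{z}$, hence belongs to $\proj_{{\mathcal{S}},\varphi,g}$; any remaining minimizer must lie at a non-differentiable boundary point of $\mathrm{dom}(D_{\varphi,g})$, so it sits in $\boundary_{\varphi,g}$.

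For the Mahalanobis identity, I select the valid direction $\bm{z} = \overline{\bm{x}} - \bm{\mu}$, which is nonzero precisely because $\bm{\mu} \neq \overline{\bm{x}}$. Solving the first-order condition for $h(\bm{\mu})$ and multiplying by $g(\bm{\mu})$ gives (\ref{valdivmaha}) with $\rho_g = g^2(\bm{\mu}) / \mathrm{D}_{(\overline{\bm{x}} - \bm{\mu})} g(\bm{\mu})$. Positivity of $\rho_g$ follows because $h(\bm{\mu}) = D_\varphi(\overline{\bm{x}} : \bm{\mu}) + (\overline{\varphi} - \varphi(\overline{\bm{x}}))$ is strictly positive when $\bm{\mu} \neq \overline{\bm{x}}$ (by strict convexity of $\varphi$ and Jensen's inequality), and $g(\bm{\mu}) (\overline{\bm{x}} - \bm{\mu})^\top \mathrm{H}\varphi(\bm{\mu}) (\overline{\bm{x}} - \bm{\mu}) > 0$ by positive-definiteness of $\mathrm{H}\varphi$; the first-order equality then forces $\mathrm{D}_{(\overline{\bm{x}} - \bm{\mu})} g(\bm{\mu}) > 0$. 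The only real obstacle is keeping the product-rule bookkeeping clean enough that the two expressions (first-order condition and $\bm{\delta}^+ \cdot \bm{z}^+$) match term by term; once that match is recognized, the rest is algebraic substitution.
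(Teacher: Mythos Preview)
Your proposal is correct and follows essentially the same approach as the paper's proof: both compute the directional derivative of the average distortion via the product rule, identify the resulting first-order condition with the orthogonality $\bm{\delta}^+\bot\bm{z}^+$, relegate the remaining minimizers to $\boundary_{\varphi,g}$, and then specialize to $\bm{z}=\overline{\bm{x}}-\bm{\mu}$ to obtain the Mahalanobis form. Your explicit justification of $\rho_g>0$ (via $h(\bm{\mu})=D_\varphi(\overline{\bm{x}}:\bm{\mu})+(\overline{\varphi}-\varphi(\overline{\bm{x}}))>0$ and positive-definiteness of $\mathrm{H}\varphi$) is a detail the paper leaves implicit; note it tacitly assumes $g(\bm{\mu})>0$, which is harmless in the intended applications but worth flagging.
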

(Proof in Appendix, Subsection \ref{proof_ld}) With respect to the
discussion on conformal divergences in Section \ref{sod}, we see that
$\varrho_g$ defines a conformal factor, and so all points in $\proj_{{\mathcal{S}},\varphi,g}
\backslash \{\overline{\bm{x}}\}$ are points for which the conformal
divergence reduces to a (square) distance on metric
$\mathrm{H}\varphi$ conformally
transformed. We shall see that set $\proj_{{\mathcal{S}},\varphi,g}
\backslash \{\overline{\bm{x}}\}$ also contains population
minimizers. Finally, Lemma \ref{ld} shows that the conformal Bregman
information generalizes the Bregman information \cite{bmdgCW} to
\textit{weighted} square Mahalanobis distance --- because Bregman divergences can be
formulated as square Mahalanobis distance over particular metrics, Bregman
information can also be expressed using square Mahalanobis distance.

We are now ready to
state a first Theorem that provides the right population minimizers
for a subset of conformal divergences which encompasses total
Bregman divergences. 
\begin{theorem}\label{thh0}
Pick $g = K g_p$ as in (\ref{defgp}) with $K>0$ a constant, $p = 2k/(2k-1)$ and $k \in
{\mathbb{N}}_*$. Then, assuming $\boundary_{\varphi,g} = \emptyset$, the right population
minimizer(s) for $D_{\varphi,g_p}$ on ${\mathcal{S}}$ match the set:
\begin{eqnarray}
\mathcal{P}(D_{\varphi,K g_p};{\mathcal{S}}) & = & \arg\min_{\bm{\mu}} \|\overline{\bm{x}}^+ - \bm{\mu}^+\|_q\:\:,
\end{eqnarray}
with $\overline{\bm{x}}^+$ and $\bm{\mu}^+$ defined in (\ref{defxp})
and (\ref{defmup}), and $q  = 2k \in {\mathbb{N}}$ is the H\"older conjugate of
$p$\footnote{Two reals $p, q\geq 1$ are H\"older conjugates when $(1/p) + (1/q) = 1$.}.
\end{theorem}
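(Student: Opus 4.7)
The plan is to transport the calculation of Lemma \ref{cororth2} to the $p$-norm setting by applying Lemma \ref{lzigen} with $f = K f_p$, then to collapse the resulting expression by invoking two algebraic identities that both encode H\"older duality between $p$ and $q$. The crux of the theorem is that the specific choice $p = 2k/(2k-1)$, $q = 2k$ makes both identities hold simultaneously and makes $q$ an even integer, so that absolute values do not interfere with the expansion of $\|\cdot\|_q^q$.

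First, I would differentiate $f(\bm{x}) = K(1+\|\bm{x}\|_p^p)^{-1/p}$ coordinate-wise to obtain $\nabla f(\bm{x})$ as the vector whose $i$-th entry is $-K(1+\|\bm{x}\|_p^p)^{-(p+1)/p}|x^i|^{p-2}x^i$. Setting $\bm{n} \defeq \nabla\varphi(\bm{\mu})$, $m \defeq (1/n)\sum_i D_\varphi(\bm{x}_i:\bm{\mu})$ and $c \defeq m/(1+\|\bm{n}\|_p^p)$, the vector equation (\ref{pdiff}) from the proof of Lemma \ref{lzigen} reduces, after cancellation of $(1+\|\bm{n}\|_p^p)^{-1/p}$ factors between $\nabla f(\bm{n})$ and $f(\bm{n})$, to the horizontal identity $\overline{x}^i - \mu^i = -c\,|n^i|^{p-2}n^i$ for every $i$. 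For the vertical component I would test the orthogonality $(\bm{\delta}^+)^\top \bm{z}^+ = 0$ from Lemma \ref{lzigen} in the direction $\bm{z} = \bm{n}$: this collapses $\bm{z}^+$ to a scalar multiple of $(\bm{n};\,\|\bm{n}\|_p^p)^\top$, yielding the $p$-norm analogue of (\ref{orthmuu1}), namely $\bm{\delta}^+ \bot (\bm{n};\,\|\bm{n}\|_p^p)^\top$, from which the horizontal identity gives $\overline{\varphi}-\varphi(\bm{\mu}) = c$.

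I would then expand $\|\overline{\bm{x}}^+ - \bm{\mu}^+\|_q^q$ coordinate-wise on $\bm{\mu} \in \proj_{\mathcal{S},\varphi,Kg_p}$: the vertical part is $c^q$ and the $i$-th horizontal part is $c^q|n^i|^{q(p-1)}$, which the first H\"older identity $q(p-1) = p$ converts into $c^q|n^i|^p$. Summing and factoring gives
\[
\|\overline{\bm{x}}^+ - \bm{\mu}^+\|_q^q \;=\; c^q(1+\|\bm{n}\|_p^p) \;=\; \frac{m^q}{(1+\|\bm{n}\|_p^p)^{q-1}}.
\]
The second H\"older identity $q/p = q-1$ rewrites the denominator as $(1+\|\bm{n}\|_p^p)^{q/p}$, and combined with $g_p(\bm{\mu}) = (1+\|\bm{n}\|_p^p)^{-1/p}$ this delivers the clean identity $K\,\|\overline{\bm{x}}^+ - \bm{\mu}^+\|_q = (1/n)\sum_i D_{\varphi,Kg_p}(\bm{x}_i:\bm{\mu})$ on $\proj_{\mathcal{S},\varphi,Kg_p}$. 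Since Lemma \ref{ld} together with $\boundary_{\varphi,g} = \emptyset$ forces every right population minimizer into $\proj_{\mathcal{S},\varphi,Kg_p}$, minimizing either side over $\bm{\mu}$ is equivalent and the theorem follows.

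The main obstacle will be the simultaneous dovetailing of the two H\"older identities $q(p-1) = p$ and $q/p = q-1$: both are equivalent to $1/p + 1/q = 1$, but they enter at different stages (the first in the coordinate-wise expansion, the second in identifying the prefactor with $g_p$), and it is exactly the parametrisation $p = 2k/(2k-1)$ that makes $q = 2k$ an even integer so the $q$-norm expansion avoids sign subtleties. Without both identities firing cleanly, one would only recover proportionality to a weighted pseudo-norm with a residual $\bm{n}$-dependent factor.
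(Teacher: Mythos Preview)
Your proposal is correct and follows essentially the same route as the paper's proof: both start from the vector equation (\ref{pdiff}) of Lemma~\ref{lzigen} with $f=Kf_p$, obtain the coordinate-wise horizontal identity and the vertical identity $\overline{\varphi}-\varphi(\bm{\mu})=m/(1+\|\bm{n}\|_p^p)$, and then collapse $\|\overline{\bm{x}}^+-\bm{\mu}^+\|_q^q$ using the H\"older relations together with the evenness of $q=2k$. The only cosmetic differences are that the paper derives the vertical identity by dotting (\ref{pdiff22}) with $\nabla\varphi(\bm{\mu})$ and invoking the Bregman identity $m=\overline{\varphi}-\varphi(\bm{\mu})-(\overline{\bm{x}}-\bm{\mu})^\top\nabla\varphi(\bm{\mu})$, whereas you read it off directly from the orthogonality $\bm{\delta}^+\perp(\bm{n};\|\bm{n}\|_p^p)^\top$, and the paper phrases the exponent bookkeeping via $(p-1)(q-1)=1$ rather than your equivalent $q(p-1)=p$.
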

(Proof in Appendix, Subsection \ref{proof_thh0}) 

Fixing $k=1$ allows
to retrieve the right population minimizers for total Bregman
divergences. Because of their importance, we state their
characterization as a separate corollary. 
\begin{corollary}\label{cororth2}
Consider $g=K g_\bot$ for some constant $K>0$. The following holds true:
\begin{eqnarray}
\bm{\delta}^+ & \bot & \left[\begin{array}{c}
\nabla\varphi(\bm{\mu})\\
\|\nabla
\varphi(\bm{\mu})\|_2^2
\end{array}
\right]\:\:, \forall \bm{\mu} \in
\proj_{{\mathcal{S}},\varphi,g_\bot}\:\:;\label{orthmuu1}
\end{eqnarray}
\textit{i.e.}, the orthogonal projection of $(\overline{\bm{x}},
\overline{\varphi})$ on the tangent hyperplane $T_\varphi(\bm{\mu})$
to $\varphi$ at $\bm{\mu}$ is point $(\bm{\mu},
\varphi(\bm{\mu}))$. Furthermore, assuming $\boundary_{\varphi,g} = \emptyset$, we have:
\begin{eqnarray}
\mathcal{P}(D_{\varphi, K g_\bot};{\mathcal{S}}) & = & \arg\min_{\bm{\mu}} \|\overline{\bm{x}}^+ -
\bm{\mu}^+\|_2\:\:, \label{propp1}
\end{eqnarray}
with $\overline{\bm{x}}^+$ and $\bm{\mu}^+$ defined in (\ref{defxp})
and (\ref{defmup}).
\end{corollary}
Figure \ref{overphifig} (left) displays how to find $\bm{\mu}$
which meets condition (\ref{orthmuu1}). Notice that, by construction,
the right population
minimizer for $D_{\varphi,K \gbot}$ is invariant by rotation of
the axes. Figure \ref{overphifig} (right) depicts the construction of the
population minimizer in a simple 1D case.

\begin{figure}[t]
\centering
\begin{tabular}{cc} 
\epsfig{file=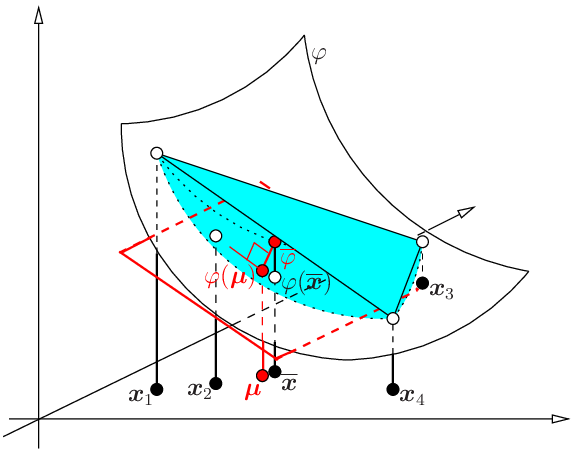, width=0.45\columnwidth} & \epsfig{file=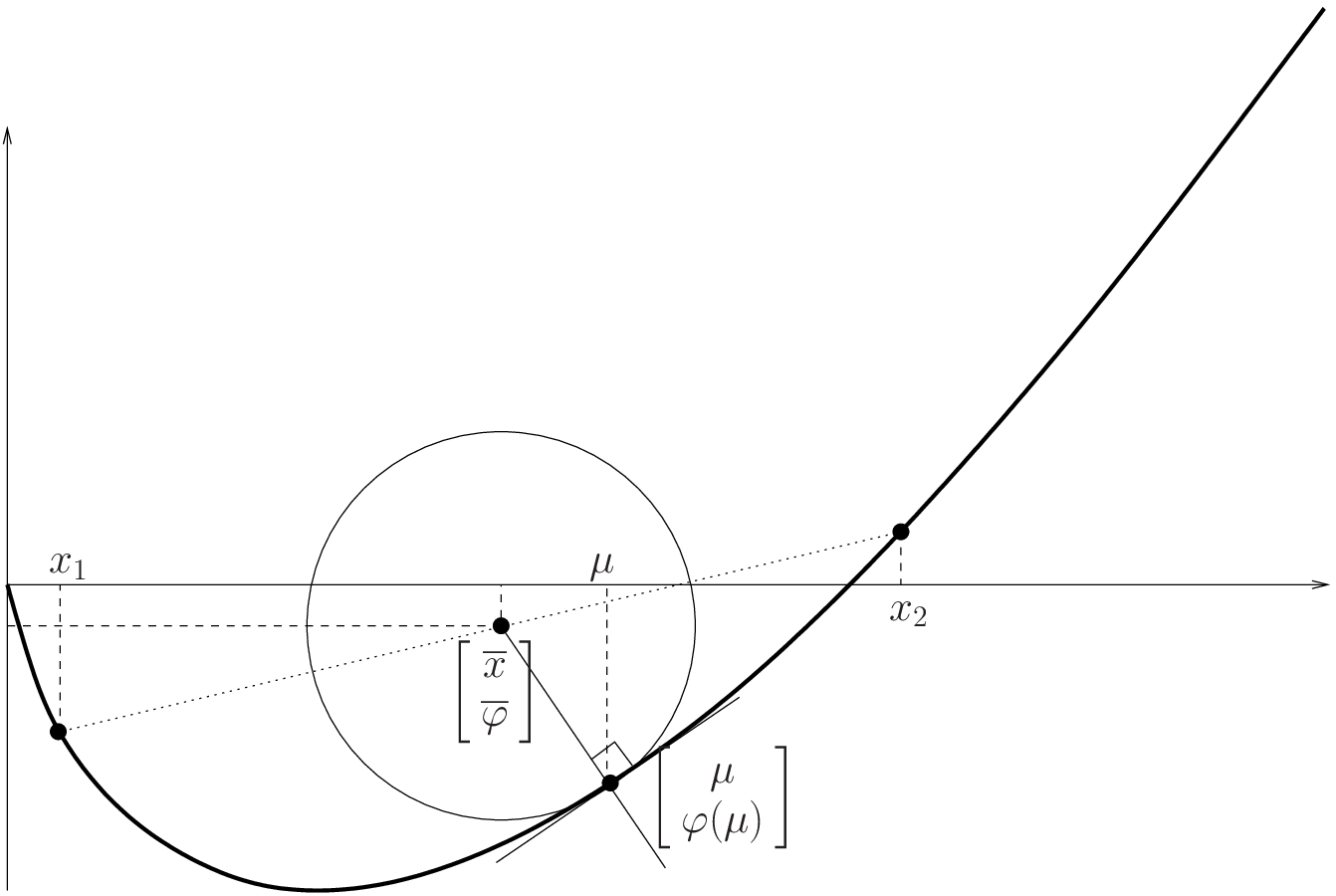, width=0.45\columnwidth}
\end{tabular}
\caption{Left: how to find $\bm{\mu}$ as stated in Corollary \ref{cororth2}:
  the orthogonal projection of point $(\overline{\bm{x}},
  \overline{\varphi})$ on the hyperplane $T_\varphi(\bm{\mu})$ tangent
  to $\varphi$ at $\bm{\mu}$ coincides
  with point $(\bm{\mu}, \varphi(\bm{\mu}))$. The blue region depicts
  the subset of the epigraph of $\varphi$ which is below the image
  by $\varphi$ of the convex
  envelope of ${\mathcal{S}}$. Right: computation of the (unique) population minimizer on a simple
  1D example with $\varphi(x) = x\ln x - x$, following Corollary
  \ref{cororth2}. Remark that $\mu > \overline{x}$ in this case.}\label{overphifig}
\end{figure}

We now study to what extent total Bregman divergences
are exhaustive for the construction of the right population
minimizer depicted in Corollary \ref{cororth2}. It has been shown that
ordinary Bregman divergences are
exhaustive for the expectation as right population minimizer,
\textit{i.e.} if the expectation is the right population minimum of a
loss $D(x:y)$, then under mild conditions this loss is an ordinary Bregman
divergence \cite{afAC,bgwOT}. It turns out that total Bregman
divergence are also exhaustive for their right population
minimizer. For the sake of simplicity, we are going to show the result
in the one-dimensional setting ($d=1$). For this objective, we let:
\begin{eqnarray}
\lefteqn{\proj_{{\mathcal{S}},\varphi}}\nonumber\\
 & \defeq & \left\{\mu \in
  {\mathbb{R}}: 
\left[
\begin{array}{c}
\overline{x} -
\mu\\
\overline{\varphi} -
\varphi(\mu)
\end{array}
\right] \bot 
\left[
\begin{array}{c}
1\\
\varphi'(\mu)
\end{array}
\right]
\right\}\:\:.\label{defcorr2}
\end{eqnarray}
When $\mu \neq \overline{x}$, the condition is equivalent to
\begin{eqnarray}
\tilde{\varphi}'_{\mathcal{S}}(\mu) \varphi'(\mu) & = & -1\:\:,\label{defcorr33}
\end{eqnarray}
with
\begin{eqnarray}
\tilde{\varphi}'_{\mathcal{S}}(z) & \defeq & \frac{\overline{\varphi} - \varphi(z)}{\overline{x} -
    z}\:\:, \forall z \in \mathrm{dom}(\varphi)\backslash
  \{\overline{x}\}\:\:,\label{defyz}
\end{eqnarray}

\begin{theorem}\label{thex}
Let $D :
{\mathbb{R}}\times {\mathbb{R}} \rightarrow {\mathbb{R}}$ be a
function differentiable such that $D(x:y)$ is twice continuously
differentiable in $x$, and satisfies the following assumptions: (i)
$D(x:x) = 0,\forall x$, (ii) $D(x:y) > 0, \forall y\neq x$, (iii) $D$
is invariant by rotation of the axes, (iv) the
right population minimizer of $D$ on ${\mathcal{S}} = \{x_1, x_2, ...,
x_n\}$ is unique and satisfies
$\{\mu\} \in \proj_{{\mathcal{S}},\varphi}$ for some strictly convex twice
differentiable $\varphi$. Then 
\begin{eqnarray}
D(x:y) & = & D_{\varphi, K \gbot}(x:y)\:\:,\label{pvphi}
\end{eqnarray}
where $K>0$ is a constant and
$\gbot$ is defined in eq. (\ref{refg2}).
\end{theorem}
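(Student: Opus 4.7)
The plan is to first extract from the hypotheses a pointwise differential constraint on $D$, then use rotation invariance to pin down its scale, and finally integrate back. Write $G(x,y) \defeq (x-y) + (\varphi(x)-\varphi(y))\varphi'(y)$ for the quantity underlying (\ref{defcorr33}). For Step~1, I apply the assumptions to two-point samples $\mathcal{S} = \{x_1, x_2\}$: hypothesis (iv) furnishes a unique minimizer $\mu$ satisfying both first-order stationarity $\partial_y D(x_1{:}\mu) + \partial_y D(x_2{:}\mu) = 0$ and the orthogonality $G(x_1,\mu) + G(x_2,\mu) = 0$. Setting $\alpha(x,y) \defeq \partial_y D(x{:}y)/G(x,y)$ wherever $G(x,y) \neq 0$, the two identities combine to give $(\alpha(x_1,\mu) - \alpha(x_2,\mu))\,G(x_1,\mu) = 0$, hence $\alpha(x_1,\mu) = \alpha(x_2,\mu)$. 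Since $G(\cdot,\mu)$ is locally invertible (its derivative $1 + \varphi'(x)\varphi'(\mu)$ is generically nonzero), varying $x_1$ sweeps $x_2$ continuously through an interval, forcing $\alpha(\cdot,\mu)$ to be constant. This establishes
\begin{eqnarray}
\partial_y D(x{:}y) & = & c(y)\, G(x,y)\nonumber
\end{eqnarray}
for some function $c$ of $y$ alone.

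For Step~2, a direct differentiation yields the identity
\begin{eqnarray}
\partial_y\bigl[g_\bot(y)\, D_\varphi(x{:}y)\bigr] & = & -\frac{\varphi''(y)}{(1+\varphi'(y)^2)^{3/2}}\, G(x,y),\nonumber
\end{eqnarray}
after checking that $(x-y)(1+\varphi'(y)^2) + \varphi'(y)\, D_\varphi(x{:}y) = G(x,y)$. Thus the candidate $D_{\varphi,Kg_\bot}(x{:}y) = Kg_\bot(y) D_\varphi(x{:}y)$ satisfies Step~1's identity with $c_K(y) = -K\varphi''(y)(1+\varphi'(y)^2)^{-3/2}$, which is $-K$ times the signed curvature of the graph of $\varphi$ at $(y,\varphi(y))$. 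To see our $c$ must take this form, I apply Step~1 in a rotated frame: by hypothesis (iii), $D$ expressed in rotated coordinates satisfies $\partial_{\tilde y}\tilde D(\tilde x{:}\tilde y) = \tilde c(\tilde y)\tilde G(\tilde x,\tilde y)$ relative to the rotated generator $\tilde\varphi$, since the orthogonality condition characterizing $\proj_{\mathcal{S},\varphi}$ is geometric and rotation preserves it. Combining with the classical formulas for how $\tilde\varphi',\tilde\varphi''$ and the coordinates transform under rotation, and matching the two expressions for $\partial_y D$ at corresponding points, forces $c(y)$ to be a constant multiple of $\varphi''(y)(1+\varphi'(y)^2)^{-3/2}$.

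For Step~3, I integrate $\partial_y D(x{:}y) = c(y)\, G(x,y)$ starting from $y = x$ (where $D(x{:}x) = 0$ by hypothesis (i)); recognizing the antiderivative via the Step~2 identity gives $D(x{:}y) = K g_\bot(y) D_\varphi(x{:}y) = D_{\varphi,Kg_\bot}(x{:}y)$. Positivity $D(x{:}y) > 0$ off the diagonal (hypothesis (ii)), combined with $D_\varphi(x{:}y) > 0$ (strict convexity) and $g_\bot > 0$, forces $K > 0$.

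The chief obstacle is Step~2: carrying out the rotation-invariance argument in coordinates. Geometrically it is transparent that $c$ must reduce to the curvature, since the signed curvature is the canonical local rotationally-invariant scalar attached to a planar graph; but verifying this formally requires careful bookkeeping of the transformation laws for $\varphi,\varphi',\varphi''$ and the coordinates $(x,y)$ under rotation, and ruling out any $y$-dependent residual freedom beyond the overall constant $K$. A secondary subtlety is ensuring that the one-parameter family of compatible pairs $(x_1,x_2)$ in Step~1 actually traces out a full interval for every $\mu$ in the interior of the domain, so that $\alpha(\cdot,\mu)$ being constant on this sweep indeed pins down $c$ pointwise.
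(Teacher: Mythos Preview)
Your route is genuinely different from the paper's, and the two obstacles you flag are not mere bookkeeping: they are exactly where your argument is incomplete.

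The paper does not attempt a direct differential characterization. Instead it exploits assumption (iii) to \emph{rotate} the plane (carrying the graph of $\varphi$) through the specific angle that sends the population minimizer to the arithmetic mean of the rotated sample. In those coordinates the minimizer is the average, so the cited mean-characterization of Bregman divergences applies and identifies $D$ in the rotated frame with $D_{\phi^{\mathrm{rot}}}$ for some convex $\phi^{\mathrm{rot}}$. Explicitly un-rotating yields $D(x:\mu)=D_{\phi,g}(x:\mu)$ with $g(\mu)=\sqrt{1+\varphi'(\mu)^2}/(1+\phi'(\mu)\varphi'(\mu))$, and a final invariance argument on the average distortion forces $\phi=\varphi+\mathrm{const}$, hence $g=Kg_\bot$. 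The heavy lifting is thus outsourced to the Bregman-mean theorem rather than to any curvature computation.

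On your Step~1: from a two-point sample you only obtain $\alpha(x_1,\mu)=\alpha(x_2,\mu)$ for pairs $(x_1,x_2)$ whose $D$-minimizer happens to be $\mu$. These pairs lie on opposite sides of $\mu$ and are matched by some involution determined by $D$ and $\varphi$; invariance under an involution does \emph{not} force constancy (any function even under that involution survives). Larger samples give the same pairwise equalities for compatible tuples and do not obviously break the symmetry. You need an additional mechanism --- e.g.\ weighted or repeated-point samples to shift the involution, or differentiation of the minimizer map --- to conclude $\alpha(\cdot,\mu)$ is constant, and that mechanism is absent.

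On your Step~2: matching $\partial_y D$ across frames gives, after the chain rule and the transformation law $\tilde G(\tilde x,\tilde y)=G(x,y)/(\cos\theta-\varphi'(y)\sin\theta)$, only the identity $c(y)=\tilde c(\tilde y)$. This says $c$ is a well-defined function on the curve --- which is no constraint whatsoever, since \emph{any} function of arclength satisfies it. The heuristic ``curvature is the unique local rotation invariant'' would apply only if you already knew $c(y)$ to be a differential expression in $\varphi$ at $y$; nothing in your argument establishes that locality. This is precisely the step the paper avoids by reducing to the Bregman-mean characterization.
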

(proof in Subsection \ref{subright})

\subsection{Case $v$ arbitrary}

We now focus on general $v$-conformal divergences with $\struct{u}{v}{\varphi}$ a
geometric structure. In order not to laden this Section and its notations, we make the simplifying assumption
that $\boundary_{\varphi,g} = \emptyset$. This is not restrictive: even in the multidimensional extension of the total Bregman
divergences of Table \ref{overphit}, the cardinal of
$\boundary_{\varphi,g} = \emptyset$ would be at most one, so the main
structural and algorithmic issues to characterize the right population
minimizers essentially lie in the characterization of $\proj_{{\mathcal{S}},\varphi}$.
Define from ${\mathcal{S}}$ the following averages:
\begin{eqnarray}
\overline{\varphi}_v & \defeq & \frac{1}{n}\sum_i
  \varphi(v(\bm{x}_i))\label{defovarphi2}\\
 & = & \frac{1}{n}\sum_i
  u(\bm{x}_i)^\top v(\bm{x}_i) - \overline{\varphi^\star}_u\:\:;\nonumber\\
\overline{\bm{x}}_v & \defeq & \frac{1}{n}\sum_i
  v(\bm{x}_i)\:\:. \label{defox2}
\end{eqnarray}
We first state a
generalization of Theorem \ref{thh0} to arbitrary
$\struct{u}{v}{\varphi}$-geometric structures.
\begin{theorem}\label{thh0uv}
Let $\struct{u}{v}{\varphi}$ be a geometric structure. Pick $g = K g_p^u
(\bm{\mu})$ with $g_p^u
(\bm{\mu}) \defeq f_p(u(\bm{\mu}))$, $f_p$ is defined in
(\ref{deffp}), $K>0$ is a constant, and $p = 2k/(2k-1)$ with $k \in
{\mathbb{N}}_*$. Then, the right population
minimizer(s) for the $v$-conformal divergence $D^v_{\varphi,K g^u_p}$ on ${\mathcal{S}}$ match the set:
\begin{eqnarray}
\mathcal{P}(D^v_{\varphi,K g^u_p};{\mathcal{S}}) & = & \arg\min_{\bm{\mu}} \|\overline{\bm{x}}^+_v - \bm{\mu}^+_v\|_q\:\:,
\end{eqnarray}
with $\overline{\bm{x}}^+_v$ and $\bm{\mu}^+_v$ defined in (\ref{defxpu})
and (\ref{defmupu}), and $q  = 2k \in {\mathbb{N}}$ is the H\"older conjugate of
$p$.
\end{theorem}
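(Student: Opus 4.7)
The plan is to follow the blueprint of Theorem \ref{th2phi} but in the $p$-norm regime of Theorem \ref{thh0}. The starting point is Theorem \ref{zigenphi} applied to $g(\bm{x}) = Kf_p(u(\bm{x}))$; the resulting orthogonality $\bm{\delta}^+_v \bot \bm{z}^+_u$ must hold for every valid direction $\bm{z}$. Inserting the Bregman identity $\overline{\varphi}_v - \varphi(v(\bm{\mu})) - (\overline{\bm{x}}_v - v(\bm{\mu}))^\top u(\bm{\mu}) = m$, where $m \defeq (1/n)\sum_i D_\varphi(v(\bm{x}_i):v(\bm{\mu}))$ (which follows from $u = \nabla\varphi \circ v$ and (\ref{defbreg})), the orthogonality collapses into the single vector equation
\[
\overline{\bm{x}}_v - v(\bm{\mu}) \;=\; \frac{m}{f(u(\bm{\mu}))}\,\nabla f(u(\bm{\mu}))\:\:,
\]
which is the $(u,v)$-analog of eq. (\ref{pdiff}) from Lemma \ref{lzigen}.

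Next I would specialize $f = Kf_p$ and unpack componentwise. Writing $\alpha \defeq \|u(\bm{\mu})\|_p^p$ and $s_i \defeq \mathrm{sign}(u^i(\bm{\mu}))$, a direct computation gives $\partial_i f_p(\bm{w}) = -|w^i|^{p-1}\mathrm{sign}(w^i)(1+\|\bm{w}\|_p^p)^{-(p+1)/p}$, so the vector identity reads $\overline{x}_v^i - v^i(\bm{\mu}) = -m|u^i(\bm{\mu})|^{p-1}s_i/(1+\alpha)$. Using the Hölder identity $q(p-1)=p$, this yields $\|\overline{\bm{x}}_v - v(\bm{\mu})\|_q^q = m^q\alpha/(1+\alpha)^q$. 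Contracting the same identity against $u(\bm{\mu})$ gives $(\overline{\bm{x}}_v - v(\bm{\mu}))^\top u(\bm{\mu}) = -m\alpha/(1+\alpha)$, whence by the Bregman identity $\overline{\varphi}_v - \varphi(v(\bm{\mu})) = m/(1+\alpha)$ and hence $|\overline{\varphi}_v - \varphi(v(\bm{\mu}))|^q = m^q/(1+\alpha)^q$. Summing the two contributions and applying $q-1 = q/p$ collapses the total into $\|\overline{\bm{x}}^+_v - \bm{\mu}^+_v\|_q^q = m^q/(1+\alpha)^{q/p}$; extracting the $q$-th root yields
\[
\|\overline{\bm{x}}^+_v - \bm{\mu}^+_v\|_q \;=\; m \cdot f_p(u(\bm{\mu})) \;=\; \frac{1}{K}\cdot \frac{1}{n}\sum_i D^v_{\varphi, K g^u_p}(\bm{x}_i:\bm{\mu})\:\:.
\]
Since $K>0$ is a constant independent of $\bm{\mu}$, the sets of minimizers of the two sides coincide, which is the claim.

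The hardest part is the algebraic bookkeeping around the Hölder conjugacy: the identities $q(p-1)=p$ and $q-1=q/p$ must be deployed at precisely the right steps so that the $(1+\alpha)$ prefactors telescope into the clean $(1+\alpha)^{q/p} = 1/f_p(u(\bm{\mu}))^q$ denominator. The integrality constraint $q = 2k$ keeps $\|\cdot\|_q^q$ a polynomial in the components, so the absolute-value and sign manipulations above are unambiguous and avoid any smoothness subtleties that would otherwise appear for general real $q$. Beyond this, no new conceptual ingredient is required: the argument is a direct transcription in $(u,v)$-coordinates of the machinery already set up in Theorems \ref{thh0}, \ref{zigenphi}, and \ref{th2phi}.
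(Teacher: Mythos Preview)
Your proposal is correct and follows exactly the route the paper indicates: it combines the $(u,v)$-machinery of Theorems \ref{zigenphi} and \ref{th2phi} with the $p$-norm algebra of Theorem \ref{thh0}. The paper omits the proof entirely, stating only that it ``combines the arguments of the proofs of Theorems \ref{thh0} and \ref{th2phi},'' and your write-up is precisely such a combination --- in fact your treatment of the H\"older bookkeeping (working directly with $|(\overline{\bm{x}}_v - v(\bm{\mu}))^i|^q$ rather than the paper's $((\overline{\bm{x}}-\bm{\mu})^j)^{q-1}$ trick in eq.~(\ref{propxm})) is slightly cleaner than the corresponding passage in the proof of Theorem \ref{thh0}.
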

(Proof omitted) We also provide the following generalization of
Corollary \ref{cororth2}, which stands as a Corollary to Theorem \ref{thh0uv}.
\begin{corollary}\label{th2phi}
Let $\struct{u}{v}{\varphi}$ be a geometric structure. Pick
$g(\bm{\mu})=K g^u_\bot(\bm{\mu}) \defeq K f_\bot(u(\bm{\mu}))$ for
any constant $K>0$. Any right population minimizer $\bm{\mu}$ for the $v$-conformal
divergence $D^v_{\varphi,K g^u_\bot}$ satisfies:
\begin{eqnarray}
\left[
\begin{array}{c}
\overline{\bm{x}}_v-v(\bm{\mu})\\
\overline{\varphi}_v - \varphi(v(\bm{\mu}))
\end{array}
\right]   & \bot & \left[\begin{array}{c}
u(\bm{\mu})\\
\|u(\bm{\mu})\|_2^2
\end{array}
\right]\:\:. \label{orthmuu2}
\end{eqnarray}
Furthermore, we have:
\begin{eqnarray}
\mathcal{P}(D^v_{\varphi, K g^u_\bot};{\mathcal{S}}) & = & \arg\min_{\bm{\mu}} \|\overline{\bm{x}}_v^+ -
\bm{\mu}_v^+\|_2\:\:, \label{proppu1}
\end{eqnarray}
with $\overline{\bm{x}}^+_v$ and $\bm{\mu}^+_v$ defined as follows:
\begin{eqnarray}
\overline{\bm{x}}^+_v \hspace{-0.40cm} & \defeq & \hspace{-0.40cm} \left[
\begin{array}{c}
\overline{\bm{x}}_v \\
\overline{\varphi}_v 
\end{array}
\right] = \left[
\begin{array}{c}
\overline{\bm{x}}_v \\
\frac{1}{n}\sum_i
  u(\bm{x}_i)^\top v(\bm{x}_i) - \overline{\varphi^\star}_u
\end{array}
\right] \label{defxpu}\:\:,\\
\bm{\mu}^+_v \hspace{-0.40cm} & \defeq & \hspace{-0.40cm} \left[
\begin{array}{c}
\hspace{-0.2cm} v(\bm{\mu}) \hspace{-0.2cm}\\
\hspace{-0.2cm}\varphi(v(\bm{\mu})) \hspace{-0.2cm}
\end{array}\hspace{-0.2cm}
\right] \hspace{-0.1cm}= \hspace{-0.1cm}\left[
\begin{array}{c}
v(\bm{\mu})\\
u(\bm{\mu})^\top v(\bm{\mu}) - \varphi^\star(u(\bm{\mu}))
\end{array}
\right] \label{defmupu}\:\:.
\end{eqnarray}
\end{corollary}
(Proof omitted) Finally, we provide a general characterization of the population
minimizers for a general $g = f \circ u$. This is a generalization of
the orthogonality property in (\ref{orthmuu2}), which is interesting
since $\bm{\delta}^+$ is formulated in the $v$
coordinate mapping while $\bm{z}^+$ is formulated in the $u$
coordinate mapping.

\begin{theorem}\label{zigenphi}
Let $\struct{u}{v}{\varphi}$ be a geometric structure. Suppose $g(\bm{x}) = f(u(\bm{x}))$, with $f$ differentiable. For any ${\mathcal{S}}$
and any $\bm{\mu}, \bm{z} \in {\mathbb{R}}^d$, define $\bm{\delta}^+_v,
\bm{z}^+_u \in {\mathbb{R}}^{d+1}$ with:
\begin{eqnarray}
\bm{\delta}^+_v \hspace{-0.3cm} & \defeq &  \hspace{-0.3cm} \left[
\begin{array}{c}
\hspace{-0.2cm}\overline{\bm{x}}_v-v(\bm{\mu}) \hspace{-0.2cm}\\
\hspace{-0.2cm}\overline{\varphi}_v - \varphi(v(\bm{\mu})) \hspace{-0.2cm}
\end{array}\hspace{-0.2cm}
\right] = \underbrace{\left[
\hspace{-0.2cm}\begin{array}{c}
\overline{\bm{x}}_v\\
\overline{\varphi}_v
\end{array}\hspace{-0.2cm}
\right]}_{\bm{x}^+_v} - \underbrace{\left[
\hspace{-0.2cm}\begin{array}{c}
v(\bm{\mu})\\
\varphi(v(\bm{\mu}))
\end{array}\hspace{-0.2cm}
\right]}_{\bm{\mu}^+_v}\:\:, \label{defdeltav}\\
\bm{z}^+_u \hspace{-0.3cm} & \defeq & \hspace{-0.3cm} \left[
\hspace{-0.2cm}\begin{array}{c}
f(u(\bm{\mu})) \times \bm{z} + \nabla f(u(\bm{\mu}))^\top
\bm{z} \times u(\bm{\mu}) \\
- \nabla f(u(\bm{\mu}))^\top \bm{z}
\end{array}\hspace{-0.2cm}
\right] \:\:.\label{defzv}
\end{eqnarray}
Then any right population minimizer $\bm{\mu}$ for the $v$-conformal
divergence $D^v_{\varphi,g}$ satisfies $\bm{\delta}^+_v \bot
\bm{z}^+_u$, for any valid direction $\bm{z}$.
\end{theorem}
(Proof omitted)

\section{Robustness of the population minimizers}\label{sro}

%% INsist for left that epsilon has to be small in the definition.

Suppose we add an outlier element $\bm{x}_*$ with small weight $0<\epsilon<1$
to $\mathcal{S}$. The population minimizer (left or right) of $\mathcal{S}$,
$\bm{\mu}$, eventually drifts to a new population minimizer
$\bm{\mu}_* =  \bm{\mu} + \epsilon \bm{\delta}_{\bm{\mu}}$ of ${\mathcal{S}} \cup
\{\bm{x}_*\}$. $\bm{\delta}_{\bm{\mu}}$ is called the influence function of
$\bm{x}_*$ \cite{lTB}. A population minimizer is robust to outliers iff the magnitude of
$\bm{\delta}_{\bm{\mu}}$ is bounded, as explained in the following
definition where $0<\tau<1$ is any small constant.

\begin{definition}\label{drob}
The population minimizer of some divergence $D$ is 
robust to outliers when, for any outlier $\bm{x}_*$ and any weight $0<\epsilon<1-\tau$, $\|\bm{\delta}_{\bm{\mu}}\|_2 \leq C$,
where $C$ does not depend
upon $\bm{x}_*$ nor $\epsilon$.
\end{definition}

Robustness according to Definition
\ref{drob} is stronger than in the model of
\cite{lTB,vlanTB} as our robustness strictly implies theirs (which
relies on very small weights $\epsilon$). So the Lemma to follow is
a twofolds generalization of the results of \cite{lTB,vlanTB}, not only
from the standpoint of the divergences, but also from the model's.

\begin{lemma}\label{robl}
Let $\struct{u}{v}{\varphi}$ be a geometric structure.
Suppose the following assumptions are verified: (i) $g(\bm{x}) =
O(1), \forall \bm{x}$, (ii) $\|u(\bm{x})\|_2
= O(1/g(\bm{x})), \forall {\bm{x}}$, (iii) the minimal eigenvalue of
$J_u^\top J_u$ is $\lambda>0$, $J_u$ being the Jacobian of $u$. Then under assumptions (i-iii), the left population minimizer of $v$-conformal divergence $D^v_{\varphi, g}$ is robust
to outliers.
\end{lemma}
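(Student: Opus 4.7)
The plan is to exploit the closed-form expression for the left population minimizer from Lemma \ref{lleft1}: in $u$-coordinates the minimizer is an affine function of the data, so I would carry out the perturbation analysis directly there, then transfer the bound back to the native coordinates using assumption (iii).

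With $S \defeq \sum_i g(\bm{x}_i)$, Lemma \ref{lleft1} yields $u(\bm{\mu}) = (1/S) \sum_i g(\bm{x}_i) u(\bm{x}_i)$, so the left minimizer of ${\mathcal{S}} \cup \{\bm{x}_*\}$ with outlier weight $\epsilon$ satisfies $u(\bm{\mu}_*) = (Su(\bm{\mu}) + \epsilon g(\bm{x}_*) u(\bm{x}_*))/(S + \epsilon g(\bm{x}_*))$. A short algebraic simplification will give the identity
\begin{eqnarray*}
u(\bm{\mu}_*) - u(\bm{\mu}) & = & \frac{\epsilon\, g(\bm{x}_*)}{S + \epsilon\, g(\bm{x}_*)}\,(u(\bm{x}_*) - u(\bm{\mu}))\:\:,
\end{eqnarray*}
which is the algebraic heart of the argument. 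Assumption (i) gives $g(\bm{x}_*) = O(1)$ and assumption (ii) gives $g(\bm{x}_*)\|u(\bm{x}_*)\|_2 = O(1)$, both uniformly in $\bm{x}_*$; since $S>0$ is fixed by ${\mathcal{S}}$, the triangle inequality will yield $\|u(\bm{\mu}_*) - u(\bm{\mu})\|_2 \leq (\epsilon/S)\,(g(\bm{x}_*)\|u(\bm{x}_*)\|_2 + g(\bm{x}_*)\|u(\bm{\mu})\|_2) = O(\epsilon)$, with a hidden constant depending only on ${\mathcal{S}}$ and on the $O(1)$ bounds, not on $\bm{x}_*$ nor on $\epsilon \in (0, 1-\tau)$.

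Finally, to translate this back to $\bm{\mu}_* - \bm{\mu}$, I would use the fact that $u$ is a $C^1$ bijection, so $u^{-1}$ is $C^1$ with $J_{u^{-1}}(u(\cdot)) = J_u(\cdot)^{-1}$; assumption (iii) then gives $\|J_{u^{-1}}\|_{\mathrm{op}} \leq 1/\sqrt{\lambda}$ everywhere, since $\sigma_{\mathrm{min}}(J_u)^2 = \lambda_{\mathrm{min}}(J_u^\top J_u) \geq \lambda$. Writing $\bm{\mu}_* - \bm{\mu}$ as the line integral of $J_{u^{-1}}$ against $u(\bm{\mu}_*) - u(\bm{\mu})$ and taking norms will produce $\|\bm{\mu}_* - \bm{\mu}\|_2 \leq \|u(\bm{\mu}_*) - u(\bm{\mu})\|_2/\sqrt{\lambda} = O(\epsilon)$, so $\|\bm{\delta}_{\bm{\mu}}\|_2 = \|\bm{\mu}_* - \bm{\mu}\|_2/\epsilon$ is bounded uniformly, matching Definition \ref{drob}. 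I expect the main obstacle to be uniformity in $\bm{x}_*$: the infinitesimal analyses of \cite{lTB,vlanTB} let $\epsilon \to 0$ and can absorb $\bm{x}_*$-dependence into higher-order terms, whereas here assumption (ii), harmless in that infinitesimal setting, becomes essential to prevent adversarial outliers with small $g(\bm{x}_*)$ but enormous $\|u(\bm{x}_*)\|_2$ from producing an unbounded influence function when $\epsilon$ is not infinitesimal.
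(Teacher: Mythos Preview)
Your proposal is correct and follows essentially the same route as the paper: express the left minimizer in $u$-coordinates via Lemma~\ref{lleft1}, compute the perturbation $u(\bm{\mu}_*)-u(\bm{\mu})$ explicitly, bound it using assumptions (i)--(ii), and pull back through $u^{-1}$ using assumption (iii). Two small remarks: the paper reweights the original sample by $(1-\epsilon)$, so its denominator is $\Sigma_*=(1-\epsilon)S+\epsilon g(\bm{x}_*)$ and the constraint $\epsilon<1-\tau$ is what guarantees $\Sigma_*\geq\tau S$ (your simpler convention with denominator $S+\epsilon g(\bm{x}_*)\geq S$ sidesteps this, but you should match the paper's setup); and your line-integral transfer $\|\bm{\mu}_*-\bm{\mu}\|_2\leq\|u(\bm{\mu}_*)-u(\bm{\mu})\|_2/\sqrt{\lambda}$ is in fact cleaner than the paper's vector mean-value Taylor form, which is not literally valid for vector-valued maps.
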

(proof in Subsection \ref{sec-robl}) Lemma \ref{robl} generalizes the robustness of the left population
centers of total Bregman divergences (Theorem III.2 in \cite{vlanTB}),
for which $g=K \gbot, v = \mathrm{Id}, u = \nabla\varphi$ (the Jacobian
of $u$ being the Hessian of $\varphi$, it satisfies assumption (iii)
since $\varphi$ is strictly convex). 

The right population minimizer is unfortunately not robust to outliers
for any $g$ according to Definition \ref{drob}, yet it satisfies in a general setting of $v$-conformal divergences, a
weaker notion of robustness which says that the influence function
must be properly bounded by a divergence between
$\bm{x}_*$ and $\bm{\mu}$, as long as $\bm{x}_*$ does not deviate too
much from $\bm{\mu}$ in the $v$-coordinate mapping. This last notion exploits the fact that convex
function are locally Lipschitz.
\begin{definition}\label{dwrob}
Let $\struct{u}{v}{\varphi}$ be a geometric structure. 
The population minimizer of some $v$-conformal divergence
$D^v_{\varphi, g}$ is 
$K$-weakly robust to outliers when for any outlier $\bm{x}_*$ and any
weight $0<\epsilon<1$:
\begin{eqnarray}
\lefteqn{|\varphi(v(\bm{x}_*))-\varphi(v(\bm{\mu}))| \leq L \|v(\bm{x}_*) -
  v(\bm{\mu})\|_2}\nonumber\\
 & \Rightarrow \|\bm{\delta}_{\bm{\mu}}\|_2 \leq K \ell(L) \|\bm{x}_* -
  \bm{\mu}\|_2\:\:, 
\end{eqnarray}
where $K\geq 0$ is not a function of $\bm{x}_*$ or $\epsilon$, and
$\ell(L)$ is a linear function in $L$.
\end{definition}

We now show that the right population minimizer is $K$-weakly robust to
outliers, for a $K$ which depends solely on the coordinate mapping $v$.
We assume in the Lemma that $\bm{0} \in
\mathrm{im} u$, which is a mild assumption as it postulates
in the $\struct{u}{v}{\varphi}$-geometric structure that the gradient
$\nabla_\varphi$ has a root in coordinate mapping $v$. We exploit the
fact that any matrix $A \in
{\mathbb{R}}^{d\times d}$ satisfies $A^\top A \succeq 0$, where
``$\succeq$'' means positive semi-definite.
\begin{lemma}\label{robr}
Let $\struct{u}{v}{\varphi}$ be a geometric structure, and let $f \defeq g\circ
u^{-1}$. We make the following assumptions: (i) $\bm{0} \in
\mathrm{im} u$, (ii) $f(\bm{z}) \neq 0, \forall
\bm{z}$, (iii) the ratio
of the maximal to the minimal eigenvalue of $J_v^\top J_v$, noted
$\lambda_v$, is finite, where $J_v$
is the Jacobian of $v$. Then the right population
minimizer of $v$-conformal divergence $D^v_{\varphi, g}$ is $\sqrt{\lambda_v}$-weakly
robust to outliers.
\end{lemma}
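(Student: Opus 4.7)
The strategy is a standard perturbation analysis: I would derive a first-order expansion of $\bm{\mu}_*$ around $\bm{\mu}$ in $\epsilon$, bound the perturbation in $v$-coordinates using the hypothesized Lipschitz control on $\varphi$, then convert back to the original coordinates using the eigenvalue bounds on $J_v^\top J_v$. The key starting point is the vector identity $\bm{c} = \bm{a} - \bm{b}$ extracted from the proof of Theorem \ref{zigenphi}: its bracketed scalar prefactor collapses exactly to the averaged Bregman divergence since $(\overline{\varphi}_v - \varphi(v(\bm{\mu}))) - (\overline{\bm{x}}_v - v(\bm{\mu}))^\top u(\bm{\mu}) = (1/n)\sum_i D_\varphi(v(\bm{x}_i):v(\bm{\mu}))$, so any right population minimizer of $D^v_{\varphi,g}$ on a weighted sample $\{(\bm{x}_j,w_j)\}_j$ satisfies
\begin{eqnarray}
g(\bm{\mu})\left(\textstyle\sum_j w_j v(\bm{x}_j) - v(\bm{\mu})\right) & = & \left(\textstyle\sum_j w_j D_\varphi(v(\bm{x}_j):v(\bm{\mu}))\right) \nabla f(u(\bm{\mu})) \:\:. \nonumber
\end{eqnarray}
Setting $\Phi(\bm{\mu}) \defeq g(\bm{\mu})(\overline{\bm{x}}_v - v(\bm{\mu})) - \overline{D}(\bm{\mu})\nabla f(u(\bm{\mu}))$ with $\overline{D}(\bm{\mu}) \defeq (1/n)\sum_i D_\varphi(v(\bm{x}_i):v(\bm{\mu}))$, we have $\Phi(\bm{\mu}) = \bm{0}$, while the identity on the $\epsilon$-perturbed sample yields
\begin{eqnarray}
(1-\epsilon)\Phi(\bm{\mu}_*) & = & \epsilon\left[D_\varphi(v(\bm{x}_*):v(\bm{\mu}_*))\nabla f(u(\bm{\mu}_*)) - g(\bm{\mu}_*)(v(\bm{x}_*) - v(\bm{\mu}_*))\right] \:\:. \nonumber
\end{eqnarray}
A first-order Taylor expansion of the $C^1$ map $\Phi$ around $\bm{\mu}$ (assumption (ii) keeps $g$, hence the whole expression, away from degeneracies) then gives $J_\Phi(\bm{\mu})\bm{\delta}_{\bm{\mu}} = D_\varphi(v(\bm{x}_*):v(\bm{\mu}))\nabla f(u(\bm{\mu})) - g(\bm{\mu})(v(\bm{x}_*) - v(\bm{\mu})) + O(\epsilon)$, with $J_\Phi(\bm{\mu})$ depending only on $\mathcal{S}$.

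\emph{Bounding and coordinate conversion.} Next I would control the two right-hand terms by $\|v(\bm{x}_*) - v(\bm{\mu})\|_2$. The second is immediate; the first uses the Lipschitz hypothesis together with Cauchy--Schwarz on the affine remainder:
\begin{eqnarray}
D_\varphi(v(\bm{x}_*):v(\bm{\mu})) & \leq & (L + \|u(\bm{\mu})\|_2)\,\|v(\bm{x}_*) - v(\bm{\mu})\|_2 \:\:, \nonumber
\end{eqnarray}
which is linear in $L$, the remaining constants depending only on $\mathcal{S}$. This gives $\|\bm{\delta}_{\bm{\mu}}\|_2 \leq C\,\ell(L)\,\|v(\bm{x}_*) - v(\bm{\mu})\|_2$ for some $\mathcal{S}$-dependent constant $C$ gathering $\|J_\Phi^{-1}(\bm{\mu})\|_{op}$, $g(\bm{\mu})$ and $\|\nabla f(u(\bm{\mu}))\|_2$. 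Finally, assumption (iii) enters via the coordinate conversion: the mean value theorem applied componentwise to the $C^1$ map $v$ along the segment joining $\bm{\mu}$ and $\bm{x}_*$ yields $\|v(\bm{x}_*) - v(\bm{\mu})\|_2 \leq \sqrt{\lambda_{\max}(J_v^\top J_v)}\,\|\bm{x}_* - \bm{\mu}\|_2$; symmetrically, $\|\bm{\mu}_* - \bm{\mu}\|_2 \leq (1/\sqrt{\lambda_{\min}(J_v^\top J_v)})\,\|v(\bm{\mu}_*) - v(\bm{\mu})\|_2$, so the ratio $\sqrt{\lambda_{\max}/\lambda_{\min}} = \sqrt{\lambda_v}$ emerges as prefactor. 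Absorbing the $\mathcal{S}$-dependent constant $C$ into the linear function $\ell$ completes $\|\bm{\delta}_{\bm{\mu}}\|_2 \leq \sqrt{\lambda_v}\,\ell(L)\,\|\bm{x}_* - \bm{\mu}\|_2$.

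\emph{Main obstacle.} The core difficulty is the Bregman term $D_\varphi(v(\bm{x}_*):v(\bm{\mu}))$, which for a faraway outlier can be arbitrarily large because $\varphi$ is only strictly convex and not globally Lipschitz; this is precisely what precludes full robustness in the sense of Definition \ref{drob}. The weakened Definition \ref{dwrob} replaces that global control by a local Lipschitz parameter $L$, and combined with Cauchy--Schwarz on the affine correction this tames the obstruction at the cost of a bound linear in $L$. Assumption (i), $\bm{0} \in \mathrm{im}\,u$, plays the role of a compatibility condition on the $(u,v)$-structure ensuring the analysis is carried out near a genuine critical point of $\varphi$, and does not enter quantitatively in the final bound.
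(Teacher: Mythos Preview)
Your route is genuinely different from the paper's. You work in the base $d$-dimensional space via an implicit-function-theorem style linearization of the stationarity map $\Phi$, whereas the paper lifts everything to ${\mathbb{R}}^{d+1}$ and exploits the orthogonality condition $\bm{\delta}^+_v \perp \bm{z}^+_u$ of Theorem~\ref{zigenphi} directly. Concretely, the paper Taylor-expands $v(\bm{\mu}+\epsilon\bm{\delta}_{\bm{\mu}})$ and $\varphi(v(\bm{\mu}+\epsilon\bm{\delta}_{\bm{\mu}}))$, subtracts the orthogonality for the unperturbed sample, and obtains the scalar identity $(\bm{\delta}^+_{\bm{\mu}})^\top \bm{z}^+_u = (\bm{x}^+_{v,*}-\bm{\mu}^+_v)^\top \bm{z}^+_u$ for every valid $\bm{z}$. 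It then \emph{chooses} $\bm{z}$ so that the first $d$ coordinates of $\bm{z}^+_u$ equal $J_v\bm{\delta}_{\bm{\mu}}$, which forces $\cos^2(\bm{\delta}^+_{\bm{\mu}},\bm{z}^+_u)\geq 1/2$ and hence $\|\bm{\delta}^+_{\bm{\mu}}\|_2^2\leq 2\|\bm{x}^+_{v,*}-\bm{\mu}^+_v\|_2^2$. The constant $\sqrt{\lambda_v}$ then drops out cleanly: $\lambda_{\max}$ from bounding $\|v(\bm{x}_*)-v(\bm{\mu})\|_2^2$ above, and $\lambda_{\min}$ from bounding $\|\bm{\delta}^+_{\bm{\mu}}\|_2^2\geq \bm{\delta}_{\bm{\mu}}^\top J_v^\top J_v\bm{\delta}_{\bm{\mu}}$ below. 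No matrix inversion is needed.

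Your argument has two soft spots. First, you invoke $\|J_\Phi^{-1}(\bm{\mu})\|_{op}$ without establishing that $J_\Phi(\bm{\mu})$ is invertible; nothing in assumptions (i)--(iii) guarantees this (the Jacobian of $\Phi$ mixes $\nabla g$, $J_v$, $\mathrm{H}\varphi$, $\mathrm{H}_f$ and $J_u$ in a rank-one-plus-full-rank pattern that can degenerate). The paper sidesteps this entirely by never inverting anything: the freedom in $\bm{z}$ replaces the need for a nondegenerate Jacobian. Second, your extraction of $\sqrt{\lambda_v}$ is not coherent as written. Your chain already produces $\|\bm{\delta}_{\bm{\mu}}\|_2\leq C\,\ell(L)\,\|v(\bm{x}_*)-v(\bm{\mu})\|_2$ in the \emph{original} coordinates, so only one conversion $\|v(\bm{x}_*)-v(\bm{\mu})\|_2\leq\sqrt{\lambda_{\max}}\,\|\bm{x}_*-\bm{\mu}\|_2$ is ever used; the ``symmetrically'' inequality involving $\lambda_{\min}$ is stated but never connected to the bound, so the ratio $\lambda_{\max}/\lambda_{\min}$ does not actually appear. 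You can of course rewrite $C\sqrt{\lambda_{\max}}=\sqrt{\lambda_v}\cdot C\sqrt{\lambda_{\min}}$ and absorb $C\sqrt{\lambda_{\min}}$ into $\ell$, but that is cosmetic; in the paper's argument $\sqrt{\lambda_v}$ is the genuine geometric content, arising because both the influence vector and the outlier displacement pass through $J_v$ in the lifted space.
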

(Proof in Subsection \ref{subrobr})

%%START FROM HERE 

\section{Discussion}\label{sdis}

In this Section, we discuss several aspects of population minimizers
in the setting of conformal divergences; in particular, we discuss further the geometric structure relation,
the approximation of the right population minimizers in the 1D
setting, the existence of symmetric conformal divergences,
and the uniqueness of the right population minimizer. 

\noindent\hspace{0.5cm}\textbf{The nature of the $\struct{u}{v}{\varphi}$-geometric
  structure relation} ---
The $\struct{u}{v}{\varphi}$-geometric structure has been introduced in
the context of information geometry to provide a way to compute and
analyze the dually flat $(\bm{\eta},\bm{\theta})$ coordinate system
arising \textit{e.g.} in exponential families and ordinary Bregman divergences,
through a single source parameter which is originally a distribution
\cite{aTQ}. To state the key result about the
$\struct{u}{v}{\varphi}$-geometric structure, we consider two strictly
monotonous differentiable functions $u(\xi)$ and $v(\xi)$ with $u(0)=v(0)=0$. 
Consider the positive measures on $\mathbb{R}^{d+1}_+$, and denote by
$m(x,\xi)=\sum_{i=1}^{d+1} \xi_i {{1\!\!1}}_{x=x_i}$ a positive distribution
computed from $\mathcal{S}=\{x_1, x_2, ..., x_{d+1}\}$, where
${1\!\!1}$ is the indicator variable. $\bm{\xi} \defeq [\xi_1 \:\:
\xi_2 \:\:...\:\: \xi_{d+1}]^\top$ defines a coordinate system from
which we may define two coordinate systems $\bm{\eta},\bm{\theta}$ of ${\mathbb{R}}^{d+1}$
with $\theta^i \defeq u(\xi_i)$ and $\eta^i \defeq v(\xi_i)$.
These
coordinate systems have the following interesting
information-geometric properties.

\begin{theorem}\label{thig}\cite{aTQ}
The $\struct{u}{v}{\varphi}$-geometric structure is dually flat, with the following two potential functions:
\begin{eqnarray*}
\psi(\bm{\theta}) & \defeq & \sum \int \int
\frac{v'(u^{-1}(\theta^i))}{u'(u^{-1}(\theta^i))} (\dtheta^i)^2\:\:,\\
\varphi(\bm{\eta}) & \defeq & \sum \int \int
\frac{u'(v^{-1}(\eta^i))}{v'(v^{-1}(\eta^i))} (\deta^i)^2\:\:;
\end{eqnarray*}
the divergence between two $\bm{p}$ and $\bm{q}$ is given by:
\begin{eqnarray*}
D(\bm{p};\bm{q}) & \defeq &
\psi(\bm{\theta}_{\bm{p}})+\varphi(\bm{\eta}_{\bm{q}})-
\bm{\theta}_{\bm{p}} \cdot \bm{\eta}_{\bm{q}} \:\:,
\end{eqnarray*}
and the metric in the $\bm{\theta}$ coordinate system is:
\begin{eqnarray*}
g_{ij}(\bm{\theta}) & \defeq & \frac{v'(\xi_i)}{u'(\xi_i)}\delta_{ij} \:\:.
\end{eqnarray*}
\end{theorem}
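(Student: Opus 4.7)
The plan is to exploit the separable structure of $\psi$ and $\varphi$ in their natural coordinates: each summand involves only one coordinate, so everything reduces to one-variable calculus applied componentwise. First I would differentiate $\psi$ with respect to $\theta^i$ and perform the substitution $s = u(\xi)$, $\mathrm{d}s = u'(\xi)\mathrm{d}\xi$, to obtain
\begin{eqnarray*}
\frac{\partial \psi}{\partial \theta^i} & = & \int_0^{\theta^i} \frac{v'(u^{-1}(s))}{u'(u^{-1}(s))}\mathrm{d}s \\
 & = & \int_0^{\xi_i} v'(\xi)\mathrm{d}\xi \: = \: v(\xi_i) \: = \: \eta^i,
\end{eqnarray*}
using $v(0)=0$ to fix the constant. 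A symmetric calculation yields $\partial \varphi/\partial \eta^i = \theta^i$, so $\bm{\eta} = \nabla \psi(\bm{\theta})$ and $\bm{\theta} = \nabla \varphi(\bm{\eta})$ are mutually inverse gradient maps. Strict monotonicity of $u$ and $v$ in the same sense gives $v'/u' > 0$, hence $\psi''_i(\theta^i) = v'(\xi_i)/u'(\xi_i) > 0$ and $\psi$ is strictly convex; the same holds for $\varphi$. Therefore $(\psi,\varphi)$ is a Legendre pair, and the manifold is dually flat with respect to the coordinate charts $(\bm{\theta},\bm{\eta})$.

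Once Legendre duality is in hand, the divergence formula is immediate from Fenchel--Young: since $\psi(\bm{\theta}_{\bm{q}}) + \varphi(\bm{\eta}_{\bm{q}}) = \bm{\theta}_{\bm{q}} \cdot \bm{\eta}_{\bm{q}}$, the canonical Bregman divergence $D_\psi(\bm{\theta}_{\bm{p}} : \bm{\theta}_{\bm{q}}) = \psi(\bm{\theta}_{\bm{p}}) - \psi(\bm{\theta}_{\bm{q}}) - (\bm{\theta}_{\bm{p}} - \bm{\theta}_{\bm{q}}) \cdot \nabla \psi(\bm{\theta}_{\bm{q}})$ simplifies to $\psi(\bm{\theta}_{\bm{p}}) + \varphi(\bm{\eta}_{\bm{q}}) - \bm{\theta}_{\bm{p}} \cdot \bm{\eta}_{\bm{q}}$, which matches the claim. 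For the metric, I would then use $g_{ij}(\bm{\theta}) = \partial^2 \psi/(\partial \theta^i \partial \theta^j) = \partial \eta^i/\partial \theta^j$; by separability of $\psi$ this matrix is diagonal, and the chain rule applied to $\eta^i = v(u^{-1}(\theta^i))$ gives the $(i,i)$ entry as $v'(\xi_i)/u'(\xi_i)$. Differentiating once more, the third-order tensor $T_{ijk}(\bm{\theta})$ can only be non-zero when $i = j = k$, in which case it equals $\psi'''(\theta^i)$, yielding the claimed $\sum_i \psi'''(\theta^i)\delta_{ijk}$.

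The hard part is not any single step but the bookkeeping associated with the indefinite double integrals in the statement: the normalizations $u(0) = v(0) = 0$ are precisely what is needed to kill the two constants of integration so that $\nabla \psi = \bm{\eta}$ and $\nabla \varphi = \bm{\theta}$ hold without spurious additive terms. Beyond that, the only genuine subtlety is verifying strict convexity of $\psi$ on the relevant chart, which reduces to the sign of $v'/u'$; under the standing assumption that $u$ and $v$ are strictly monotonic in the same sense this is automatic, and the theorem then unfolds as a straightforward componentwise unpacking of the definitions.
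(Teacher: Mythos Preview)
The paper does not provide its own proof of this theorem: it is quoted verbatim from Amari \cite{aTQ} as background material, and the surrounding discussion only illustrates it with the $(\alpha,\beta)$ example. Your argument is a correct and self-contained derivation: the substitution $s = u(\xi)$ to show $\partial\psi/\partial\theta^i = v(\xi_i) = \eta^i$, the Legendre duality and Fenchel--Young identity for the divergence, and the successive differentiations for $g_{ij}$ and $T_{ijk}$ are exactly the right computations, and your remark about the normalizations $u(0)=v(0)=0$ killing the integration constants is precisely the point.
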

One may check that $u \circ v^{-1}$ defines $\nabla \varphi$,
and that $D$ is an ordinary Bregman divergence. One important example
is Amari's $(\alpha, \beta)$ structure ($\alpha, \beta > 0$) for which $u(\xi_i) \defeq
\xi_i^\alpha$ and $v(\xi_i) \defeq \xi_i^\beta$, which helps to see the
usefulness of the $\struct{u}{v}{\varphi}$-geometric structure in the
context of clustering: assuming $\bm{\xi}$ is a source parameter 
recorded in data, one can jointly tune $\alpha, \beta$ to
tune the coordinate system of the divergence without changing its
generator $\varphi$ as long as $\alpha/\beta$ remains a fixed
constant (because $\varphi(\bm{\eta}) = (1+\alpha/\beta)^{-1} \sum_i
\eta_i^{1+\alpha/\beta}$, omitting the additive constant which does
not change the divergence). Thus, we get new free parameters to tune that adapt the
coordinate system from which the divergence is computed, which we may
use to get improved
clustering results.

We now show that, if we accept to change the generator, then we may
have a significant freedom in picking and changing the coordinate
mappings $u$ and $v$. We study the nature of the $\struct{u}{v}{\varphi}$-geometric structure, and define a tolerance relation \cite{sTST} as a binary relation
which is reflexive and symmetric but not necessarily transitive. An equivalence
relation is reflexive, symmetric and transitive. We consider the 
``geometric structure'' binary
relation, $(u,v)$ (without reference to $\varphi$), which holds when there exists some $\varphi$
such that $\struct{u}{v}{\varphi}$ is a geometric structure.
\begin{lemma}\label{luv}
The ``geometric structure'' relation is a tolerance relation. It is an
equivalence relation in the subset of functions ${\mathcal{S}}_\phi$
indexed by some strictly convex differentiable $\phi$ and defined by: ${\mathcal{S}}_\phi \defeq \{\varphi :
\mathrm{H}\varphi(\nabla \phi) = \mathrm{H}\phi P_\phi
D_\phi^{-1} D P_\phi^\top\}$, where $P_\phi, D_\phi$ are the eigenspace and
eigenvalues matrix of $\mathrm{H}\phi$ and $D\succ 0$ is diagonal.
\end{lemma}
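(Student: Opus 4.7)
The tolerance part should follow directly from two short verifications. For reflexivity, given any coordinate mapping $u$, I would take $\varphi(\bm{x}) \defeq \frac{1}{2}\|\bm{x}\|_2^2$; since $\nabla\varphi = \mathrm{Id}$ and $u \circ u^{-1} = \mathrm{Id}$, the pair $(u,u)_\varphi$ is a geometric structure. For symmetry, if $(u,v)_\varphi$ is a geometric structure so that $u \circ v^{-1} = \nabla\varphi$, then inverting gives $v \circ u^{-1} = (\nabla\varphi)^{-1} = \nabla\varphi^\star$ where $\varphi^\star$ is the Legendre conjugate; because $\varphi$ is strictly convex and twice differentiable, so is $\varphi^\star$, so $(v,u)_{\varphi^\star}$ is a geometric structure. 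The relation is therefore at least a tolerance relation.

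To show that transitivity can fail, I would exhibit a quadratic counter-example in $d=2$. Let $v=\mathrm{Id}$, $u(\bm{x}) = A\bm{x}$ and $w^{-1}(\bm{x}) = B\bm{x}$, with $A = \mathrm{diag}(1,2)$ and $B = \bigl(\begin{smallmatrix}1&1\\1&2\end{smallmatrix}\bigr)$, both symmetric positive definite. Then $(u,v)$ and $(v,w)$ are geometric structures, with potentials $\frac12 \bm{x}^\top A \bm{x}$ and $\frac12 \bm{x}^\top B \bm{x}$ respectively. The composition $u \circ w^{-1}(\bm{x}) = AB\bm{x}$ can only be a gradient of a scalar function if $AB$ is symmetric; a direct check gives $AB = \bigl(\begin{smallmatrix}1&1\\2&4\end{smallmatrix}\bigr)$, which is not symmetric because $A$ and $B$ do not commute. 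Hence $(u,w)$ is not a geometric structure and transitivity fails in general.

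For the equivalence part on $\mathcal{S}_\phi$, I would first simplify the defining condition. Writing $\mathrm{H}\phi = P_\phi D_\phi P_\phi^\top$ with $P_\phi$ orthogonal gives $\mathrm{H}\phi P_\phi D_\phi^{-1} = P_\phi$, so $\varphi\in\mathcal{S}_\phi$ amounts to $\mathrm{H}\varphi(\nabla\phi(\bm{x})) = P_\phi(\bm{x})\, D(\bm{x})\, P_\phi(\bm{x})^\top$ for some positive diagonal $D(\bm{x})$. In other words, $\mathrm{H}\varphi \circ \nabla\phi$ shares the eigenbasis of $\mathrm{H}\phi$ at every point. Consequently, for any $\varphi_1,\varphi_2\in\mathcal{S}_\phi$, the Hessians $\mathrm{H}\varphi_1(\bm{y})$ and $\mathrm{H}\varphi_2(\bm{y})$ commute for every $\bm{y}\in\mathrm{im}\,\nabla\phi$. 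I would then interpret the relation on $\mathcal{S}_\phi$ by identifying each $\varphi\in\mathcal{S}_\phi$ with its coordinate mapping $\nabla\varphi$; then $\varphi_1\sim\varphi_2$ requires that $\nabla\varphi_1 \circ (\nabla\varphi_2)^{-1} = \nabla\varphi_1 \circ \nabla\varphi_2^\star$ be the gradient of a strictly convex function. The Jacobian of this composition at $\bm{z} = \nabla\varphi_2(\bm{y})$ is $\mathrm{H}\varphi_1(\bm{y})\,[\mathrm{H}\varphi_2(\bm{y})]^{-1}$, which is symmetric precisely because the two Hessians commute, so the vector field is conservative; moreover, being the product of two commuting symmetric positive definite matrices it is itself positive definite, so the potential is strictly convex. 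Transitivity of $\sim$ on $\mathcal{S}_\phi$ follows.

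The main obstacle I expect is the transitivity step on $\mathcal{S}_\phi$: one must (i) pin down the correct reading of the geometric-structure relation when restricted to a family of potential functions parameterized by $\phi$, (ii) show that the defining algebraic identity really is equivalent to pointwise simultaneous diagonalization with $\mathrm{H}\phi$ (which is where the factor $P_\phi D_\phi^{-1} D P_\phi^\top$ comes from and why $D$ is allowed to vary), and (iii) handle the change of base point between $\bm{x}$ and $\bm{y}=\nabla\phi(\bm{x})$ so that the commutativity transfers to every point where the Jacobian has to be symmetric, which requires $\nabla\phi$ to be onto the relevant domain. Reflexivity, symmetry, and the counter-example to transitivity should be essentially immediate once this machinery is in place.
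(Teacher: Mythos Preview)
Your tolerance argument (reflexivity via $\varphi=\tfrac12\|\cdot\|_2^2$, symmetry via the Legendre conjugate) is exactly what the paper does. Your explicit $2\times 2$ counter-example to transitivity is a genuine addition: the paper does not exhibit one, it only derives the algebraic obstruction (the Jacobian $\mathrm{H}\varphi(\nabla\phi)\,\mathrm{H}\phi$ need not be symmetric) and leaves the failure implicit. Your example makes the point cleanly.

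On $\mathcal{S}_\phi$ the two arguments rest on the same linear-algebra fact, but the framing differs. The paper keeps the middle mapping $v$ fixed, writes $u\circ w^{-1}=\nabla\varphi\circ\nabla\phi$, computes its Jacobian as $\mathrm{H}\varphi(\nabla\phi)\,\mathrm{H}\phi$, and observes that this product is symmetric positive definite iff the two Hessians share an eigenspace --- which is precisely the membership condition for $\mathcal{S}_\phi$. You instead identify each $\varphi\in\mathcal{S}_\phi$ with $\nabla\varphi$ and argue that \emph{any} two such gradients are related, because $\mathrm{H}\varphi_1(\bm{y})$ and $\mathrm{H}\varphi_2(\bm{y})$ commute pointwise. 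That is a slightly stronger conclusion (the relation is complete on $\mathcal{S}_\phi$, not merely transitive) and matches the paper's closing remark that the relation is an equivalence on quadratic forms with common eigenspace. The caveat you flag --- that commutativity is only established at points $\bm{y}\in\mathrm{im}\,\nabla\phi$, so one needs $\nabla\phi$ onto the relevant domain --- is real, and the paper glosses over it too; it is not a flaw in your plan so much as an imprecision inherited from the statement itself.
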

(Proof in Appendix, Subsection \ref{proof_luv})
Hence, for example, the geometric structure relation is an equivalence relation on
any subset of positive definite quadratic forms that have the same
eigenspace. The compactness and convexity of some of these subgroups ${\mathcal{S}}_\phi$
may be interesting from the clustering standpoint to learn the
$\struct{u}{v}{\varphi}$-geometric structure (see Section \ref{srpm}).

\noindent\hspace{0.5cm}\textbf{Simple right population minimizers} --- The following corollary
is a safe-check of Lemma \ref{ld} which states when the right
population minimizer has simple forms.
\begin{corollary}\label{csrpm}
Suppose ${\mathcal{S}}$ contains at least two distinct elements. The right population minimizer of $D_{\varphi, g}$ on set
${\mathcal{S}}$ is:
\begin{enumerate}
\item always the arithmetic average
  (\textit{i.e.} $\bm{\mu} = \overline{\bm{x}}$) iff $g(\bm{y})$ is
  constant;
\item always the $\varphi$-mean (\textit{i.e.}
  $\varphi(\bm{\mu}) = \overline{\varphi}$)
  iff $\varphi(\bm{x}) = K\int 1/h(\bm{u}^\top
\bm{x}) + K'$, with (i) $K, K'$ and vector $\bm{u}$ constants, (ii) $g(\bm{x}) =
h(\bm{u}^\top \bm{x})$ for some function $h:{\mathbb{R}} \rightarrow {\mathbb{R}}$ strictly monotonous with derivative sign opposite to that
  of $K$. 
\end{enumerate}
\end{corollary}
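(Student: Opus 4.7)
The plan is to reduce both parts to the critical point equation already extracted in Lemma \ref{ld}, namely
\begin{eqnarray*}
\mathrm{D}_{\bm{z}} g(\bm{\mu}) \times (\overline{\varphi} - \varphi(\bm{\mu})) & = & (\overline{\bm{x}} - \bm{\mu})^\top \mathrm{D}_{\bm{z}}\bigl(g(\bm{\mu}) \nabla \varphi(\bm{\mu})\bigr),
\end{eqnarray*}
which must hold, for every valid $\bm{z}$, at any right population minimizer, for every admissible $\mathcal{S}$.

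For part 1, substituting the assumed form $\bm{\mu}=\overline{\bm{x}}$ annihilates the right-hand side and leaves $\mathrm{D}_{\bm{z}} g(\overline{\bm{x}})\cdot(\overline{\varphi}-\varphi(\overline{\bm{x}}))=0$. Strict convexity of $\varphi$ combined with Jensen (applied to the non-degenerate set $\mathcal{S}$) gives $\overline{\varphi}-\varphi(\overline{\bm{x}})>0$, hence $\mathrm{D}_{\bm{z}} g(\overline{\bm{x}})=0$ for all valid $\bm{z}$. Since $\overline{\bm{x}}$ sweeps the interior of the domain as $\mathcal{S}$ varies, $g$ is constant. The converse is immediate: for constant $g$ the divergence reduces to $K D_\varphi$, whose right population minimizer is the arithmetic mean by classical Bregman theory (cf.\ \cite{bgwOT}).

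For part 2, substituting the assumed form $\varphi(\bm{\mu})=\overline{\varphi}$ annihilates the left-hand side, leaving $(\overline{\bm{x}}-\bm{\mu})^\top \mathrm{D}_{\bm{z}}(g(\bm{\mu})\nabla\varphi(\bm{\mu}))=0$. Over all admissible $\mathcal{S}$ whose right population minimizer equals a prescribed $\bm{\mu}$, the residual $\overline{\bm{x}}-\bm{\mu}$ ranges over a full-dimensional cone (a two-point construction $\{\bm{\mu}+\bm{a},\bm{\mu}+\bm{b}\}$ with $\bm{a}+\bm{b}=2\bm{w}$ and $\varphi(\bm{\mu}+\bm{a})+\varphi(\bm{\mu}+\bm{b})=2\varphi(\bm{\mu})$ realizes any small $\bm{w}$ by continuity of $\varphi$). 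This forces $\mathrm{D}_{\bm{z}}(g\nabla\varphi)(\bm{\mu})=\bm{0}$ for every $\bm{\mu}$ and every $\bm{z}$, so $g(\bm{x})\nabla\varphi(\bm{x})\equiv K\bm{u}$ for some constant vector $K\bm{u}$. Therefore $\nabla\varphi\parallel\bm{u}$, which implies $\varphi(\bm{x})=F(\bm{u}^\top\bm{x})+K'$ for some scalar $F$; the identity $g\cdot F'(\bm{u}^\top\bm{x})=K$ then shows $g(\bm{x})=h(\bm{u}^\top\bm{x})$ with $h\defeq K/F'$, whence $F(t)=K\int 1/h(t)\,dt$ and the asserted form of $\varphi$. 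Strict convexity along $\bm{u}$ reads $F''(t)=-Kh'(t)/h^2(t)>0$, which pins down the opposite-sign condition on $h'$ and the monotonicity of $h$. The converse is a direct computation: plugging $g=h(\bm{u}^\top\bm{x})$ and $\varphi(\bm{x})=K\int 1/h+K'$ gives $g\nabla\varphi\equiv K\bm{u}$, the right-hand side of the critical equation vanishes identically, and since $g$ is non-constant one has $\mathrm{D}_{\bm{z}} g(\bm{\mu})\neq 0$ for some $\bm{z}$, forcing $\overline{\varphi}=\varphi(\bm{\mu})$.

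The main obstacle is the freedom-of-$\overline{\bm{x}}$ step in the converse of part 2: one must show that varying $\mathcal{S}$ subject to the constraints ``$\bm{\mu}$ is the right population minimizer'' and ``$\varphi(\bm{\mu})=\overline{\varphi}$'' still leaves $\overline{\bm{x}}-\bm{\mu}$ free to span ${\mathbb{R}}^d$; a small two- or four-point perturbation argument around $\bm{\mu}$, using the continuity of $\varphi$ to balance the mean-potential equation while independently steering the arithmetic mean, suffices and makes the vanishing of $\mathrm{D}_{\bm{z}}(g\nabla\varphi)$ rigorous.
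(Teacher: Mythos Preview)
Your proof is correct and follows essentially the same approach as the paper: both parts are reduced to the critical-point identity (\ref{eqfon2}) from Lemma~\ref{ld}, with each direction obtained by forcing one side of that identity to vanish. The only cosmetic difference is that in part~2 ($\Rightarrow$) the paper extracts $g(\bm{x})=h(\bm{u}^\top\bm{x})$ via the symmetry of $\mathrm{H}\varphi$ whereas you first note $\nabla\varphi\parallel\bm{u}\Rightarrow\varphi=F(\bm{u}^\top\bm{x})$ and then read off $g$; your explicit concern about the freedom of $\overline{\bm{x}}-\bm{\mu}$ is in fact more careful than the paper, which simply asserts the conclusion from $\overline{\bm{x}}\neq\bm{\mu}$.
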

(proof in Subsection \ref{sec-csrpm})

\noindent\hspace{0.5cm}\textbf{Only one symmetric conformal divergence} ---
We show that there exists a single 1D symmetric conformal divergence in the
$\struct{u}{v}{\varphi}$-geometric structure, the square loss, $D_{\varphi,
  g}(v(x):v(y)) \propto (v(x) - v(y))^2$. As a corollary, it shows
that there is no symmetric total Bregman divergence.
The proof is made in the 1D case, that is, when the domain and image
of $u$ and $v$ is ${\mathbb{R}}$, and it can be extended at no cost to $d$D
separable conformal divergences, for which $g(u(\bm{y}))
D_\varphi(v(\bm{x}) : v(\bm{y})) \defeq \sum_i g(u(y^i)) D_\varphi(v(x^i) : v(y^i))$.
\begin{lemma}\label{lsd}
Let $\struct{u}{v}{\varphi}$ be a geometric structure and $D_{\varphi, g}$ a
conformal divergence for some strictly convex twice differentiable $\varphi$. Suppose that $\forall x, y$:
\begin{eqnarray}
\lefteqn{g(u(y)) D_\varphi(v(x) : v(y))}\nonumber\\ 
 & = & g(u(x)) D_\varphi(v(y) :
v(x))\:\:.\label{eq00}
\end{eqnarray} 
Then \textbf{(i)} $g(.) = K_1$, \textbf{(ii)}
$v=\ell(u)$, \textbf{(iii)} $\varphi(x) = K_2x^2+ \ell(x)$ for some
constants $K_1 > 0, K_2 > 0$, where $\ell(.)$ is a linear function in
its argument.
\end{lemma}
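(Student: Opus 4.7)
The approach is to reduce to a one-variable functional equation via the $(u,v)_\varphi$ identity $u = \nabla\varphi\circ v$, then extract a hierarchy of ODEs by Taylor expansion along the diagonal, and integrate. Setting $s \defeq v(x)$, $t \defeq v(y)$ and $h(t) \defeq g(\nabla\varphi(t))$, the bijectivity of $v$ reduces (\ref{eq00}) to the equivalent statement $h(t)\, D_\varphi(s:t) = h(s)\, D_\varphi(t:s)$ for all $s,t$ in the $v$-image of the domain.

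I would then Taylor expand both sides in $\epsilon \defeq s-t$ around $s = t$. Using the elementary expansions $D_\varphi(t+\epsilon:t) = \sum_{k\geq 2}\frac{\varphi^{(k)}(t)}{k!}\epsilon^k$, $D_\varphi(t:t+\epsilon) = \sum_{k\geq 2}\frac{(k-1)\varphi^{(k)}(t)}{k!}\epsilon^k$ together with $h(t+\epsilon) = \sum_{j\geq 0}\frac{h^{(j)}(t)}{j!}\epsilon^j$, and matching like powers of $\epsilon$, one obtains for each $n\geq 2$ the identity
\begin{eqnarray*}
h(t)\,\varphi^{(n)}(t) & = & \sum_{k=2}^{n}\binom{n}{k}(k-1)\, h^{(n-k)}(t)\,\varphi^{(k)}(t).
\end{eqnarray*}
The $n=3$ instance simplifies to $h\varphi''' + 3h'\varphi'' = 0$, which integrates to $\varphi''(t)\,h(t)^3 = C$ for some positive constant $C$. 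The $n=4$ instance becomes automatic once this is used; the first genuinely new constraint comes from $n=5$ and reduces, after substitution of the $n=3$ relation, to $hh''' + 3h'h'' = 0$, so $h^2$ is a polynomial of degree at most two in $t$. Iterating through the higher-$n$ identities and exploiting positivity of $h$ and $\varphi''$ on the full convex domain, the remaining ODE freedom collapses to force $h$ constant.

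Once $h \equiv K_1 > 0$, constancy of $\varphi''$ follows from $\varphi''\, h^3 = C$, and two integrations yield the quadratic form $\varphi(x) = K_2 x^2 + K_3 x + K_4$ in (iii). Constancy of $h = g\circ\nabla\varphi$ on the connected image of $\nabla\varphi$ then forces $g \equiv K_1 > 0$ on its (convex) domain, giving (i). Finally, feeding the quadratic $\varphi$ into the $(u,v)_\varphi$ relation $u = \nabla\varphi\circ v$ yields $u = v$ (after the affine normalization implicit in the quadratic's leading coefficient), giving (ii).

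The main obstacle is pinning down the constancy of $h$: any single order of the Taylor hierarchy admits non-constant local solutions (for instance $h(t) = \sqrt{1+t^2}$ passes both the $n=3$ and $n=5$ tests, corresponding to $\varphi(t) = C\sqrt{1+t^2}$), so the argument cannot be purely infinitesimal. It must invoke the global validity of the symmetry equation across all pairs $(s,t)$ in the convex domain, which is precisely what forces the full hierarchy of ODEs to be simultaneously consistent only in the degenerate case where $h$ (and hence $\varphi''$) is constant.
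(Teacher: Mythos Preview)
Your reduction to $h(t)\,D_\varphi(s:t)=h(s)\,D_\varphi(t:s)$ with $h=g\circ\varphi'$ and the derivation of the Taylor hierarchy (in particular $h^3\varphi''=\mathrm{const}$ from $n=3$ and $(h^2)'''=0$ from $n=5$) are correct, and the route is genuinely different from the paper's: there one differentiates the symmetry identity once in each variable, passes to $z=u(x),\,r=u(y)$, and exploits that the resulting relation is affine in $r$ for fixed $z$ to conclude that $g'(r)/\varphi''(r)$ and $(g(r)+rg'(r))/\varphi''(r)$ are affine in $r$, followed by a case analysis on the four coefficients. Your hierarchy is cleaner and more systematic.

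However, the step you yourself flag as the obstacle is a genuine gap, and your diagnosis of it is wrong. The pair $h(t)=\sqrt{1+t^2}$, $\varphi(t)=C\sqrt{1+t^2}$ is not merely a local solution passing a few orders: a direct computation gives
\[
h(t)\,D_\varphi(s:t)\;=\;C\Bigl(\sqrt{(1+s^2)(1+t^2)}-1-st\Bigr),
\]
which is manifestly symmetric in $(s,t)$. Hence this pair satisfies the \emph{global} symmetry equation on all of $\mathbb{R}$, and therefore every order $n$ of your hierarchy simultaneously. ``Iterating through the higher-$n$ identities'' yields nothing beyond $n=5$; no argument based on the hierarchy (or even on the global identity) for $h$ and $\varphi$ alone can force $h$ constant.

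What closes the gap is the geometric-structure hypothesis you never invoke: $u$ is a coordinate mapping, hence a bijection of $\mathbb{R}$, so $\mathrm{im}\,\varphi'=\mathbb{R}$. Combine this with what you already have. From $(h^2)'''=0$ write $h^2(t)=\alpha t^2+\beta t+\gamma$; positivity of $h$ on all of $\mathbb{R}$ forces $\alpha\geq 0$ and $\beta^2\leq 4\alpha\gamma$. If $\alpha>0$ then $\varphi''=C h^{-3}=O(|t|^{-3})$ at infinity, so $\varphi'$ has finite limits at $\pm\infty$, contradicting surjectivity onto $\mathbb{R}$; if $\alpha=0$ and $\beta\neq 0$ then $h^2$ changes sign. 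Thus $\alpha=\beta=0$, $h$ is constant, and the remainder of your argument (constant $\varphi''$, quadratic $\varphi$, constant $g$) goes through. Your own example is excluded precisely because its $\varphi'(t)=t/\sqrt{1+t^2}$ has bounded image $(-1,1)$, so $u=\varphi'\circ v$ cannot be a coordinate mapping.
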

(proof in Subsection \ref{sec-lsd}) Thus, conformal divergence are not metrics, yet they can be used to
craft metrics. To ensure that symmetry and triangle inequality are
met without violating non-negativity nor the identity of
indiscernibles, we can search for the $\alpha \in (0,1]$ with which $(g(u(\bm{y}))
D_\varphi(v(\bm{x}) : v(\bm{y})) + g(u(\bm{x}))
D_\varphi(v(\bm{y}) : v(\bm{x})))^\alpha$ meets the triangle
inequality, or use \cite{abbBD}'s method.

\noindent\hspace{0.5cm}\textbf{Fast approximation of right population minimizers}
--- We now show that under mild assumptions on $\varphi$, candidates for
right population minimizer may be easily located and approximated in the 1D setting. Assume wlog that 
${\mathcal{S}}$ is ordered, that is $x_1\leq x_2\leq ... \leq x_n$. Whenever $\varphi$ is bijective over $[x_1, x_n]$, we
define the $\varphi$-mean:
\begin{eqnarray}
\overline{x}_\varphi & \defeq & \varphi^{-1}(\overline{\varphi})\:\:.\nonumber
\end{eqnarray}
Let us denote a \textit{candidate} right population minimizer as a real which is
solution of (\ref{defcorr33}). Candidate population minimizers are
critical points for the right parameter of the average divergence \cite{cOA}.
\begin{lemma}\label{lemcs}
Suppose $g(y) = K \gbot(y)$, and assume that $\varphi'$ has constant sign on $[x_1, x_n]$. Then there
exists a candidate right population minimizer $\mu$ in $[x_1, x_n]$. Furthermore, $\mu
\in [\overline{x}_\varphi,\overline{x}]$ if
$\mathrm{sign} = -$, and $\mu
\in [\overline{x}, \overline{x}_\varphi]$ if
$\mathrm{sign} = +$. Here, ``$\mathrm{sign}$'' denotes the sign of
$\varphi'$ over $[x_1, x_n]$.
\end{lemma}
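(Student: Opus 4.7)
The plan is to apply the Intermediate Value Theorem to the function $h(\mu) \defeq \tilde{\varphi}'_{\mathcal{S}}(\mu)\varphi'(\mu) + 1$ on the closed interval bounded by $\overline{x}$ and $\overline{x}_\varphi$ (the orientation depending on the sign of $\varphi'$). First I would verify that $\overline{x}_\varphi$ is well-defined and lies in $[x_1,x_n]$: the constant-sign hypothesis makes $\varphi$ strictly monotonic on $[x_1,x_n]$, so $\varphi([x_1,x_n])$ is a closed interval containing every $\varphi(x_i)$ and hence their average $\overline{\varphi}$, giving $\overline{x}_\varphi = \varphi^{-1}(\overline{\varphi}) \in [x_1,x_n]$. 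Strict convexity of $\varphi$ together with Jensen's inequality yields $\overline{\varphi} > \varphi(\overline{x})$ whenever $\mathcal{S}$ contains at least two distinct values (the degenerate case being settled by $\mu = x_1$), and combining this with the monotonicity direction of $\varphi$ gives $\overline{x}_\varphi > \overline{x}$ when $\varphi'>0$ and $\overline{x}_\varphi < \overline{x}$ when $\varphi'<0$, exactly matching the two interval orientations in the statement.

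Next I would evaluate $h$ at the two endpoints of the subinterval. At $\mu = \overline{x}_\varphi$ the numerator of $\tilde{\varphi}'_{\mathcal{S}}$ vanishes by definition, so $\tilde{\varphi}'_{\mathcal{S}}(\overline{x}_\varphi) = 0$ and $h(\overline{x}_\varphi) = 1 > 0$. As $\mu \to \overline{x}$ from within the subinterval, the numerator $\overline{\varphi} - \varphi(\mu)$ tends to the strictly positive quantity $\overline{\varphi} - \varphi(\overline{x})$, while $\overline{x} - \mu \to 0$ with a sign dictated by the side of approach. A direct case split then shows that in both sign scenarios the signs of $\overline{x}-\mu$ and $\varphi'(\mu)$ conspire to drive $\tilde{\varphi}'_{\mathcal{S}}(\mu)\varphi'(\mu)$ to $-\infty$: for $\varphi'>0$ the approach is from above so $\overline{x}-\mu<0$ and $\varphi'(\mu)>0$; for $\varphi'<0$ the approach is from below so $\overline{x}-\mu>0$ and $\varphi'(\mu)<0$. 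In either case $h(\mu)\to-\infty$.

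Since $h$ is continuous on the open subinterval (the denominator of $\tilde{\varphi}'_{\mathcal{S}}$ not vanishing there), the Intermediate Value Theorem produces some $\mu$ in the interior with $h(\mu)=0$, i.e.\ $\tilde{\varphi}'_{\mathcal{S}}(\mu)\varphi'(\mu) = -1$, which is precisely the candidate equation (\ref{defcorr33}); because this $\mu$ sits in the stated subinterval, it automatically belongs to $[x_1,x_n]$. The only delicate step is the limit analysis at $\overline{x}$, where one must carefully pair the sign of $\overline{x}-\mu$ with that of $\varphi'$ to guarantee divergence to $-\infty$ rather than $+\infty$; once this sign bookkeeping is in place, everything else reduces to Jensen's inequality, strict monotonicity of $\varphi$, and continuity.
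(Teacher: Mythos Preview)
Your proof is correct and follows essentially the same approach as the paper's: both apply the Intermediate Value Theorem on the subinterval bounded by $\overline{x}$ and $\overline{x}_\varphi$, after using monotonicity of $\varphi$ (from the constant-sign hypothesis) to locate $\overline{x}_\varphi$ on the correct side of $\overline{x}$. The only cosmetic difference is the choice of auxiliary function: the paper works with $\tilde{\varphi}'_\bot(x) \defeq -1/\tilde{\varphi}'_{\mathcal{S}}(x)$ (which blows up at $\overline{x}_\varphi$ and vanishes at $\overline{x}$) and argues it must cross $\varphi'$ somewhere, whereas you work with $h(\mu)=\tilde{\varphi}'_{\mathcal{S}}(\mu)\varphi'(\mu)+1$ (which equals $1$ at $\overline{x}_\varphi$ and blows up to $-\infty$ at $\overline{x}$); these are algebraically equivalent reformulations of the same IVT argument, with the singularity merely shifted to the opposite endpoint.
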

(Proof in Appendix, Subsection \ref{proof_lemcs})
\begin{table*}[t]
\center
\begin{tabular}{cccc|ll}\hline\hline
 $\varphi(x)$ & Name or expression & 
 $\overline{x}_\varphi$ & Name of & Location\\ 
 & for $D_{\varphi,\gbot} (x : y)$ & & $\varphi$-mean & \\ \hline \hline
$-\log(x)$ & Total Itakura-Saito & $\prod_{i}
  {x_i^{\frac{1}{n}}}$ & Geometric mean &
$[\overline{x}_\varphi,\overline{x}]$\\ \hline
$1/x$ & $\frac{1}{x} - \frac{2}{y} + \frac{x}{y^2}$ &
$n/\sum_i{x^{-1}_i}$ & Harmonic mean & $[\overline{x}_\varphi,\overline{x}]$ \\\hline
$x^2$ & Total square loss &
$\pm \sqrt{\frac{1}{n}\sum_i {x^2_i}}$ & $\pm$ Root mean
  square &
$[\overline{x},\overline{x}_\varphi]$ or $[\overline{x}_\varphi,\overline{x}]$\\ \hline
$x^p, p\geq 2$ & Total power loss &
$\left(\sum_i {x^p_i}\right)^{\frac{1}{p}}$ & Power mean &

$[\overline{x},\overline{x}_\varphi]$\\ \hline
$\exp(x)$ & Total exp divergence& $\log\sum_i\exp
x_i$ & None & $[\overline{x}_\varphi,\overline{x}]$\\ \hline
$x\log(x)$ & Total KL & $\frac{\frac{1}{n}\sum_i {x_i
    \log x_i}}{W\left(\frac{1}{n}\sum_i {x_i
    \log x_i}\right)}$ & None & 
$[\overline{x},\overline{x}_\varphi]$\\\hline
\multirow{2}{*}{$W^{-1}(x)$} & $x(\exp(x)-\exp(y))$ & \multirow{2}{*}{$W(\frac{1}{n}\sum_i
W^{-1}(x_i))$} & \multirow{2}{*}{None} & \multirow{2}{*}{$[\overline{x},\overline{x}_\varphi]$}\\
 & $+y(y-x)\exp(y)$ & & & & \\ \hline\hline
\end{tabular}
\caption{Examples of total Bregman divergences $D_{\varphi,\gbot} (x
  : y)$, and
Location of the candidate population minimizer according to Lemma \ref{lemcs}. $W$ is Lambert $W$ function and $W^{-1}(x)$ is shorthand
  for $x \exp (x)$.}\label{overphit}
\end{table*}
Table \ref{overphit} presents some applications of Lemma
\ref{lemcs} (the domain considered for $\varphi(x) = 1/x$ is
  ${\mathbb{R}}_{+*}$). Approximating the candidate right population minimizer
$\mu$ in
the interval may be done by fitting the roots of equations of the form
$f(\mu,\overline{x},\overline{x}_\varphi) = 0$, some of
which are given below as examples:
\begin{eqnarray*}
\log(\overline{x}_\varphi) - \log(\mu) + \mu^2 - \overline{x}\mu
& = & 0 %%\label{eqis}
\end{eqnarray*}
for total Itakura Saito divergence, 
\begin{eqnarray*}
2\mu^3-(2
\overline{x}^2_\varphi-1)\mu - \overline{x} & = & 0 %%\label{eqrms}
\end{eqnarray*}
for total square loss divergence (notice that a closed-form expression
for $\mu$ is available), 
\begin{eqnarray*}
p\mu^{2p-1} - p
\overline{x}^p_\varphi\mu^{p-1} + \mu -\overline{x} & = &
0 %%\label{eqtpl}
\end{eqnarray*}
for total power loss divergence, 
\begin{eqnarray*}
\exp(2\mu) - \exp(\overline{x}_\varphi + \mu) + \mu -
\overline{x} & = & 0 %%\label{eqtex}
\end{eqnarray*}
for total exp divergence, and finally
\begin{eqnarray*}
\left(\mu \log \mu -
  \frac{\overline{x}_\varphi}{W(\overline{x}_\varphi)} \log
  \frac{\overline{x}_\varphi}{W(\overline{x}_\varphi)}\right)(1+\log\mu)\nonumber\\
+\mu - \overline{x} & = & 0\label{eqtkl}
\end{eqnarray*}
for total KL divergence.

\begin{figure}[t]
\centering
\begin{tabular}{c} 
\epsfig{file=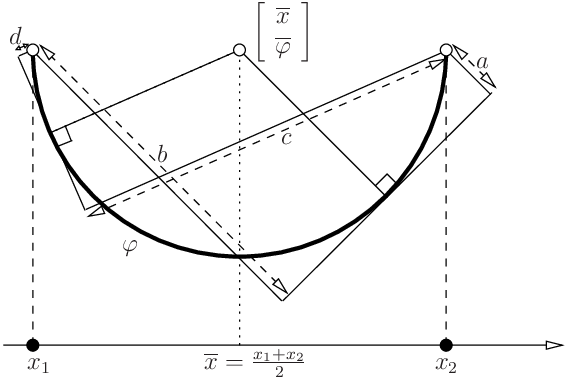, width=0.45\textwidth}
\end{tabular}
\caption{$\varphi$ is half-circle and ${\mathcal{S}} = \{x_1,
  x_2\}$. In this case, all points in $[x_1, x_2]$ are right
  population minimizers for $D_{\varphi, \gbot}$ (the total Bregman divergence equals $a+b = c+d$).}\label{f-allmu}
\end{figure}

\noindent\hspace{0.5cm}\textbf{Non-uniqueness and existence of the right population
  minimizers} ---
The left population minimizer of any $v$-conformal divergence is
unique (Lemma \ref{lleft1}). This is not always the case for the right
population minimizer. In very seldom but typical pathological cases, the
population minimizers may even span the complete domain of $\varphi$, as
displayed in Figure \ref{f-allmu}. 

We also notice that the compactness
of $\mathrm{dom}(D_{\varphi,g})$ appears necessary for the right
population minimizers to exist, as otherwise one may build
pathological Cauchy
sequences for the right divergence parameter that converge to a right
population minimizer not in $\mathrm{dom}(D_{\varphi,g})$.

\noindent\hspace{0.5cm}\textbf{Extension to scaled Bregman divergences} ---
A new generalization of ordinary Bregman divergences has been recently
coined \cite{svOB}, called scaled Bregman divergences. A scaled
Bregman divergence is a particular case, for $g=v=\mathrm{Id}$, of what we call a scaled
conformal divergence, defined as:
\begin{eqnarray}
D^v_{\varphi,g} (x : y ; w) & \defeq & w D^v_{\varphi, g}
(x/w : y/w)\:\:,\label{defscaled}
\end{eqnarray}
for $w>0$. A conformal divergence is obtained when $w=1$. Scaled Bregman
divergences generalize other important classes of divergences such as Csisz\'ar's
$f$-divergences, and they yield explicit formulas for exponential
families for scaled Bregman power
divergences, which means they have a
significant potential for applications in clustering \cite{svOB}. It
is thus important to characterize their population minimizers. Though
it is out of the scope of our paper to extent further our results to
scaled divergences, we can give some insights into the similarities
and differences with the case $w=1$. Population minimizers are now sought with respect to some sets
$\mathcal{S} = \{x_1, x_2, ..., x_n\}$ and $\mathcal{W} = \{w_1, w_2,
..., w_n\}$, such that a left population minimizer of the ordered pair
$(\mathcal{S}, \mathcal{W})$ for $D^v_{\varphi,g}$ is defined as $\mu$
that minimizes $\sum_i D^v_{\varphi,g}(\mu:x_i;w_i)$. 
The following Lemma shows that, despite the left
population minimizer is not always available in closed form in
general (unlike $v$-conformal divergences), it is in between the
minimal and maximal values of ${\mathcal{S}}$ (like $v$-conformal
divergences). 

\begin{lemma}\label{lemsbd}
The left population minimizer of $D^v_{\varphi,g}$
over $(\mathcal{S}, \mathcal{W})$ is unique and in
$[\min_i x_i, \max_i x_i]$.
\end{lemma}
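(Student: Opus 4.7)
The plan is to exploit the scaling identity in (\ref{defscaled}) so that the objective becomes a sum of ordinary conformal divergences whose first-argument convexity we can directly inspect. Concretely, writing the objective as
\begin{eqnarray*}
F(\mu) & \defeq & \sum_i D_{\varphi,g}(\mu : x_i ; w_i) \\
 & = & \sum_i w_i\, g(x_i/w_i)\, D_\varphi(\mu/w_i : x_i/w_i),
\end{eqnarray*}
I would first differentiate twice with respect to $\mu$. Since $\varphi$ is strictly convex twice differentiable, $\varphi'' > 0$, and the factor $g(x_i/w_i) \geq 0$ with $w_i > 0$, the second derivative $F''(\mu) = \sum_i g(x_i/w_i)\varphi''(\mu/w_i)/w_i$ is strictly positive (assuming at least one $g(x_i/w_i) > 0$, which we may take as the nondegenerate case). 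Strict convexity of $F$ immediately yields uniqueness of the minimizer.

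For the location, I would examine the sign of
\begin{eqnarray*}
F'(\mu) & = & \sum_i g(x_i/w_i)\left(\varphi'(\mu/w_i) - \varphi'(x_i/w_i)\right).
\end{eqnarray*}
If $\mu < \min_i x_i$, then $\mu/w_i < x_i/w_i$ for every $i$ (since $w_i > 0$), and strict convexity of $\varphi$ makes $\varphi'$ strictly increasing, so every summand is strictly negative, giving $F'(\mu) < 0$. Symmetrically, if $\mu > \max_i x_i$, each summand is strictly positive and $F'(\mu) > 0$. Combined with strict convexity (so $F'$ vanishes at exactly one point), this forces the unique minimizer to lie in $[\min_i x_i, \max_i x_i]$.

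The only subtle point, rather than any real obstacle, is that the first-order stationary equation $\sum_i g(x_i/w_i)\varphi'(\mu/w_i) = \sum_i g(x_i/w_i)\varphi'(x_i/w_i)$ does not decouple into a weighted $u$-mean formula as in Lemma \ref{lleft1}, because $\mu$ enters $\varphi'$ under distinct rescalings $1/w_i$. This is precisely why one cannot write $\mu$ in closed form here, and why the argument must go through sign analysis rather than inversion of a mean; the proof therefore rests only on strict convexity of $\varphi$ and on $g \geq 0$, $w_i > 0$.
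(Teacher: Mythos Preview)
Your proof is correct and follows essentially the same route as the paper: strict convexity of the objective gives uniqueness, and a sign analysis of $F'$ at $\min_i x_i$ and $\max_i x_i$ gives the location. The only cosmetic difference is that the paper casts the first-order condition in $(u,v)$ notation and applies the mean value theorem to write $\varphi'(\mu/w_i)-\varphi'(x_i/w_i)=\tfrac{\mu-x_i}{w_i}\,\varphi''(\mu_i/w_i)$ before reading off the sign, whereas you invoke the monotonicity of $\varphi'$ directly.
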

(Proof in Appendix, Subsection \ref{proof_lemsbd})
This Lemma can be extended to separable divergences in ${\mathbb{R}}^d$, to show that the
left population minimizer of scaled conformal divergences lies in $\prod_j [\min_i x^j_i, \max_i x^j_i]$.

\section{Conclusion}

We have studied the left and right population minimizers of conformal
divergences, a superset of ordinary Bregman divergences and total Bregman
divergences, in the $\struct{u}{v}{\varphi}$-geometric structure \cite{aTQ,aIG,zDF}, which generalizes
dually flat affine connections. We have characterized analytically and
geometrically the population minimizers,
shown the exhaustivity property of conformal divergences for the left
population minimizer, and the exhaustivity of total Bregman
divergences for the right population minimizers. We do believe that these results, as
well as additional results we provide
on
the robustness of the population minimizers, the nature of the $\struct{u}{v}{\varphi}$
geometric structure
relation,
and the simple approximation of 1D population minimizers, 
shall be useful to widen the scope of existing clustering algorithms
and/or develop algorithmically new clustering algorithms 
relying on broad classes of distortions that escape the conventional framework of
ordinary Bregman divergences, as \textit{e.g.} recently
initiated with total Bregman divergences or scaled Bregman divergences.

\section{Acknowledgments}

The authors thank Meizhu Liu, and the reviewers for
insightful comments that helped to significantly improve the
paper. 

\bibliographystyle{IEEEtran}
\bibliography{bibgen}

% Generated by IEEEtran.bst, version: 1.13 (2008/09/30)
\begin{thebibliography}{10}
\providecommand{\url}[1]{#1}
\csname url@samestyle\endcsname
\providecommand{\newblock}{\relax}
\providecommand{\bibinfo}[2]{#2}
\providecommand{\BIBentrySTDinterwordspacing}{\spaceskip=0pt\relax}
\providecommand{\BIBentryALTinterwordstretchfactor}{4}
\providecommand{\BIBentryALTinterwordspacing}{\spaceskip=\fontdimen2\font plus
\BIBentryALTinterwordstretchfactor\fontdimen3\font minus
  \fontdimen4\font\relax}
\providecommand{\BIBforeignlanguage}[2]{{%
\expandafter\ifx\csname l@#1\endcsname\relax
\typeout{** WARNING: IEEEtran.bst: No hyphenation pattern has been}%
\typeout{** loaded for the language `#1'. Using the pattern for}%
\typeout{** the default language instead.}%
\else
\language=\csname l@#1\endcsname
\fi
#2}}
\providecommand{\BIBdecl}{\relax}
\BIBdecl

\bibitem{vwgCS}
U.~von Luxburg, R.-C. Williamson, and I.~Guyon, ``Clustering: science or art?''
  in \emph{Workshop on Unsupervised and Transfer Learning}, 2012, pp. 65--79.

\bibitem{dlrML}
A.~P. Dempster, N.~M. Laird, and D.~B. Rubin, ``Maximum likelihood from
  incomplete data via the {EM} algorithm,'' \emph{J. of the Royal Stat. Soc.
  B}, vol.~39, pp. 1--38, 1977.

\bibitem{mqSM}
J.~McQueen, ``Some methods for classification and analysis of multivariate
  observations,'' in \emph{Proc.\ of the 5$^{th}$ Berkeley symposium on
  mathematical statistics and probability}, 1967, pp. 281--297.

\bibitem{bmdgCW}
A.~Banerjee, S.~Merugu, I.~Dhillon, and J.~Ghosh, ``Clustering with bregman
  divergences,'' in \emph{Proc.\ of the $4^{th}$ SIAM International Conference
  on Data Mining}, 2004, pp. 234--245.

\bibitem{kSL}
A.-N. Kolmogorov, ``Sur la notion de moyenne,'' \emph{Atti della R. Accademia
  Nazionale dei Lincei}, vol.~12, pp. 388--391, 1930.

\bibitem{nUE}
M.~Nagumo, ``{\"U}ber eine klasse von mittelwerte,'' \emph{Japanese Journal of
  Mathematics}, vol.~7, pp. 71--79, 1930.

\bibitem{afAC}
J.~Abernethy and R.-M. Frongillo, ``A characterization of proper scoring rules
  for linear properties,'' in \emph{Proc.\ of the 25$^{~th}$ COLT}, 2012, pp.
  1--14.

\bibitem{bgwOT}
A.~Banerjee, X.~Guo, and H.~Wang, ``On the optimality of conditional
  expectation as a bregman predictor,'' \emph{IEEE Trans. on Information
  Theory}, vol.~51, pp. 2664--2669, 2005.

\bibitem{lTB}
M.~Liu, ``Total {B}regman divergence, a robust divergence measure, and its
  applications,'' Ph.D. dissertation, University of Florida, 2011.

\bibitem{ehllIT}
F.~Escolano, E.-R. Hancock, M.~Liu, and M.-A. Lozano, ``Information-theoretic
  dissimilarities for graphs,'' in \emph{Similarity-Based Pattern Recognition},
  ser. Lecture Notes in Computer Science, 2013, vol. 7953, pp. 90--105.

\bibitem{elhTB}
F.~Escolano, M.~Liu, and E.-R. Hancock, ``Tensor-based total bregman
  divergences between graphs,'' in \emph{{IEEE} International Conference on
  Computer Vision Workshops}, 2011, pp. 1440--1447.

\bibitem{llyyhCT}
M.~Liu, L.~Lu, X.~Ye, S.~Yu, and H.~Huang, ``Coarse-to-fine classification via
  parametric and nonparametric models for computer-aided diagnosis,'' in
  \emph{Proc.\ of the 20$^{~th}$ ACM International Conference on Information
  and Knowledge Management}, 2011, pp. 2509--2512.

\bibitem{lvRA}
M.~Liu and B.-C. Vemuri, ``Robust and efficient regularized boosting using
  total bregman divergence,'' in \emph{Proc.\ of the 24$^{th}$ IEEE CVPR},
  2011, pp. 2897--2902.

\bibitem{lvanTB}
M.~Liu, B.-C. Vemuri, S.-I. Amari, and F.~Nielsen, ``total {B}regman divergence
  and its applications to shape retrieval,'' in \emph{Proc.\ of the 23$^{rd}$
  IEEE CVPR}, 2010, pp. 3463--3468.

\bibitem{lvdAR}
M.~Liu, B.-C. Vemuri, and R.~Deriche, ``A robust variational approach for
  simultaneous smoothing and estimation of {DTI},'' \emph{NeuroImage}, vol.~67,
  pp. 33 -- 41, 2013.

\bibitem{vlanTB}
B.-C. Vemuri, M.~Liu, S.-I. Amari, and F.~Nielsen, ``Total bregman divergence
  and its applications to {DTI} analysis,'' \emph{IEEE Transactions on Medical
  Imaging}, vol.~30, no.~2, pp. 475--483, 2011.

\bibitem{lvanSR}
M.~Liu, B.~C. Vemuri, S.-I. Amari, and F.~Nielsen, ``Shape retrieval using
  hierarchical total bregman soft clustering,'' \emph{IEEE T. PAMI}, vol.~34,
  no.~12, pp. 2407--2419, 2012.

\bibitem{rglTB}
A.~R.~M. y~Ter{\'a}n, M.~Gouiff{\`e}s, and L.~Lacassagne, ``Total bregman
  divergence for multiple object tracking,'' in \emph{Proc.\ of the 20$^{~th}$
  IEEE International Conference on Image Processing}, 2013.

\bibitem{svOB}
W.~Stummer and I.~Vajda, ``On {B}regman distances and divergences of
  probability measures,'' \emph{IEEE Trans. on Information Theory}, vol.~58,
  pp. 1277--1288, 2012.

\bibitem{aTQ}
S.-I. Amari, ``New developments of information geometry (17): {T}sallis
  $q$-entropy, escort geometry, conformal geometry,'' in \emph{Mathematical
  Sciences (suurikagaku)}.\hskip 1em plus 0.5em minus 0.4em\relax Science
  Company, October 2012, no. 592, pp. 73--82, in japanese.

\bibitem{aIG}
------, ``New developments of information geometry (26): Information geometry
  of convex programming and game theory,'' in \emph{Mathematical Sciences
  (suurikagaku)}.\hskip 1em plus 0.5em minus 0.4em\relax Science Company,
  November 2013, no. 605, pp. 65--74, in japanese.

\bibitem{zDF}
J.~Zhang, ``Divergence function, duality, and convex analysis,'' \emph{Neural
  Computation}, vol.~16, pp. 159--195, 2004.

\bibitem{anMO}
S.-I. Amari and H.~Nagaoka, \emph{Methods of Information Geometry}.\hskip 1em
  plus 0.5em minus 0.4em\relax Oxford University Press, 2000.

\bibitem{bDM}
M.~Basseville, ``Divergence measures for statistical data processing,''
  \emph{Signal Processing}, vol.~93, pp. 621--633, 2013.

\bibitem{omaAD}
A.~Ohara, H.~Matsuzoe, and S.-I. Amari, ``A dually-flat structure on the space
  of escort distributions,'' \emph{Journal of Physics : conference series},
  vol. 201, no. 012012, 2010.

\bibitem{bnnBV}
J.-D. Boissonnat, F.~Nielsen, and R.~Nock, ``{Bregman Voronoi Diagrams},''
  \emph{DCG}, vol.~44, no.~2, pp. 281--307, 2010.

\bibitem{sTST}
A.-B. Sossinsky, ``Tolerance space theory and some applications,'' \emph{Acta
  Applicandae Mathematicae}, vol.~5, pp. 137--167, 1986.

\bibitem{abbBD}
S.~Acharyya, A.~Banerjee, and D.~Boley, ``Bregman divergences and triangle
  inequality,'' in \emph{Proc.\ of the $13^{th}$ SIAM International Conference
  on Data Mining}, 2013, pp. 476--484.

\bibitem{cOA}
F.-H. Clarke, \emph{Optimization and Nonsmooth Analysis}.\hskip 1em plus 0.5em
  minus 0.4em\relax Wiley, 1989.

\end{thebibliography}

\section{Appendix}\label{appen}

\subsection{Proof of Lemma \ref{lleft1}}\label{proof_lleft1}

Any left population minimizer $\bm{\mu}$ satisfies $\nabla
\sum_i{D_{\varphi, g}^v (\bm{\mu} : \bm{x}_i)} = \bm{0}$, and so, after
simplification, we obtain:
\begin{eqnarray}
J_v(\bm{\mu}) \sum_i {g(\bm{x}_i) \left(\nabla\varphi(v(\bm{\mu})) -
    \nabla\varphi(v(\bm{x}_i))\right)} & = & \bm{0}\:\:,\label{leftpm}
\end{eqnarray}
where $J_v$ is the Jacobian of $v$. Since $v$ is bijective, the null
space of $J_v$ is reduced to $\{\bm{0}\}$, and so we must have:
\begin{eqnarray*}
\sum_i {g(\bm{x}_i) \left(\nabla\varphi(v(\bm{\mu})) -
    \nabla\varphi(v(\bm{x}_i))\right)} & = & \bm{0} \:\:,
\end{eqnarray*}
which, after solving for $\bm{\mu}$, yields:
\begin{eqnarray*}
\bm{\mu} \hspace{-0.2cm}& = &  \hspace{-0.2cm} v^{-1} \nabla\varphi^{-1}\left( \frac{1}{\sum_i g(\bm{x}_i)} \sum_i g(\bm{x}_i) \nabla\varphi(v(\bm{x}_i))\right)\:\:.
\end{eqnarray*}
There remains to use the $u$-coordinate mapping to obtain (\ref{propuv}).

\subsection{Proof of Lemma \ref{lleft2}}\label{proof_lleft2}

Let $\bm{x}_u \defeq u(\bm{x}), \forall \bm{x}$ and let
${\mathcal{S}}_u \defeq \{(\bm{x}_i)_u, i = 1, 2, ..., n\}$. We have $\bm{\mu}_u =
\sum_i w_i (\bm{x}_i)_u$ and 
\begin{eqnarray}
D(\bm{\mu} : \bm{x}_i) & =& 
D(u^{-1}(\bm{\mu}_u) : u^{-1}((\bm{x}_i)_u)) \nonumber\\
 & \defeq & D_2(\bm{\mu}_u :
(\bm{x}_i)_u)\label{dd0}\:\:.
\end{eqnarray} 
From (\ref{dd0}), the assumptions on $D$ and the
properties of $u$, it follows that $D_2$ is
non-negative, differentiable, satisfies $D_2(\bm{x} : \bm{x}) = 0,
\forall \bm{x}$, and its left population minimizer over ${\mathcal{S}}_u$
is the weighted arithmetic average $\bm{\mu}_u$: it is thus an ordinary Bregman
divergence \cite{afAC} with:
\begin{eqnarray}
D_2(\bm{\mu}_u :
(\bm{x}_i)_u) & = & w_i D_\phi ((\bm{x}_i)_u : \bm{\mu}_u)\:\:,\label{dd1}
\end{eqnarray}
for some strictly convex differentiable $\phi$. Calling to convex
conjugates, we obtain:
\begin{eqnarray}
\lefteqn{D_\phi ((\bm{x}_i)_u : \bm{\mu}_u)}\nonumber\\
 & = & D_{\phi^\star}
(\nabla\phi(\bm{\mu}_u) : \nabla\phi((\bm{x}_i)_u))\nonumber\\
 & = & D_{\phi^\star}
( (\nabla\phi \circ u)(\bm{\mu}) : (\nabla\phi\circ
u)(\bm{x}_i))\nonumber\\
 & \defeq & D_\varphi(v(\bm{\mu}) : v(\bm{x}_i))\label{dd2}\:\:,
\end{eqnarray}
with $\varphi \defeq \phi^\star$ and $\struct{v}{u}{\phi}$-geometric structure. $\struct{u}{v}{\varphi}$ is thus a geometric structure and merging
(\ref{dd0} --- \ref{dd2}), we obtain:
\begin{eqnarray}
D(\bm{\mu} : \bm{x}_i) & = & w_i D_\varphi(v(\bm{\mu}) :
v(\bm{x}_i))\:\:.\nonumber\\
 & = & g(\bm{x}_i) D_\varphi(v(\bm{\mu}) :
v(\bm{x}_i))\:\:, \label{dd3}
\end{eqnarray}
for some $g$ admitting continuous directional derivatives which meets $g(\bm{x}_i) = w_i,
\forall i = 1, 2, ..., n$ (we can pick \textit{e.g.} a degree-$n$ polynomial). We obtain
(\ref{dd4}), as claimed. This ends the proof of Lemma \ref{lleft2}.

\subsection{Proof of Lemma \ref{ld}}\label{proof_ld}

The first part of the proof is standard, and shows that
$\proj_{{\mathcal{S}},\varphi,g}$ is the set of critical points for the right parameter \cite{cOA}. Assume $\bm{\mu}$ is a population
minimizer, and define $\mathrm{D}_{t,\bm{z}} D_{\varphi,g}({\mathcal{S}}:\bm{y}) \defeq
\sum_i {\mathrm{D}_{t,\bm{z}} D_{\varphi,g}(\bm{x}_i:\bm{y})}$.
Fix any valid direction $\bm{z}$. Because $\bm{\mu}$ is a right population
minimizer, it comes $\mathrm{D}_{t,\bm{z}} D_{\varphi,g}({\mathcal{S}}:\bm{\mu})\leq
0$ for $t\leq 0$, and $\mathrm{D}_{t,\bm{z}} D_{\varphi,g}({\mathcal{S}}:\bm{\mu}) \geq
0$ for $t\geq 0$. Since directional derivatives are defined in
direction $\bm{z}$, we obtain $0\leq \lim_{t\downarrow 0}
\mathrm{D}_{t,\bm{z}} D_{\varphi,g}({\mathcal{S}}:\bm{\mu}) =
\mathrm{D}_{\bm{z}} D_{\varphi,g}({\mathcal{S}}:\bm{\mu}) = \lim_{t\uparrow 0}
\mathrm{D}_{t,\bm{z}} D_{\varphi,g}({\mathcal{S}}:\bm{\mu})\leq 0$ and
$\bm{\mu}$ is a solution of:
\begin{eqnarray}
\lim_{t\rightarrow 0}
\mathrm{D}_{t,\bm{z}} D_{\varphi,g}({\mathcal{S}}:\bm{\mu})& = &
\mathrm{D}_{\bm{z}} \sum_i D_{\varphi,g}(\bm{x}_i:\bm{\mu}) \nonumber\\
 & = & 0\:\:. \label{defder2}
\end{eqnarray}
We plug in (\ref{defder2}) the expression of $D_{\varphi,g}$ and obtain
that for any right population minimizer
$\bm{\mu}$, the following holds:
\begin{eqnarray}
\lefteqn{\frac{1}{n} \mathrm{D}_{\bm{z}} \sum_i
D_{\varphi,g}(\bm{x}_i:\bm{\mu})}\nonumber\\
 & = &  \mathrm{D}_{\bm{z}}
g(\bm{\mu}) \times \frac{1}{n} \sum_i
D_\varphi(\bm{x}_i:\bm{\mu})\nonumber\\
 & & +g(\bm{\mu})
(\bm{\mu}-\overline{\bm{x}})^\top \mathrm{H}\varphi(\bm{\mu}) \bm{z} \nonumber\\
 & = & 0\:\:.\label{propM}
\end{eqnarray}
Rewriting, we thus need:
\begin{eqnarray}
\lefteqn{\mathrm{D}_{\bm{z}} g(\bm{\mu}) \times (\overline{\varphi} -
\varphi(\bm{\mu}))}\nonumber\\
 & = & (\overline{\bm{x}} -
\bm{\mu})^\top(g(\bm{\mu}) \mathrm{H}\varphi(\bm{\mu}) \bm{z}
\nonumber\\
 & &+ \mathrm{D}_{\bm{z}} g(\bm{\mu})\nabla \varphi(\bm{\mu}))\label{eqfon2}\\ 
 & = & (\overline{\bm{x}} - \bm{\mu})^\top \mathrm{D}_{\bm{z}}
 (g(\bm{\mu})\nabla \varphi(\bm{\mu}))\:\:, \nonumber
\end{eqnarray}
which implies $\bm{\mu} \in
\proj_{{\mathcal{S}},\varphi,g}$. If a population minimizer does not
belong to $\proj_{{\mathcal{S}},\varphi,g}$, it is in the non
differentiable part of the boundary, that is, in $\boundary_{\varphi,g}$.
Eq. (\ref{eqfon2}) brings:
\begin{eqnarray}
\lefteqn{\mathrm{D}_{\bm{z}} g(\bm{\mu}) \times \frac{1}{n} \sum_i
D_{\varphi}(\bm{x}_i:\bm{\mu})}\nonumber\\
 & = & g(\bm{\mu}) \times (\overline{\bm{x}} -
\bm{\mu})^\top \mathrm{H}\varphi(\bm{\mu}) \bm{z}\nonumber\:\:,
\end{eqnarray}
and so:
\begin{eqnarray}
\lefteqn{\frac{1}{n} \sum_i
D_{\varphi,g}(\bm{x}_i:\bm{\mu})}\nonumber\\
 & = & \frac{(\overline{\bm{x}} -
\bm{\mu})^\top \mathrm{H}\varphi(\bm{\mu}) \bm{z}}{\mathrm{D}_{\bm{z}}
g(\bm{\mu})} g^2(\bm{\mu})\:\:, \label{valdiv}
\end{eqnarray}
a
quantity which does not depend on the direction $\bm{z} \neq
\bm{0}$. 
Fixing as direction $\bm{z} = \overline{\bm{x}} -
\bm{\mu}$ yields the statement of (\ref{valdivmaha}).

\subsection{Proof of Theorem \ref{thh0}}\label{proof_thh0}
We first need the following Lemma.

\begin{lemma}\label{lzigen}
Suppose $g(\bm{x}) = f(\nabla \varphi(\bm{x}))$, with $\varphi$ strictly convex
twice differentiable and $f$ differentiable. $\forall \bm{\mu} \in
\proj_{{\mathcal{S}},\varphi,g}$, we have:
\begin{eqnarray}
\lefteqn{\overline{\bm{x}} -
\bm{\mu}}\nonumber\\
 & = & 
\frac{1}{f(\nabla
\varphi(\bm{\mu}))}\left( \frac{1}{n} \sum_i
D_{\varphi}(\bm{x}_i:\bm{\mu}) \right) \nonumber\\
& & \cdot \nabla f(\nabla
\varphi(\bm{\mu})) \:\:.\label{pdiff}
\end{eqnarray}
\end{lemma}
\begin{proof}
The chain rule gives 
\begin{eqnarray*}
\mathrm{D}_{\bm{z}} g(\bm{\mu}) & = &
\mathrm{D}_{\bm{z}}  (f(\nabla \varphi(\bm{\mu})))\nonumber\\
 & =& \mathrm{D}_{\mathrm{D}_{\bm{z}}\nabla\varphi(\bm{\mu})}
 f(\nabla\varphi) \nonumber\\
 & =&  \mathrm{D}_{\mathrm{H}\varphi(\bm{\mu}) \bm{z}}
 f(\nabla\varphi) \nonumber\\
 & =& \bm{z}^\top (\mathrm{H}\varphi(\bm{\mu}))^\top \nabla f(\nabla \varphi(\bm{\mu})) \nonumber\\
 & =&  \bm{z}^\top \mathrm{H}\varphi(\bm{\mu}) \nabla f(\nabla
 \varphi(\bm{\mu}))\:\:,\nonumber
\end{eqnarray*}
so that (\ref{eqfon2}) becomes:
\begin{eqnarray}
\lefteqn{\bm{z}^\top \mathrm{H}\varphi(\bm{\mu}) \left((\overline{\varphi} -
\varphi(\bm{\mu})) \times \nabla f(\nabla
\varphi(\bm{\mu}))\right)}\nonumber\\
 \hspace{-0.2cm}& =
&  \hspace{-0.3cm}\bm{z}^\top \mathrm{H}\varphi(\bm{\mu}) \left(g(\bm{\mu}) \times (\overline{\bm{x}} -
\bm{\mu})\right) \nonumber\\
 & & + \bm{z}^\top \mathrm{H}\varphi(\bm{\mu})
\left(\nabla \varphi(\bm{\mu})^\top (\overline{\bm{x}} -
\bm{\mu})\times \nabla f(\nabla \varphi(\bm{\mu}))\right)\nonumber\\
 \hspace{-0.2cm}& = &  \hspace{-0.3cm}\bm{z}^\top \mathrm{H}\varphi(\bm{\mu})
 \hspace{-0.2cm}\left( 
\begin{array}{c}
g(\bm{\mu}) \times (\overline{\bm{x}} -
\bm{\mu}) \\
+ \\
 \hspace{-0.2cm}\nabla \varphi(\bm{\mu})^\top (\overline{\bm{x}} -
\bm{\mu}) \times \nabla f(\nabla \varphi(\bm{\mu}))
\end{array} \hspace{-0.2cm}\right) \hspace{-0.2cm}\label{last}\:\:.
\end{eqnarray}
Eq. (\ref{last}) is of the form $\bm{z}^\top \mathrm{H}\varphi(\bm{\mu}) \bm{a}=
\bm{z}^\top \mathrm{H}\varphi(\bm{\mu}) \bm{b}$ which implies $\bm{a} = \bm{b}$ as
otherwise picking $\bm{z} = \bm{b} - \bm{a} \neq \bm{0}$ would contradict the
positive definiteness of the Hessian $\mathrm{H}\varphi$. After
reordering, we get eq. (\ref{pdiff}).
This ends the proof of Lemma \ref{lzigen}.
\cqfd\end{proof}
(Continued proof of Theorem \ref{thh0})
Let us fix $f(\bm{x}) = f_p(\bm{x}) \defeq
1/(1+\|\bm{x}\|_p^p)^{1/p}$. We have:
\begin{eqnarray}
\nabla f(\nabla\varphi(\bm{\mu}))  \hspace{-0.2cm} \hspace{-0.1cm}& = & \hspace{-0.2cm} \hspace{-0.1cm}
-\frac{1}{(1+\|\nabla\varphi(\bm{\mu})\|_p^p)^{1+\frac{1}{p}}} \nabla_p \varphi(\bm{\mu})\:\:,
\end{eqnarray}
where $\nabla_{p} \varphi (\bm{\mu})$ is the vector whose
$j^{th}$ coordinate is $\mathrm{sign}(\nabla^j) |\nabla^j|^{p-1}$,
where $\nabla^j$ is coordinate $j$ of $\nabla\varphi(\bm{\mu})$. This
definition brings the following relationship:
\begin{eqnarray}
\nabla \varphi (\bm{\mu})^\top \nabla_{p} \varphi (\bm{\mu}) & = & \|\nabla \varphi (\bm{\mu})\|_p^p\:\:.\label{ppp1}
\end{eqnarray}
We now use (\ref{pdiff}) with $f=f_p$ and obtain:
\begin{eqnarray}
\overline{\bm{x}} -
\bm{\mu} & = & 
-\frac{1}{1+\|\nabla\varphi(\bm{\mu})\|_p^p}\left( \frac{1}{n} \sum_i
D_{\varphi}(\bm{x}_i:\bm{\mu}) \right) \nonumber\\
 & & \times\nabla_p \varphi(\bm{\mu}) \:\:.\label{pdiff22}
\end{eqnarray}
Coordinate $j$ in $\overline{\bm{x}} -
\bm{\mu}$, $(\overline{\bm{x}} -
\bm{\mu})^j$, satisfies:
\begin{eqnarray}
\lefteqn{((\overline{\bm{x}} -
\bm{\mu})^j)^{q-1}}\nonumber\\
 & = & 
 \left(\frac{-m}{1+\|\nabla\varphi(\bm{\mu})\|_p^p}\right)^{q-1} \times
(\mathrm{sign}(\nabla^j) |\nabla^j|^{p-1})^{q-1} \nonumber\\
 & = & 
-\left(\frac{m}{1+\|\nabla\varphi(\bm{\mu})\|_p^p}\right)^{q-1} \times
\nabla^j \:\:,\label{propxm}
\end{eqnarray}
where $m\defeq (1/m) \sum_i
D_{\varphi}(\bm{x}_i:\bm{\mu}) \geq 0$. Eq. (\ref{propxm}) holds because
$(p-1)(q-1)=1$ and $q = 2k \in {\mathbb{N}}$ is even. So, we may write:
\begin{eqnarray}
\lefteqn{\|\overline{\bm{x}} - \bm{\mu}\|_q^q}\nonumber\\
 & \defeq &  \hspace{-0.3cm}\sum_j
{((\overline{\bm{x}}-\bm{\mu})^j)^q}\nonumber\\
 & = &  \hspace{-0.3cm}\sum_j
{((\overline{\bm{x}}-\bm{\mu})^j)^{q-1} \times
  (\overline{\bm{x}}-\bm{\mu})^j}\nonumber\\
& = &  \hspace{-0.3cm}-\left(\frac{m}{1+\|\nabla\varphi(\bm{\mu})\|_p^p}\right)^{q-1}
 \hspace{-0.5cm}\times (\overline{\bm{x}} -
\bm{\mu} )^\top \nabla\varphi(\bm{\mu})\label{res1}\:\:.
\end{eqnarray}
We make the inner product of (\ref{pdiff22}) with
$\nabla\varphi(\bm{\mu})$ and obtain because of (\ref{ppp1}):
\begin{eqnarray}
\lefteqn{(\overline{\bm{x}} -
\bm{\mu})^\top  \nabla\varphi(\bm{\mu})}\nonumber\\
& = & 
-\frac{\|\nabla\varphi(\bm{\mu})\|_p^p}{1+\|\nabla\varphi(\bm{\mu})\|_p^p}\left( \frac{1}{n} \sum_i
D_{\varphi}(\bm{x}_i:\bm{\mu}) \right) \:\:,\nonumber\\
& = & -\alpha (\overline{\varphi} -
  \varphi(\bm{\mu})) +
\alpha (\overline{\bm{x}} -
\bm{\mu})^\top  \nabla\varphi(\bm{\mu})\:\:,
\end{eqnarray}
with $\alpha \defeq
\|\nabla\varphi(\bm{\mu})\|_p^p/(1+\|\nabla\varphi(\bm{\mu})\|_p^p)$. We
obtain $-(1-\alpha) (\overline{\bm{x}} -
\bm{\mu})^\top  \nabla\varphi(\bm{\mu}) = \alpha (\overline{\varphi} -
  \varphi(\bm{\mu}))$, that is, after adding $(1-\alpha) (\overline{\varphi} -
  \varphi(\bm{\mu}))$ on both sides:
\begin{eqnarray}
\lefteqn{\overline{\varphi} -
\varphi(\bm{\mu})}\nonumber\\
  & = & \frac{1}{1 + \|\nabla
    \varphi(\bm{\mu})\|_p^p} \left( \frac{1}{n} \sum_i
D_\varphi(\bm{x}_i:\bm{\mu}) \right) \:\:.\label{res2}
\end{eqnarray}
We finally get from (\ref{res1}) and (\ref{res2}), using the shorthand
$m\defeq (1/n) \sum_i
D_{\varphi}(\bm{x}_i:\bm{\mu})$:
\begin{eqnarray}
\lefteqn{\|\overline{\bm{x}}^+ - \bm{\mu}^+\|_q}\nonumber\\
 & = & \left( |\overline{\varphi} - \varphi(\bm{\mu})|^q +
  \|\overline{\bm{x}} - \bm{\mu}\|_q^q \right)^\frac{1}{q}\nonumber\\
 & = & \left( (\overline{\varphi} - \varphi(\bm{\mu}))^{q-1}\times (\overline{\varphi} - \varphi(\bm{\mu})) +
  \|\overline{\bm{x}} - \bm{\mu}\|_q^q \right)^\frac{1}{q}\nonumber\\
 & = & \left( 
\begin{array}{c}
   \left(\frac{m}{1 + \|\nabla
    \varphi(\bm{\mu})\|_p^p}\right)^{q-1}(\overline{\varphi} - \varphi(\bm{\mu})) \\
- \\
\left(\frac{m}{1 + \|\nabla
    \varphi(\bm{\mu})\|_p^p}\right)^{q-1} (\overline{\bm{x}} -
\bm{\mu})^\top  \nabla\varphi(\bm{\mu})
\end{array}\right)^{\frac{1}{q}}\nonumber\\
 & = & \frac{m^{\frac{1}{p}}}{(1 + \|\nabla
    \varphi(\bm{\mu})\|_p^p)^{\frac{1}{p}}} \nonumber\\
 & & \times \left(\overline{\varphi}
  - \varphi(\bm{\mu}) -  (\overline{\bm{x}} -
\bm{\mu})^\top  \nabla\varphi(\bm{\mu})\right)^\frac{1}{q}\nonumber\\
 & = & \frac{m^{\frac{1}{p}+\frac{1}{q}}}{(1 + \|\nabla
    \varphi(\bm{\mu})\|_p^p)^{\frac{1}{p}}} \nonumber\\
 & = & \frac{m}{(1 + \|\nabla
    \varphi(\bm{\mu})\|_p^p)^{\frac{1}{p}}} \nonumber\\
 & = & \frac{1}{n} \sum_i D_{\varphi, g_p}
 (\bm{x}_i:\bm{\mu}) \nonumber\\
 & = & \frac{1}{K} \times \left(\frac{1}{n} \sum_i D_{\varphi, K g_p}
 (\bm{x}_i:\bm{\mu}) \right)\:\:, \nonumber
\end{eqnarray}
which yields the statement of Theorem \ref{thh0}.

\subsection{Proof of Lemma \ref{luv}}\label{proof_luv}

Clearly, $(u,u)$ holds since $\struct{u}{u}{\varphi}$ is a geometric structure for $\varphi \defeq (1/2) \sum_i
(x^i)^2$ so the relation is reflexive. If $\struct{u}{v}{\varphi}$ is
a geometric structure, then $\struct{v}{u}{\varphi^\star}$ is a
geometric structure, so the relation is symmetric, which completes the
proof that $(u,v)$ is a tolerance relation.

Let $(u,v)_\varphi$ and $(v,w)_\phi$ be two geometric structures. We
have $u\circ w^{-1} = \nabla\varphi \circ \nabla\phi$, and so
$J_{u\circ w^{-1}} = \mathrm{H}\varphi(\nabla\phi)
\mathrm{H}\phi$, that we want to be symmetric positive definite for
the ``geometric structure'' relation to be transitive. Both $\mathrm{H}\varphi$ and $\mathrm{H}\phi$ are
symmetric positive definite. Since (i) the product of two positive
definite matrices is positive definite iff their product is normal,
and (ii) the product of two symmetric matrices is symmetric iff their
have the same eigenspace, it follows that $\mathrm{H}\varphi(\nabla\phi) \mathrm{H}\phi \succ 0$
iff we have the diagonalizations $\mathrm{H}\varphi(\nabla\phi) = P D_1 P^\top$ and $\mathrm{H}\phi
= P D_2 P^\top$, with $P$ unitary and $D_1, D_2 \succ 0$. This finishes the proof of Lemma \ref{luv}.

\subsection{Proof of Lemma \ref{lemcs}}\label{proof_lemcs}

We suppose without loss of generality that $K=1$.
The proof relies on the study in $[x_1,x_n]$ of function $\tilde{\varphi}'_\bot(x)
\defeq -1/\tilde{\varphi}'_{\mathcal{S}}(x)$ (see eq. (\ref{defyz})), which is the slope of the line
orthogonal to the segment which links $(\overline{x},
\overline{\varphi})$ to $(\overline{x},\varphi(\overline{x}))$.

Suppose $\varphi'(x)$ is $<0$ on $[x_1,x_n]$, which implies $\varphi(x_1)
\geq \overline{\varphi}$, and so $\overline{x}_\varphi \in
[x_1,\overline{x}]$, and satisfies $\varphi(\overline{x}_\varphi) = \overline{\varphi}$. 
It comes $\tilde{\varphi}'_\bot(x)\leq
0$ on $(\overline{x}_\varphi, \overline{x}]$, with $\lim_{x \downarrow \overline{x}_\varphi}
\tilde{\varphi}'_\bot(x) = -\infty$ and
$\tilde{\varphi}'_\bot(\overline{x}) = 0$. Because
$\tilde{\varphi}'_{\mathcal{S}}(x)$ is continuous, so is
$\tilde{\varphi}'_\bot(x)$ and so there must be $\mu \in (\overline{x}_\varphi,
\overline{x}]$ such that $\tilde{\varphi}'_\bot(\mu) =
\varphi'(x)$. This $\mu$ is a candidate right population minimizer.

Suppose now that $\varphi'(x)$ is $>0$ on $[x_1,x_n]$, which implies $\varphi(x_n)
\geq \overline{\varphi}$, and so $\overline{x}_\varphi \in
[\overline{x}, x_n]$, and satisfies $\varphi(\overline{x}_\varphi) = \overline{\varphi}$. This time,
$\tilde{\varphi}'_\bot(\overline{x}) = 0$ and $\lim_{x \uparrow \overline{x}_\varphi}
\tilde{\varphi}'_\bot(x) = +\infty$, so there must be $\mu \in
[\overline{x} , \overline{x}_\varphi)$ such that $\tilde{\varphi}'_\bot(\mu) =
\varphi'(x)$. This $\mu$ is a candidate right population minimizer. This ends the proof of Lemma \ref{lemcs}.

\subsection{Proof of Lemma \ref{lemsbd}}\label{proof_lemsbd}

We build upon eq. (\ref{leftpm}). Any left population minimizer is a solution of:
\begin{eqnarray}
\lefteqn{0 = \frac{\mathrm{d}}{\mathrm{d}\mu} \sum_i w_i D^v_{\varphi, g}
(\mu/w_i : x_i/w_i)}\nonumber\\
 & = & \hspace{-0.2cm}\sum_i g\parent{\frac{x_i}{w_i}}
  v'\parent{\frac{\mu}{w_i}} \parent{u\parent{\frac{\mu}{w_i}} -
    u\parent{\frac{x_i}{w_i}}} \nonumber\\
 & = & \hspace{-0.2cm}\sum_i g\parent{\frac{x_i}{w_i}}
  v'\parent{\frac{\mu}{w_i}} \parent{\frac{\mu - x_i}{w_i}}
  u'\parent{\frac{\mu_i}{w_i}}\:\:. \label{refmumu}
\end{eqnarray}
where $\mu_i \defeq \mu + \alpha_i(x_i - \mu)$ for some
$0<\alpha_i<1$. Eq (\ref{refmumu}) is obtained after $n$ Taylor
expansions of $u$. We also have $(v \circ u^{-1})' = (v'\circ u^{-1})/(u' \circ u^{-1}) \defeq
 (\varphi^\star)''$, and so, since $\varphi^\star$ is strictly convex,
 $v'(x)$ and $u'(x)$ have the same sign. 
Since $u$ is strictly monotonous, $u'$ does not change
 sign over its domain, and so 
 the product $\pi_i \defeq g\parent{x_i/w_i}
  v'\parent{\mu/w_i} 
  u'\parent{\mu_i/w_i}$ is non negative, $\forall i$. We can summarize (\ref{refmumu}) as
  $h(\mu) \defeq \sum_i \pi_i (\mu - x_i) / w_i = 0$: since all $w_i >
  0$, we get $h(\min_i x_i) \leq 0$ and $h(\max_i x_i) \geq 0$. Since each summand
  in $h$ is the
  product of continuous functions, there must be $\mu \in [\min_i x_i, \max_i x_i]$
  such that (\ref{refmumu}) holds, and since $h$ is strictly
  increasing, there is only one such point. Since $\sum_i
  D_{\varphi,g}(\mu:x_i;w_i)$ is strictly convex in $\mu$, this is the
  left population minimizer. This ends the proof of Lemma \ref{lemsbd}.

\clearpage
\newpage

\hrule\hrule

\noindent These subsections present the additional proofs not in the published paper.

\hrule\hrule

\subsection{Proof of Theorem \ref{thex}}\label{subright}

We distinguish two cases, first
  assuming that $\overline{x} \not\in
  \proj_{{\mathcal{S}},\varphi}$. As shown in Figure \ref{f-rot}, we perform a rotation of angle
$\theta$ chosen so that $(1/n)
\sum_i{\matrice{M}_\theta \bm{x}^+_i} = \matrice{M}_\theta((1/n)
\sum_i{\bm{x}^+_i}) = \matrice{M}_\theta \overline{\bm{x}}^+ =
\matrice{M}_\theta \bm{\mu}^+$, with 
\begin{eqnarray}
\matrice{M}_\theta & \defeq & \left[
\begin{array}{cc}
\cos \theta & -\sin \theta \\
\sin \theta & \cos \theta
\end{array} \right] \:\:, \nonumber\\
\bm{x}^+_i & \defeq & \left[
\begin{array}{c}
x_i \\
\varphi(x_i)
\end{array} \right] \:\:, \nonumber
\end{eqnarray}
and $\overline{\bm{x}}^+$ and $\bm{\mu}^+$ are defined in (\ref{defxp}) and (\ref{defmup}). So, the population minimizer of
${\mathcal{S}}$ after rotation is just the average, which implies,
since the distortion is invariant to rotation after assumption (iii)
and satisfies (i) and (ii), that the distortion equals an ordinary Bregman
divergence computed after rotation, $D_{\phi^\mathrm{rot}}$, for some convex
differentiable $\phi^\mathrm{rot} : {\mathbb{R}} \rightarrow
{\mathbb{R}}$ \cite{afAC,bgwOT}. We then get:
\begin{eqnarray}
\lefteqn{D(x : \mu)}\nonumber\\
 & = & D((\matrice{M}_\theta
\bm{x}^+)^1 : (\matrice{M}_\theta
\bm{\mu}^+)^1) \nonumber\\
 & = & D_{\phi^\mathrm{rot}}((\matrice{M}_\theta
\bm{x}^+)^1 : (\matrice{M}_\theta
\overline{\bm{x}}^+)^1) \nonumber\\
 & = & \phi^\mathrm{rot}((\matrice{M}_{\theta}
\bm{x}^+)^1) - \phi^\mathrm{rot}((\matrice{M}_{\theta}
\overline{\bm{x}}^+)^1) \nonumber\\
 & & - ((\matrice{M}_\theta
\bm{x}^+)^1 - (\matrice{M}_\theta
\overline{\bm{x}}^+)^1) \phi'^{\mathrm{rot}}((\matrice{M}_\theta
\overline{\bm{x}}^+)^1)\:\:\label{lastD}
\end{eqnarray}
(exponent ``1'' refers to the $x$-coordinate). Let us denote $\phi : {\mathbb{R}}^d \rightarrow {\mathbb{R}}$ the
function obtained from $\phi^\mathrm{rot}$ by rotation
$\matrice{M}_{\theta}$ of the curve. We have:
\begin{eqnarray*}
\phi^\mathrm{rot}((\matrice{M}_{\theta}
\bm{x}^+)^1) & = & (\matrice{M}_\theta
\bm{x}^+)^2 = x\sin\theta + \phi(x)\cos\theta\:\:,\\
\phi^\mathrm{rot}((\matrice{M}_{\theta}
\overline{\bm{x}}^+)^1) & = &  (\matrice{M}_\theta
\overline{\bm{x}}^+)^2 = \mu \sin\theta +
\phi(\mu)\cos\theta\:\:
\end{eqnarray*}
(exponent ``2'' refers here to the $y$-coordinate), and
\begin{eqnarray*}
\lefteqn{((\matrice{M}_\theta
\bm{x}^+)^1 - (\matrice{M}_\theta
\overline{\bm{x}}^+)^1) \phi'^{\mathrm{rot}}((\matrice{M}_\theta
\overline{\bm{x}}^+)^1)}\nonumber\\
 & = & (x \cos\theta - \phi(x)\sin\theta - \mu \cos\theta +
 \phi(\mu)\sin\theta)\\
 & & \times \frac{\sin\theta + \phi'(\mu)\cos \theta}{\cos\theta - \phi'(\mu)\sin \theta}\:\:.
\end{eqnarray*}
Eq. (\ref{lastD}) thus becomes $D(x : \mu) = Q + R$, with $Q
\defeq (\phi(x) - \phi(\mu))\cos\theta$, and:
{\footnotesize
\begin{eqnarray}
\lefteqn{R}\nonumber\\
 & \defeq & (x - \mu)\sin\theta \nonumber\\
 & & - \left\{
\begin{array}{c}
x \cos\theta - \phi({x})\sin\theta\\
- \mu \cos\theta +
 \phi({\mu})\sin\theta
\end{array}
\right\} \times \frac{\sin\theta + \phi'(\mu)\cos
  \theta}{\cos\theta - \phi'(\mu)\sin \theta}\nonumber\\
 & = & \frac{\left\{
\begin{array}{c}
x \sin\theta \cos\theta - x \phi'(\mu)\sin^2\theta \\
-\mu \sin\theta \cos\theta + \mu
\phi'(\mu)\sin^2\theta \\
-x \sin\theta \cos\theta - x \phi'(\mu) \cos^2\theta\\
+\mu \sin\theta \cos\theta + \mu \phi'(\mu)
\cos^2\theta\\
+ (\phi({x})-\phi({\mu}))\sin\theta(\sin\theta + \phi'(\mu)\cos
  \theta)
\end{array}
\right\}}{\cos\theta - \phi'(\mu)\sin \theta}\nonumber\\
 & = & \frac{\left\{
\begin{array}{c}
- (x  - \mu) \phi'(\mu)\\
+ (\phi({x})-\phi({\mu}))(1-\cos^2\theta + \phi'(\mu)\sin\theta\cos
  \theta)
\end{array}
\right\}}{\cos\theta - \phi'(\mu)\sin \theta}\nonumber\\
 & = & \frac{\phi({x})-\phi({\mu}) - (x  - \mu)
   \phi'(\mu)}{\cos\theta - \phi'(\mu)\sin
   \theta} \nonumber\\
 && - (\phi({x})-\phi({\mu}))\cos\theta\nonumber\:\:.
\end{eqnarray}
}
We thus get:
\begin{eqnarray}
D({x} : {\mu}) & = & \frac{\phi({x})-\phi({\mu}) - (x  - \mu)
   \phi'(\mu)}{\cos\theta - \phi'(\mu)\sin
   \theta} \nonumber\\
 & = & D_{\phi, g}(x:\mu)\:\:, \label{defdd}
\end{eqnarray}
with
\begin{eqnarray}
 g({\mu}) & \defeq & \frac{\sqrt{1+\varphi'({\mu})^2}}{1 +
   \phi'(\mu) \varphi'({\mu})}\:\:, \label{defgg}
\end{eqnarray}
as indeed $\cos\theta = 1/\sqrt{1+\varphi'({\mu})^2}$ and $\sin\theta
= -\varphi'(\mu)/\sqrt{1+\varphi'({\mu})^2}$. 

Eqs (\ref{defdd}) and (\ref{defgg}) are the consequences of assumptions
(i-iv). On the other hand, assumption (iv) and (\ref{defcorr2}) imply for
right population minimizer $\mu$ of set ${\mathcal{S}}$:
\begin{eqnarray}
-\varphi'(\mu) (\overline{\varphi} - \varphi(\mu)) & = & (\overline{x} - \mu)  \label{fp222}\:\:.
\end{eqnarray}
Since $\overline{x} \neq \mu$, we obtain $\varphi'(\mu) \neq 0$, so we can
replace $\varphi'(\mu)$ in (\ref{defgg}) by its expression from
(\ref{fp222}) and obtain:
\begin{eqnarray*}
g(\mu) & = & \frac{\sqrt{1+\left(\frac{\overline{x} -
        \mu}{\overline{\varphi} -
        \varphi(\mu)}\right)^2}}{1-\phi'(\mu) \times\left(\frac{\overline{x} -
        \mu}{\overline{\varphi} -
        \varphi(\mu)}\right)}\\
 & = & \frac{\sqrt{(\overline{\varphi} -
        \varphi(\mu))^2 + (\overline{x} -
        \mu)^2}}{\overline{\varphi} -
        \varphi(\mu) - (\overline{x}-\mu)\phi'(\mu)}\\
 & = & \frac{\|\overline{\bm{x}}^+ - \bm{\mu}^+\|_2}{\overline{\varphi} -
        \varphi(\mu) - (\overline{x}-\mu)\phi'(\mu)}\:\:,
\end{eqnarray*}
where $\overline{\bm{x}}^+$ and $\bm{\mu}^+$ are defined in (\ref{defxp})
and (\ref{defmup}). For any ${\mathcal{S}}$ whose right population
minimizer on $D_{\phi, g}$
is $\mu$, we get:
\begin{eqnarray}
\lefteqn{\frac{1}{n}\sum_i D({x}_i : {\mu})}\nonumber\\
 & = & \|\overline{\bm{x}}^+ - \bm{\mu}^+\|_2 \times \frac{\overline{\phi} -
        \phi(\mu) - (\overline{x}-\mu)\phi'(\mu)}{\overline{\varphi} -
        \varphi(\mu) - (\overline{x}-\mu)\phi'(\mu)}\label{find}
\end{eqnarray}
Because of assumption (iii), we want (\ref{find}) to be invariant to
rotation of the axes. Only term $\|\overline{\bm{x}}^+ -
\bm{\mu}^+\|_2$ is invariant because of assumption (iv). Both the
numerator and the denominator after the times in (\ref{find}) are not
invariant to rotation. To have their ratio invariant, it must
therefore be independent from the choice of ${\mathcal{S}}$, and thus
constant, so we have:
\begin{eqnarray*}
\lefteqn{\overline{\phi} -
        \phi(\mu) - (\overline{x}-\mu)\phi'(\mu)}\nonumber\\
 & = & K(\overline{\varphi} -
        \varphi(\mu) - (\overline{x}-\mu)\phi'(\mu))\:\:,
\end{eqnarray*}
Taking the derivative in some $x_i$ yields $\phi'(x_i) =
K\varphi'(x_i) - (K-1)x_i\phi'(\mu)$, which implies that the right
hand side is independent of $\mu$, and since $\phi'$ cannot always be
zero, we obtain $K=1$, and so:
\begin{eqnarray}
\phi & = & \varphi + \mathrm{constant}\:\:.
\end{eqnarray}
We obtain $D_{\phi, g} = D_{\varphi, g_\bot}$,
and this completes the proof when $\overline{x} \not\in
  \proj_{{\mathcal{S}},\varphi}$.

 If $\overline{x} \in
  \proj_{{\mathcal{S}},\varphi}$ is the population minimizer,
  then 
$D(x:\mu)=D(x:\overline{x})$ is an ordinary Bregman divergence \cite{afAC,bgwOT}, say $D_\phi$ for some
$\phi$ strictly convex twice differentiable. Because of assumption (iii),
it comes in this case:
\begin{eqnarray*}
D(x:y) & = & \sqrt{\frac{1+\phi'(\overline{x})^2}{1+\phi'(y)^2}}
D_\phi(x:y)\nonumber\\
 & = & \frac{K}{\sqrt{1+\phi'(y)^2}} D_\phi (x:y) = D_{\phi,
   K g_\bot} (x:y) \:\:.
\end{eqnarray*}
Because of assumption (iv) and Lemma \ref{cororth2}, $\phi' =
\varphi'$ and so $\phi = \varphi + \mathrm{constant}$. This completes the
proof in this second case, and completes the proof of Theorem \ref{thex}.

\begin{figure}[t]
\centering
\begin{tabular}{c} 
\epsfig{file=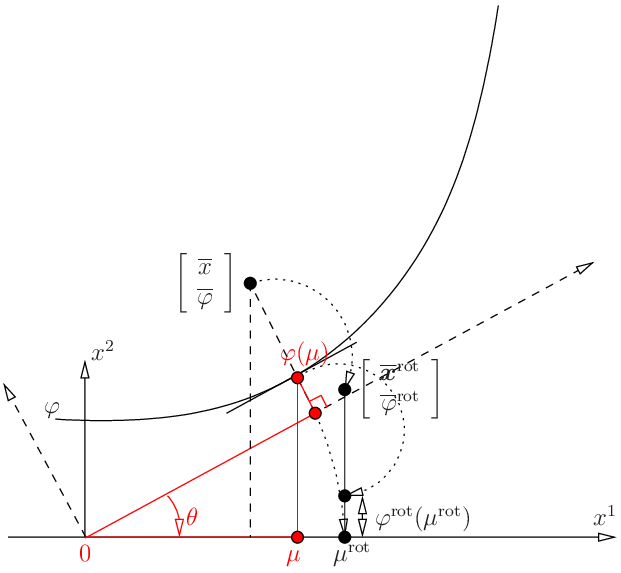, width=0.35\textwidth}
\end{tabular}
\caption{We perform a rotation of angle $\theta$ on $\varphi$ such
  that, after rotation, $\varphi'^{\mathrm{rot}}(\mu^{\mathrm{rot}}) = 0$,
  implying that $\overline{x}^{rot}$ is a population minimizer in $\proj_{{\mathcal{S}}^{\mathrm{rot}},\varphi^{\mathrm{rot}}}$.}\label{f-rot}
\end{figure}

\section{Proof of Lemma \ref{robl}}\label{sec-robl}

We use Lemma \ref{lleft1}, and
we fix ${\mathcal{S}}$ satisfying assumption (i). The left
population minimizers $\bm{\mu}$ and $\bm{\mu}_*$ satisfy:
\begin{eqnarray*}
u(\bm{\mu}) & = &  \frac{1}{\Sigma} \sum_i g(\bm{x}_i) u(\bm{x}_i)\:\:,\\
u(\bm{\mu}_*) & = &  \frac{1-\epsilon}{\Sigma_*} \sum_i g(\bm{x}_i) u(\bm{x}_i)
+ \frac{\epsilon g(\bm{x}_*)}{\Sigma_*} u(\bm{x}_*)\\
 & = & \frac{(1-\epsilon)\Sigma}{\Sigma_*} u(\bm{\mu}) + \frac{\epsilon
   g(\bm{x}_*)}{\Sigma_*} u(\bm{x}_*)\\
 & = & u(\bm{\mu}) + \underbrace{\frac{\epsilon
   g(\bm{x}_*)}{\Sigma_*}\left(u(\bm{x}_*) -
   u(\bm{\mu})\right)}_{\epsilon \bm{\delta}}\:\:,
\end{eqnarray*}
where $\Sigma \defeq \sum_i g(\bm{x}_i)$ and $\Sigma_* \defeq(1-\epsilon)\Sigma +
\epsilon g(\bm{x}_*)$. Now, a Taylor expansion of $u^{-1}$ brings:
\begin{eqnarray*}
\bm{\mu}_*  = u^{-1}\left(u(\bm{\mu}) + \epsilon\bm{\delta}\right) & = & \bm{\mu}
+\epsilon J_{u^{-1}}(\bm{\mu}_\alpha) \bm{\delta}\:\:,
\end{eqnarray*}
for some $\bm{\mu}_\alpha = u(\bm{\mu}) + \alpha \epsilon
\bm{\delta}$ with $0<\alpha <1$. We also have
$J_{u^{-1}}(\bm{\mu}_\alpha) = J^{-1}_u(u^{-1}(\bm{\mu}_\alpha))$,
which, since $\bm{\mu}_* -  \bm{\mu} = \epsilon \bm{\delta}_{\bm{\mu}}$, yields
the influence function of $\bm{x}_*$:
\begin{eqnarray}
\bm{\delta}_{\bm{\mu}} & = & \frac{g(\bm{x}_*)}{\Sigma_*} \times J^{-1}_u(u^{-1}(\bm{\mu}_\alpha)) \left(u(\bm{x}_*) - u(\bm{\mu})\right)\:\:.\label{dedelta}
\end{eqnarray}
Let $J_u$ denote
$J_u(u^{-1}(\bm{\mu}_\alpha))$ for short. Eq. (\ref{dedelta}) brings
\begin{eqnarray}
\|\bm{\delta}_{\bm{\mu}}\|_2^2 & = & \bm{\delta}^\top (J_u^{-1})^\top
J_u^{-1}\bm{\delta}\leq \|\bm{\delta}\|_2^2
\lambda_{\mathrm{max}}\:\:,\label{bdelt2}
\end{eqnarray}
for some upperbound $\lambda_{\mathrm{max}}$ on the eigenvalues of $(J_u^{-1})^\top
J_u^{-1}$. We also have
\begin{eqnarray*}
\|\bm{\delta}\|_2^2 & = & \frac{g(\bm{x}_*)}{\Sigma_*} \times \left\|
  u(\bm{x}_*) - u(\bm{\mu})\right\|_2^2\\
 & \leq & \frac{2 g(\bm{x}_*)}{\Sigma_*} \times (\|u(\bm{x}_*)\|_2^2 +
 \|u(\bm{\mu})\|_2^2)\\
 & & = \underbrace{\frac{2 g(\bm{x}_*) \|u(\bm{x}_*)\|_2^2}{\Sigma_*}}_{a} + \underbrace{\frac{2 g(\bm{x}_*) \|u(\bm{\mu})\|_2^2}{\Sigma_*}}_{b}\:\:.
\end{eqnarray*}
Because $\epsilon < 1-\tau$, we have $\Sigma_* \geq \tau
g(\bm{x}_1) \defeq K_1$ for
some $K_1>0$ which does not depend upon $\bm{x}_*$ or $\epsilon$. Because of assumption
(i), $g(\bm{x}_*) \leq K_2$ for some constant $K_2>0$ and so $b =
\tilde{O}(1)$ (the tilda meaning that the function does not depend
upon $\bm{x}_*$ or $\epsilon$). Because of assumption (ii),
$g(\bm{x}_*)\|u(\bm{x}_*)\|_2 \leq K_3$ for some constant $K_3>0$ and
so $a=\tilde{O}(1)$. Hence, $\|\bm{\delta}\|_2^2 = \tilde{O}(1)$. There remains to
plug this into (\ref{bdelt2}), and remark that $\lambda_{\mathrm{max}}
\leq 1/\lambda$ from assumption (iii), to conclude. 

\section{Proof of Lemma \ref{robr}}\label{subrobr}

The proof of this Lemma relies on the following Taylor expansions:
\begin{eqnarray}
v(\bm{\mu} + \epsilon \bm{\delta}_{\bm{\mu}}) & = & v(\bm{\mu}) + \epsilon
J_v \bm{\delta}_{\bm{\mu}}\:\:,\label{tayle1}
\end{eqnarray}
for some value $J_v$ of the Jacobian of $v$ in between $\bm{\mu}$ and
$\bm{\mu} + \epsilon \bm{\delta}_{\bm{\mu}}$, 
\begin{eqnarray}
v(\bm{x}_*) & = & v(\bm{\mu}) + J'_v (\bm{x}_*-\bm{\mu})\:\:,\label{tayle1b}
\end{eqnarray}
for some value $J'_v$ of the Jacobian of $v$ in between $\bm{\mu}$ and
$\bm{x}_*$, and
{\footnotesize
\begin{eqnarray}
\lefteqn{\varphi(v(\bm{\mu} + \epsilon \bm{\delta}_{\bm{\mu}}))}\nonumber\\
 & = &
\varphi(v(\bm{\mu})) + (v(\bm{\mu} + \epsilon \bm{\delta}_{\bm{\mu}}) -
v(\bm{\mu}))^\top \nabla\varphi(v(\bm{\mu}))\nonumber\\
 & & + \frac{1}{2} (v(\bm{\mu} + \epsilon \bm{\delta}_{\bm{\mu}}) -
v(\bm{\mu}))^\top \mathrm{H}_1 (v(\bm{\mu} + \epsilon \bm{\delta}_{\bm{\mu}}) -
v(\bm{\mu}))\nonumber\\
 & = & \varphi(v(\bm{\mu})) + \epsilon \bm{\delta}_{\bm{\mu}}^\top
 J^\top_v \nabla\varphi(v(\bm{\mu}))\nonumber\\
 & & +\frac{\epsilon^2}{2} \bm{\delta}_{\bm{\mu}}^\top
 J^\top_v \mathrm{H}_1 J_v
 \bm{\delta}_{\bm{\mu}}\label{tayle3a}\\
 & = & \varphi(v(\bm{\mu})) + \epsilon \bm{\delta}_{\bm{\mu}}^\top
 J^\top_v u(\bm{\mu}) +\frac{\epsilon^2}{2} \bm{\delta}_{\bm{\mu}}^\top
 J^\top_v\mathrm{H}_1  J_v
 \bm{\delta}_{\bm{\mu}}\:\:,\label{tayle3b}
\end{eqnarray}
}
for some value $\mathrm{H}_1$ of the Hessian of $\mathrm{H} \varphi$
in between $v(\bm{\mu})$ and $v(\bm{\mu} + \epsilon
\bm{\delta}_{\bm{\mu}})$. We have made use of (\ref{tayle1}) in
(\ref{tayle3a}) and the fact that $\nabla\varphi \circ v = u$ in (\ref{tayle3b}).

According to Theorem \ref{zigenphi}, the population minimizers of
${\mathcal{S}}$ to which we add $\bm{\mu}$ with a weight of $\epsilon$
satisfy $\bm{\delta}^+_{v, \epsilon} \bot \bm{z}^+_u$, with
$\bm{\delta}^+_{v, \epsilon} \defeq \bm{x}^+_{v, \epsilon} -
\bm{\mu}^+_{v, *}$ and:
\begin{eqnarray}
\bm{x}^+_{v, \epsilon} \hspace{-0.3cm} & \defeq & (1-\epsilon) \bm{x}^+_v + \epsilon \left[
\begin{array}{c}
v(\bm{x}_*)\\
\varphi(v(\bm{x}_*))
\end{array}
\right] \nonumber\\
 & & = (1-\epsilon) \bm{x}_v^+ + \epsilon
\bm{x}_{v,*}^+\label{defxpe}\:\:,\\
\bm{\mu}^+_{v,*} \hspace{-0.3cm} & \defeq & \left[
\begin{array}{c}
v(\bm{\mu} + \epsilon \bm{\delta}_{\bm{\mu}})\\
\varphi(v(\bm{\mu} + \epsilon \bm{\delta}_{\bm{\mu}}))
\end{array}
\right] = \bm{\mu}_v^+ + \epsilon
 \bm{\delta}^+_{\bm{\mu}}\label{defmups}\:\:,\\
\bm{\delta}^+_{\bm{\mu}} & \defeq & \left[
\begin{array}{c}
J_v \bm{\delta}_{\bm{\mu}}\\
\bm{\delta}_{\bm{\mu}}^\top
 J^\top_v u(\bm{\mu})+\frac{\epsilon}{2} \bm{\delta}_{\bm{\mu}}^\top
 J^\top_v\mathrm{H}_1  J_v
 \bm{\delta}_{\bm{\mu}}
\end{array}
\right] \label{deffmu}\:\:.
\end{eqnarray}
(See (\ref{defdeltav}) for the definitions of
$\bm{x}_v^+, \bm{\mu}_v^+$) In (\ref{deffmu}), we have used
(\ref{tayle1}) and (\ref{tayle3b}). We obtain:
\begin{eqnarray}
\lefteqn{0 = (\bm{\delta}^+_{v, \epsilon})^\top \bm{z}^+_u}\nonumber\\
 & = & (1-\epsilon)
(\bm{x}_v^+ - \bm{\mu}^+_{v,*})^\top \bm{z}_u^+ + \epsilon (\bm{x}^+_{v,*} - \bm{\mu}^+_{v,*})^\top \bm{z}^+_u \nonumber\\
 & = & (1-\epsilon) (\bm{\delta}_v^+)^\top \bm{z}^+_u -
 \epsilon(1-\epsilon) (\bm{\delta}^+_{\bm{\mu}})^\top \bm{z}^+_u \nonumber \\
 & & + \epsilon (\bm{x}^+_{v,*} - \bm{\mu}^+_v)^\top \bm{z}^+_u - \epsilon^2
 (\bm{\delta}^+_{\bm{\mu}})^\top \bm{z}^+_u\nonumber\\ 
 & = & \epsilon ( (\bm{x}^+_{v,*} - \bm{\mu}^+_v)^\top \bm{z}^+_u -
 (\bm{\delta}^+_{\bm{\mu}})^\top \bm{z}^+_u)\:\:. \label{simp1}
\end{eqnarray}
In (\ref{simp1}), we have used the fact that $(\bm{\delta}^+_v)^\top \bm{z}^+_u = 0$ since $\bm{\mu}$ is a right population minimizer for
the $v$-conformal divergence on ${\mathcal{S}}$. Since $\epsilon \neq
0$, we obtain from (\ref{simp1}) the equation which is central to the proof of Lemma \ref{robr}:
\begin{eqnarray}
(\bm{\delta}^+_{\bm{\mu}})^\top \bm{z}^+_u & = & (\bm{x}^+_{v,*} -
\bm{\mu}^+_v)^\top \bm{z}^+_u\:\:.\label{pdef}
\end{eqnarray}
We now work on this equation. 
Looking at $\bm{z}^+_u$ in (\ref{defzv}), we observe that
$\|\bm{z}^+_u\|_2 \neq 0$, $\forall \bm{z} \neq \bm{0}$. To see this, for
$\|\bm{z}^+_u\|_2 = 0$, we first need $\nabla f(u(\bm{\mu}))^\top
\bm{z} = 0$, and this implies $f(u(\bm{\mu})) \times \bm{z} = 0$, which implies
$\bm{z} = \bm{0}$. So, assuming that we pick $\bm{z} \neq \bm{0}$, we can simplify (\ref{pdef}) and
obtain 
\begin{eqnarray}
\|\bm{\delta}^+_{\bm{\mu}}\|_2^2 & \leq & \frac{1}{\cos^2(\bm{\delta}^+_{\bm{\mu}}, \bm{z}^+_u)} \times \|\bm{x}^+_{v,*} -
\bm{\mu}^+_v\|_2^2\:\:, \forall
\bm{z} \neq \bm{0}\:\:. \label{fzeq}
\end{eqnarray}
We now find a $\bm{z} \neq \bm{0}$ with which the inverse square
cosine is small. To find this $\bm{z}$, we use
this intermediate result, (\textbf{P}):
\begin{enumerate}
\item [(\textbf{P})] let $\bm{a}, \bm{b}, \bm{c} \in {\mathbb{R}}^d$.  The solution $\bm{z}$ to the equation
$\bm{z} = \bm{a} + \bm{c}^\top \bm{z} \times \bm{b}$ is:
{\footnotesize
\begin{eqnarray*}
\bm{z} & = & \left\{
\begin{array}{ccl}
\bm{a} + \frac{\bm{c}^\top \bm{a}}{1-\bm{c}^\top \bm{b}}\times \bm{b}
& \mbox{ if } & \bm{c} \neq \bm{0} \wedge \bm{b}
\neq \|\bm{c}\|_2^{-2}\bm{c}\\
\bm{a} & \mbox{ if } & \bm{c} = \bm{0} 
\end{array}\right.\:\:.
\end{eqnarray*}
}
\end{enumerate}
We use (\textbf{P}) with the following vectors:
\begin{eqnarray*}
\bm{a} & \defeq & \frac{1}{f(u(\bm{\mu}))}\times J_v
\bm{\delta}_{\bm{\mu}}\:\:,\\
\bm{b} & \defeq & -\frac{1}{f(u(\bm{\mu}))}\times u(\bm{\mu})\:\:;\\
\bm{c} & \defeq & \nabla f(u(\bm{\mu}))\:\:.
\end{eqnarray*}
When $\|\nabla f(u(\bm{\mu}))\|_2 \neq 0$, we need to check if it can
be possible that $\bm{b}
= \|\bm{c}\|_2^{-2}\bm{c}$. For this to happen from the definitions of
$\bm{b}$ and $\bm{c}$, we need $\nabla
f= \alpha \mathrm{Id}$ for some $\alpha\neq 0$, implying $f(\bm{z}) =
\alpha\|\bm{z}\|^2_2/2$, which cannot be the case from assumptions (i) and (ii).

Let us analyze the two cases of (\textbf{P}), starting from the case
$\bm{c} \neq \bm{0}$. We have:
{\footnotesize
\begin{eqnarray*}
\bm{z} & = & \frac{1}{f(u(\bm{\mu}))}\nonumber\\
 & & \times\left(  J_v
\bm{\delta}_{\bm{\mu}} - \frac{\nabla f(u(\bm{\mu}))^\top J_v
\bm{\delta}_{\bm{\mu}}}{f(u(\bm{\mu}))+\nabla f(u(\bm{\mu}))^\top u(\bm{\mu})}\times u(\bm{\mu})\right)\:\:,
\end{eqnarray*}
}
which yields:
\begin{eqnarray}
\bm{z}^+_u & \defeq & \left[
\begin{array}{c}
J_v \bm{\delta}_{\bm{\mu}}\\
-\frac{\nabla
  f(u(\bm{\mu}))^\top  J_v
\bm{\delta}_{\bm{\mu}}}{f(u(\bm{\mu}))+\nabla f(u(\bm{\mu}))^\top u(\bm{\mu})}
\end{array}
\right] \:\:.\label{defzv2}
\end{eqnarray}
Let us define:
\begin{eqnarray*}
x & \defeq & \frac{1}{\|J_v \bm{\delta}_{\bm{\mu}}\|_2}\times \frac{\nabla
  f(u(\bm{\mu}))^\top  J_v
\bm{\delta}_{\bm{\mu}}}{f(u(\bm{\mu}))+\nabla f(u(\bm{\mu}))^\top
u(\bm{\mu})}\:\:,\\
y & \defeq & \frac{1}{\|J_v \bm{\delta}_{\bm{\mu}}\|_2}\times \left(\bm{\delta}_{\bm{\mu}}^\top
 J^\top_v u(\bm{\mu})+\frac{\epsilon}{2} \bm{\delta}_{\bm{\mu}}^\top
 J^\top_v\mathrm{H}_1  J_v
 \bm{\delta}_{\bm{\mu}} \right)\:\:.
\end{eqnarray*}
Plugging the expression of $\bm{z}^+_u$ in
$1/\cos^2(\bm{\delta}^+_{\bm{\mu}}, \bm{z}^+_u)$, we obtain after simplification:
\begin{eqnarray}
\frac{1}{\cos^2(\bm{\delta}^+_{\bm{\mu}}, \bm{z}^+_u)} & = & 1+
\frac{(x+y)^2}{1+x^2+y^2+x^2y^2} \nonumber\\
 & \leq & 2 \:\:.\label{pcos}
\end{eqnarray}
We obtain the following upperbound on
$\|\bm{\delta}^+_{\bm{\mu}}\|_2^2$ refined from (\ref{fzeq}):
\begin{eqnarray}
\|\bm{\delta}^+_{\bm{\mu}}\|_2^2 & \leq & 2 \|\bm{x}^+_{v,*} -
\bm{\mu}^+_v\|_2^2\:\:. \label{fzeq2}
\end{eqnarray}
Handling the second case for (\textbf{P}) is simpler, as since
$\bm{c} = \nabla f(u(\bm{\mu})) = \bm{0}$, picking $\bm{z} =
(1/f(u(\bm{\mu}))) \times J_v
\bm{\delta}_{\bm{\mu}}$ yields (\ref{pcos}) with $x=0$, and
(\ref{fzeq2}) is still valid. To finish up with the proof, we first
upperbound the right-hand side of (\ref{fzeq2}), as:
\begin{eqnarray}
\lefteqn{\|\bm{x}^+_{v,*} -
\bm{\mu}^+_v\|_2^2}\nonumber\\
 & = & \|v(\bm{x}_*) -
v(\bm{\mu})\|_2^2 +
(\varphi(v(\bm{x}_*))-\varphi(v(\bm{\mu})))^2\nonumber\\
 & \leq & \|v(\bm{x}_*) -
v(\bm{\mu})\|_2^2(1+L^2)\label{fineq1}\\
 &  & =   (\bm{x}_*-\bm{\mu})^\top (J'_v)^\top J'_v
 (\bm{x}_*-\bm{\mu}) (1+L^2) \label{fineq2}\\
 & \leq & \lambda_{\mathrm{\max}} (1+L^2) \|\bm{x}_*-\bm{\mu}\|_2^2 \label{fineq3}\:\:.
\end{eqnarray}
Ineq. (\ref{fineq1}) follows from Definition \ref{dwrob}, (\ref{fineq2})
follows from (\ref{tayle1b}), and (\ref{fineq3}) is obtained using a
finite upperbound $\lambda_{\mathrm{\max}}$ for the
eigenvalues 
of $J_v^\top
J_v$. We then lowerbound the left-hand side of
(\ref{fzeq2}) as:
\begin{eqnarray}
\|\bm{\delta}^+_{\bm{\mu}}\|_2^2 & = & (\bm{\delta}_{\bm{\mu}}^\top
 J^\top_v J_v
 \bm{\delta}_{\bm{\mu}}) \nonumber\\
 & &+ (\bm{\delta}_{\bm{\mu}}^\top
 J^\top_v u(\bm{\mu})+\frac{\epsilon}{2} \bm{\delta}_{\bm{\mu}}^\top
 J^\top_v\mathrm{H}_1  J_v
 \bm{\delta}_{\bm{\mu}})^2\nonumber\\
 & \geq & \bm{\delta}_{\bm{\mu}}^\top
 J^\top_v J_v
 \bm{\delta}_{\bm{\mu}}\nonumber\\
 & \geq & \lambda_{\mathrm{\min}} \|\bm{\delta}_{\bm{\mu}}\|_2^2 \:\:,\label{fineq4}
\end{eqnarray}
for some non-zero lowerbound $\lambda_{\mathrm{\min}}$ of the eigenvalues of $J_v^\top
J_v$. We then obtain from (\ref{fzeq2}),
(\ref{fineq3}) and (\ref{fineq4}):
\begin{eqnarray*}
\|\bm{\delta}_{\bm{\mu}}\|_2^2 & \leq & \frac{2
  \lambda_{\mathrm{\max}}(1+L^2)}{\lambda_{\mathrm{\min}}}
\times \|\bm{x}_*-\bm{\mu}\|_2^2\:\:.\\
 & \leq & 2\lambda_v (1+L^2) \|\bm{x}_*-\bm{\mu}\|_2^2\:\:,
\end{eqnarray*}
using the definition of $\lambda_v$. We obtain
$\|\bm{\delta}_{\bm{\mu}}\|_2 \leq \sqrt{\lambda_v} \ell(L)
\|\bm{x}_*-\bm{\mu}\|_2$ for $\ell(L) \defeq \sqrt{2}(1+L) \geq
\sqrt{2(1+L^2)}$, as claimed.

\section{Proof of Corollary \ref{csrpm}}\label{sec-csrpm}

(Of point 1)) ($\Rightarrow$) $\bm{\mu} = \overline{\bm{x}}$ zeroes the right-hand side of
(\ref{eqfon2}), which, since $\overline{\varphi} \neq
\varphi(\bm{\mu})$, implies $\mathrm{D}_{\bm{z}} g(\bm{\mu}) = 0$, for
any $\bm{z} \neq \bm{0}$, and so $g(\bm{\mu})$ is
constant. ($\Leftarrow$) is a property of ordinary Bregman divergences.

(Of point 2))  ($\Rightarrow$) This time, $\varphi(\bm{\mu}) = \overline{\varphi}$ zeroes the
left-hand side of
(\ref{eqfon2}). Because $\varphi$ is strictly convex,
$\overline{\bm{x}} \neq \bm{\mu}$ and so (\ref{eqfon2}) brings $\mathrm{D}_{\bm{z}} (g(\bm{\mu})\nabla \varphi(\bm{\mu})) =
\bm{0}, \forall \bm{z}$, and so $\nabla \varphi(\bm{\mu}) =
(K/g(\bm{\mu})) \bm{u}$ for some constants $K$ and vector $\bm{u}$. The hessian coordinates
are $\mathrm{H}_{ij}\varphi(\bm{\mu}) = -(K u^i/g^2(\bm{\mu}))\partial
g(\bm{\mu})/\partial \mu_j$. Because the Hessian is symmetric, we obtain $u^j\partial
g(\bm{\mu})/\partial \mu_i = u^i\partial
g(\bm{\mu})/\partial \mu_j$, and so $g(\bm{\mu})$ can be expressed as
$g(\bm{\mu}) = h( \bm{u}^\top \bm{\mu})$ for some function $h :
{\mathbb{R}} \rightarrow {\mathbb{R}}$. We
get $\bm{x}^\top \mathrm{H}\varphi \bm{x} = -(Kh'(\bm{u}^\top
\bm{\mu})/h^2(\bm{u}^\top \bm{\mu})) \|\bm{u} \bullet \bm{x}\|_2^2$,
with ''$\bullet$" denoting Hadamard product, and since we want
$\bm{x}^\top \mathrm{H}\varphi \bm{x} > 0$ when $\bm{x} \neq \bm{0}$, $h'$ has to be of a different sign than $K$.
 ($\Leftarrow$) is immediate.

\section{Proof of Lemma \ref{lsd}}\label{sec-lsd}

First, since
$\varphi' = v \circ u^{-1}$, we get 
\begin{eqnarray}
\varphi'' & = & (v' \circ
u^{-1})/(u' \circ u^{-1})\:\:,\label{defphi2}
\end{eqnarray}
and so \textbf{(ii)} would be a consequence of \textbf{(iii)}. Let us compute the equality
of partial
derivatives in $x$ of (\ref{eq00}):
\begin{eqnarray*}
\lefteqn{g(u(y)) v'(x) (u(x)-u(y))}\\
& = & u'(x)g'(u(x))D_\varphi(v(y):v(x)) \\
 & & -
  g(u(x))u'(x)(v(y)-v(x)) \:\:.
\end{eqnarray*}
We then compute the partial derivatives in $y$ and reorganize:
{\footnotesize
\begin{eqnarray}
\lefteqn{u'(x) v'(y) [g'(u(x))(u(y)-u(x)) - g(u(x))]} \nonumber\\
& = & u'(y) v'(x) [g'(u(y))(u(x)-u(y)) - g(u(y))] \label{pqr1}\:\:.
\end{eqnarray}
}
We now use (\ref{defphi2}), letting $z\defeq u(x)$ and $r \defeq
u(y)$, so that (\ref{pqr1}) becomes:
\begin{eqnarray}
\lefteqn{\varphi''(r)[g'(z)(r-z) - g(z)]}\nonumber\\
 & = & \varphi''(z)[g'(r)(z-r) - g(r)]\:\:.\label{pgg}
\end{eqnarray}
Let us fix temporarily $z$ to a constant, so that (\ref{pgg}) is a function of
$r$, and thus reads:
{\footnotesize
\begin{eqnarray}
g'(z)(r-z) - g(z) & = & \varphi''(z)\times \vartheta(r,z)\:\:, \label{def0}\\
\vartheta(r,z) & \defeq & \frac{g'(r)(z-r) - g(r)
}{\varphi''(r)}\nonumber\\
 & = & \frac{g'(r)}{\varphi''(r)} z - \frac{g(r) + rg'(r)}{\varphi''(r)}\:\:. \label{def1}
\end{eqnarray}
}
Because the left hand-side of
(\ref{def0}) is linear in $r$, so has to be $\vartheta$ in (\ref{def1}),
and so we get:
\begin{eqnarray}
\frac{g'(r)}{\varphi''(r)} & = & ar + b\:\:, \label{gg1}\\
\frac{g(r) + rg'(r)}{\varphi''(r)} & = & cr + d\:\:, \label{gg2}
\end{eqnarray}
for some $a,b,c,d \in {\mathbb{R}}$ that are constant since $z$ is fixed; our objective is to
prove that all but $d$ are zero, so let us proceed by
assuming that all are non zero. Substituting $g'(r)$ from
(\ref{gg1}) in (\ref{gg2}) yields:
\begin{eqnarray}
\varphi''(r) & = & \frac{g(r)}{f_1(r)} \:\:,\\
f_1(r) & \defeq & -ar^2 + (c-b)r+ d\:\:.
\end{eqnarray}
Since $a\neq 0$, $f_1$ is the equation of a parabola.
Using (\ref{gg1}), we see that $g$ is solution of the following
homogeneous differential equation: 
\begin{eqnarray}
(ar+b)g(r) - f_1(r) g'(r) & = & 0\:\:,\label{eqdi}
\end{eqnarray}
whose solution is found to be, for any constant $K_5$:
\begin{eqnarray}
\lefteqn{g(r)}\nonumber\\
 & = & K_5 \exp\left(\int{\frac{ar+b}{f_1(r)}}\right)\nonumber\\
 & = & \frac{K_5}{\sqrt{-f_1(r)}} \left(\sqrt{\frac{\sqrt{K_6} +
     f_2(r)}{\sqrt{K_6} -
     f_2(r)}}\right)^{\frac{f_2(c/a)}{\sqrt{K_6}}} \:\:;\label{fgg}\\
 K_6 & \defeq &  4 ad + (b-c)^2\:\:;\nonumber\\
f_2(r) & \defeq & 2ar+(b-c)\:\:.\nonumber
\end{eqnarray}
Eq. (\ref{fgg}) implies $K_6 \geq 0$. For $g$ as in (\ref{fgg}) to exist, we have two more constraints to
meet: \textbf{(a)} $-f_1(r) > 0$ and \textbf{(b)} $(\sqrt{K_6} +
     f_2(r)) / (\sqrt{K_6} -
     f_2(r)) \geq 0$. We distinguish two cases:
\begin{itemize}
\item[] ($a>0$) To meet constraint \textbf{(a)}, we need $r >
((c-b) + \sqrt{K_6})/(2a)$ or $r <
((c-b) - \sqrt{K_6})/(2a)$. In both cases, constraint \textbf{(b)} is
violated as respectively the denominator or the numerator (only) of the
fraction is strictly negative.
\item[] ($a<0$) To meet constraint \textbf{(a)}, we need $((c-b)
- \sqrt{K_6})/(2a) < r < ((c-b)
+ \sqrt{K_6})/(2a)$. Again, constraint \textbf{(b)} is violated.
\end{itemize}
We end up with the conclusion that $a=0$ so that $f_1$ is
linear. Assume now that $c\neq b$. The new solution to (\ref{eqdi}) is:
\begin{eqnarray*}
g(r) & = & K_5 \left((c-b)r+d\right)^{\frac{b}{c-b}}\:\:,
\end{eqnarray*}
leading through (\ref{gg1}) to:
\begin{eqnarray}
\varphi''(r) & = & K_5 \left((c-b)r+d\right)^{\frac{b}{c-b}-1}\:\:.
\end{eqnarray}
This enforces $K_5>0$, but $\varphi''(-d/(c-b)) = 0$,
which is not possible as $\varphi$ must be strictly convex. Hence
$a=0$ and $c=b$, so that $f_1(r) = d = f_1$ is a constant. 

To finish up the proof, we
consider the assumption
$b\neq 0$. The new solution to (\ref{eqdi}) is:
\begin{eqnarray}
g(r) & = & K_5 \exp(br/d)\:\:,\label{ggg1}
\end{eqnarray}
leading through (\ref{gg1}) to:
\begin{eqnarray}
\varphi''(r) & = & \frac{K_5}{d} \exp(br/d)\:\:,\label{ggg2}
\end{eqnarray}
enforcing this time $K_5/d > 0$. 

Now, let us start back from (\ref{pgg}), considering
$b,c,d$ functions of $z$. We simplify (\ref{pgg}) using (\ref{gg1}),
(\ref{gg2}) and the expressions of $g$ and $\varphi''$ in (\ref{ggg1}) and (\ref{ggg2}), and obtain $b(r)r-(c(r)z+d(r)) = b(z)z - (c(z)r + d(z))$,
that is, since $c=b$:
\begin{eqnarray}
b(r)(r-z) - d(r) & = & b(z)(z-r) - d(z)\:\:,\label{pg1}
\end{eqnarray}
or, equivalently, for $z\neq r$,
\begin{eqnarray}
\frac{d(z) - d(r)}{z-r} & = & b(z) + b(r)\:\:.
\end{eqnarray}
This shows that $d$ is derivable, and its derivative satisfies $d'(z)
= 2b(z)$, and so:
\begin{eqnarray}
d(z) & = & 2\int b(z) + K_7\:\:, \label{defd}
\end{eqnarray}
for any constant $K_7$. We put this expression in (\ref{pg1}), differentiate in $r$ and
obtain $b'(r)(r-z)+b(r)-2b(r) = -b(z)$, that is, after reordering,
$b(z) = b(r) + (z-r)b'(r)$, and so:
\begin{eqnarray}
b(z) & = & K_8z+K_9 \:\:,
\end{eqnarray} 
for any
constants $K_8$ and $K_9$. Plugging this in (\ref{pg1}) using
(\ref{defd}) yields the identity, valid for any $z$ and $r$: $(
K_8r+K_9)(r-z) - (K_8r^2/2+K_9r+K_2) = ( K_8z+K_9)(z-r) -
(K_8z^2/2+K_9z+K_2)$. Its simplification yields:
\begin{eqnarray}
K_8(z-r)(z+r) & = & 2 K_9 (z-r), \forall z, r\:\:.
\end{eqnarray}
This implies $K_8 = K_9 = 0$, and finally the solutions to (\ref{gg1})
and (\ref{gg2}) are $a=b=c=0$ and $d = K_7$, constant. We obtain $g(r)$
constant as in \textbf{(i)} through (\ref{gg1}) and $\varphi''$ constant through (\ref{gg2})
--- and the expression of $\varphi$ as in \textbf{(ii)} ---. Finally,
since $\varphi' = v\circ u^{-1}$, it comes $v\circ u^{-1} = \ell(x)$,
and so $v = \ell(u)$, as claimed.

\end{document}